
\documentclass{amsart}

\newtheorem{theorem}{Theorem}[section]
\newtheorem{corollary}[theorem]{Corollary}
\newtheorem{lemma}[theorem]{Lemma}

\theoremstyle{definition}
\newtheorem{definition}[theorem]{Definition}

\numberwithin{equation}{section}

\DeclareMathOperator{\len}{len}

\begin{document}

\title{Binary Proportional Pairing Functions}

\author{Matthew P. Szudzik}

\date{November 10, 2018}

\begin{abstract}
A pairing function for the non-negative integers is said to be \emph{binary perfect} if the binary representation of the output is of length $2k$ or less whenever each input has length $k$ or less.  Pairing functions with square shells, such as the Rosenberg-Strong pairing function, are binary perfect.  Many well-known discrete space-filling curves, including the discrete Hilbert curve, are also binary perfect.  The concept of a \emph{binary proportional} pairing function generalizes the concept of a binary perfect pairing function.  Binary proportional pairing functions may be useful in applications where a pairing function is used, and where the function's inputs have lengths differing by a fixed proportion.  In this article, a general technique for constructing a pairing function from any non-decreasing unbounded function is described.  This technique is used to construct a binary proportional pairing function and its inverse.  
\end{abstract}

\maketitle

\section{Introduction}

A bijection from the set $\mathbb{N}^2$ of ordered pairs of non-negative integers to the set $\mathbb{N}$ of non-negative integers is said to be a \emph{pairing function}.  Pairing functions play an important role in computability theory~\cite{Rogers1967}, and have practical applications in computer science~\cite{Rosenberg2003}.  More generally, we say that a bijection from $\mathbb{N}^d$ to $\mathbb{N}$, where $d$ is a positive integer, is an \emph{$d$-tupling function}.  The concept of a $d$-tupling function generalizes the concept of a pairing function to higher dimensions.

For each integer $n>1$ and each non-negative integer $x$, let $\len_n(x)$ denote the number of digits in the base-$n$ representation of $x$.  We say that $\len_n(x)$ is the base-$n$ \emph{length} of $x$.  Formally,
\begin{equation*}
\len_n(x)=\bigl\lceil\,\log_n(x+1)\bigr\rceil,
\end{equation*}
where $\lceil t\rceil$ denotes the ceiling of $t$ for each real number $t$.  By this definition,
\begin{equation}\label{len-identity}
\len_n(x)\leq k\;\quad\text{if and only if}\;\quad x<n^k,
\end{equation}
for all integers $n>1$, and all non-negative integers $x$ and $k$.
%
%

Now, when a $d$-tupling function $f$ is implemented on a computer, integer overflow can be avoided by paying careful attention to the manner in which
%
%
$\len_2\bigl(f(x_1,x_2,\ldots,\linebreak[0]x_d)\bigr)$ depends on $\len_2(x_1)$, $\len_2(x_2)$, \ldots, $\len_2(x_d)$.  With this in mind, we make the following definition.

\begin{definition}\label{perfect}
Let $n$ be any integer greater than $1$.  A $d$-tupling function $f$ is said to be \emph{base-$n$ perfect} if and only if, for all non-negative integers $x_1$, $x_2$, \ldots, $x_d$ and $k$,
\begin{equation}\label{len-x-leq-k}
\len_n(x_1)\leq k\;\And\;\len_n(x_2)\leq k\;\And\;\cdots\;\And\;\len_n(x_d)\leq k
\end{equation}
implies
\begin{equation*}
\len_n\bigl(f(x_1,x_2,\ldots,x_d)\bigr)\leq dk.
\end{equation*}
In the special case where $n=2$, any such function is said to be \emph{binary perfect}.
\end{definition}

An alternate way to characterize the base-$n$ perfect $d$-tupling functions is provided by the following corollary.

\begin{corollary}\label{alt-perfect}
Let $n$ be any integer greater than $1$.  A $d$-tupling function $f$ is base-$n$ perfect if and only if, for all non-negative integers $x_1$, $x_2$, \ldots, $x_d$,
\begin{equation}\label{alt-perfect-inequality}
\len_n\bigl(f(x_1,x_2,\ldots,x_d)\bigr)\leq d\max\bigl(\len_n(x_1),\len_n(x_2),\ldots,\len_n(x_d)\bigr).
\end{equation}
\end{corollary}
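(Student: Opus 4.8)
The plan is to prove the two directions of the biconditional separately, in each case exploiting the fact that for non-negative integers $x_1,\ldots,x_d$ the quantity $m=\max\bigl(\len_n(x_1),\len_n(x_2),\ldots,\len_n(x_d)\bigr)$ is itself a non-negative integer, and that $m$ is precisely the least value of $k$ for which the hypothesis \eqref{len-x-leq-k} is satisfied, since \eqref{len-x-leq-k} holds exactly when $k\geq m$.

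For the ``only if'' direction, suppose $f$ is base-$n$ perfect, and let $x_1,x_2,\ldots,x_d$ be arbitrary non-negative integers. Take $k=m$. Since $\len_n(x_i)\leq m$ for each $i$, the hypothesis \eqref{len-x-leq-k} holds with this choice of $k$, so Definition \ref{perfect} yields $\len_n\bigl(f(x_1,x_2,\ldots,x_d)\bigr)\leq dm$, which is exactly the inequality \eqref{alt-perfect-inequality}.

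For the ``if'' direction, suppose \eqref{alt-perfect-inequality} holds for all non-negative integers $x_1,x_2,\ldots,x_d$, and let $x_1,x_2,\ldots,x_d$ and $k$ be non-negative integers satisfying \eqref{len-x-leq-k}. Then $m\leq k$, and combining this with \eqref{alt-perfect-inequality} gives $\len_n\bigl(f(x_1,x_2,\ldots,x_d)\bigr)\leq dm\leq dk$, which is what Definition \ref{perfect} requires.

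The argument is short, and the only point that deserves attention is the observation used at the outset: one must confirm that $\len_n$ always returns a non-negative integer, so that $m$ is a legitimate value of the variable $k$, which ranges over non-negative integers in Definition \ref{perfect}; this is immediate from the formula $\len_n(x)=\bigl\lceil\log_n(x+1)\bigr\rceil$. Alternatively, one could carry out the whole proof at the level of the equivalence \eqref{len-identity}, rewriting each length inequality as the corresponding strict inequality among powers of $n$ and taking $m$ to be the least exponent dominating all the $x_i$; I anticipate no genuine obstacle with either formulation.
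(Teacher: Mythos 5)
Your proof is correct and follows essentially the same route as the paper: both directions instantiate $k=\max\bigl(\len_n(x_1),\ldots,\len_n(x_d)\bigr)$ for the ``only if'' part and use $\max(\cdots)\leq k$ for the ``if'' part. The extra remark confirming that the maximum of the lengths is a legitimate value of $k$ is harmless but not needed beyond what the paper already takes for granted.
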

\begin{proof}
Let $f$ be any $d$-tupling function and suppose that $f$ is base-$n$ perfect.  Given any non-negative integers $x_1$, $x_2$, \ldots, $x_d$, let
%
%
$k=\max\bigl(\len_n(x_1),\len_n(x_2),\ldots,\linebreak[0]\len_n(x_d)\bigr)$.  It immediately follows from Definition~\ref{perfect} that
\begin{equation*}
\len_n\bigl(f(x_1,x_2,\ldots,x_d)\bigr)\leq d\max\bigl(\len_n(x_1),\len_n(x_2),\ldots,\len_n(x_d)\bigr).
\end{equation*}
Conversely, suppose that inequality~\eqref{alt-perfect-inequality} holds for all non-negative integers $x_1$, $x_2$, \ldots, $x_d$.  Notice that for all non-negative integers $x_1$, $x_2$, \ldots, $x_d$ and $k$, if condition~\eqref{len-x-leq-k} holds, then
\begin{equation*}
\max\bigl(\len_n(x_1),\len_n(x_2),\ldots,\len_n(x_d)\bigr)\leq k.
\end{equation*}
Hence, for all non-negative integers $x_1$, $x_2$, \ldots, $x_d$ and $k$, if condition~\eqref{len-x-leq-k} holds, then it follows from inequality~\eqref{alt-perfect-inequality} that
\begin{equation*}
\len_n\bigl(f(x_1,x_2,\ldots,x_d)\bigr)\leq d\max\bigl(\len_n(x_1),\len_n(x_2),\ldots,\len_n(x_d)\bigr)\leq dk.
\end{equation*}
By definition, $f$ is base-$n$ perfect.
\end{proof} 

Commonly-studied examples of base-$n$ perfect pairing functions, for various bases $n$, are discussed in Sections~\ref{square-shells} and~\ref{space-filling}.  But the concept of a base-$n$ perfect pairing function can also be generalized, as follows.

\begin{definition}\label{proportional}
Let $a$ and $b$ be any positive integers, and let $n$ be any integer greater than $1$.  A pairing function $f$ is said to be \emph{base-$n$ proportional with constants of proportionality $a$ and $b$} if and only if, for all non-negative integers $x$, $y$, and $k$,
\begin{equation*}
\Bigl(\len_n(x)\leq ak\;\And\;\len_n(y)\leq bk\Bigr)\quad\text{implies}\quad\len_n\bigl(f(x,y)\bigr)\leq ak+bk.
\end{equation*}
In the special case where $n=2$, any such function is said to be \emph{binary proportional}.
\end{definition}

In Section~\ref{technique} we introduce a general technique for constructing a pairing function from any non-decreasing unbounded function $g\colon\mathbb{N}\to\mathbb{N}$.  This technique is a variation of Rosenberg's technique for constructing a pairing function that favors a specific shape~\cite{Rosenberg1975b}.  Our technique is then used in Section~\ref{derivation} to construct a pairing function $p_{a,b}$ for each pair of positive integers $a$ and $b$.  For every integer $n>1$, this pairing function is base-$n$ proportional with constants of proportionality $a$ and $b$.  In particular, for each $(x,y)\in\mathbb{N}^2$ we define
\begin{equation*}
p_{a,b}(x,y)=\begin{cases}
y\bigl\lfloor\sqrt[b]{y\rule{0pt}{6pt}}\,\bigr\rfloor^a+x &\text{\quad if $\bigl\lfloor\sqrt[b]{y\rule{0pt}{6pt}}\,\bigr\rfloor>\bigl\lfloor\sqrt[a]{x}\,\bigr\rfloor$}\\
x\Bigl(\bigl\lfloor\sqrt[a]{x}\,\bigr\rfloor+1\Bigr)^{\!b}+y &\text{\quad otherwise}\rule{0pt}{20pt}
\end{cases},
\end{equation*}
%
%
where $\lfloor t\rfloor$ denotes the floor of $t$ for each real number $t$.  The function's inverse is given by
\begin{equation*}
p_{a,b}^{-1}(z)=\begin{cases}
\biggl(z\bmod m^a\;,\;\Bigl\lfloor\,\dfrac{z}{m^a}\Bigr\rfloor\biggr) &\text{\quad if $z<m^a(m+1)^b$}\\
\biggl(\Bigl\lfloor\dfrac{z}{(m+1)^b\rule{0pt}{9pt}}\Bigr\rfloor\;,\;z\bmod (m+1)^b\biggr) &\text{\quad otherwise}\rule{0pt}{21pt}
\end{cases},
\end{equation*}
where $m=\bigl\lfloor z^{1/(a+b)}\bigr\rfloor$ for each $z\in\mathbb{N}$.

By composing functions of the form $p_{a,b}$, various sorts of $d$-tupling functions can be constructed.  For example,
\begin{equation*}
p_{2,1}\bigl(p_{1,1}(x_1,x_2),x_3\bigr)
\end{equation*}
is a base-$n$ perfect $3$-tupling function for all integers $n>1$,
\begin{equation*}
p_{3,1}\Bigl(p_{2,1}\bigl(p_{1,1}(x_1,x_2),x_3\bigr),x_4\Bigr)
\end{equation*}
is a base-$n$ perfect $4$-tupling function for all integers $n>1$, and so on.  More practically, if $x_1$ is a $32$-bit unsigned integer,\footnote{
The phrase ``32-bit unsigned integer'', for example, refers to any non-negative integer $x$ such that $\len_2(x)\leq32$.
} if $x_2$ is a $48$-bit unsigned integer, and if $x_3$ is a $64$-bit unsigned integer, then
\begin{equation*}
p_{5,4}\bigl(p_{2,3}(x_1,x_2),x_3\bigr)
\end{equation*}
is a $144$-bit unsigned integer.  This fact follows directly from Definition~\ref{proportional}, since
\begin{multline*}
\Bigl(\len_2(x_1)\leq32=2\cdot16\;\And\;\len_2(x_2)\leq48=3\cdot16\Bigr)\\
\text{implies}\;\quad\len_2\bigl(p_{2,3}(x_1,x_2)\bigr)\leq32+48
\end{multline*}
and
\begin{multline*}
\Bigl(\len_2\bigl(p_{2,3}(x_1,x_2)\bigr)\leq32+48=5\cdot16\;\And\;\len_2(x_3)\leq64=4\cdot16\Bigr)\\
\text{implies}\;\quad\len_2\Bigl(p_{5,4}\bigl(p_{2,3}(x_1,x_2),x_3\bigr)\Bigr)\leq144=(32+48)+64.
\end{multline*}

Now, given any positive integers $a$, $b$, and $c$, if $c>1$ then
\begin{equation*}
p_{a,b}(1,0)=2^b\ne2^{bc}=p_{ac,bc}(1,0).
\end{equation*}
An immediate consequence is that $p_{a,b}$ and $p_{ac,bc}$ are distinct functions if $c>1$.
%
%
Nevertheless, we have the following theorem.

\begin{theorem}
Let $a$ and $b$ be any positive integers, and let $n$ be any integer greater than $1$.  If a base-$n$ proportional pairing function $f$ has constants of proportionality $a$ and $b$, then $f$ also has constants of proportionality $ac$ and $bc$, for all positive integers $c$.
\end{theorem}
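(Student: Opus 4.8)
The plan is to reduce the claim directly to the defining property of $f$ by a change of variable in the quantified integer parameter. Fix positive integers $a$ and $b$, an integer $n>1$, and a positive integer $c$, and suppose that $f$ is base-$n$ proportional with constants of proportionality $a$ and $b$. I must show that for all non-negative integers $x$, $y$, and $k$, the conjunction $\len_n(x)\leq ack$ $\And$ $\len_n(y)\leq bck$ implies $\len_n\bigl(f(x,y)\bigr)\leq ack+bck$.

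First I would observe that $ck$ is a non-negative integer whenever $k$ is, since $c$ is a positive integer. Then I would rewrite the hypotheses as $\len_n(x)\leq a(ck)$ $\And$ $\len_n(y)\leq b(ck)$, and apply Definition~\ref{proportional} with the role of the quantified integer ``$k$'' played by $ck$. This yields $\len_n\bigl(f(x,y)\bigr)\leq a(ck)+b(ck)=ack+bck$, which is exactly the desired conclusion. Since $x$, $y$, $k$, and $c$ were arbitrary, the theorem follows.

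I do not expect any genuine obstacle here: the only point requiring care is that the integer parameter in Definition~\ref{proportional} is universally quantified over all non-negative integers, so substituting the particular value $ck$ is legitimate, while the hypothesis $c\geq 1$ ensures that $ck$ lies in the admissible range. No properties of $f$ beyond its defining inequality, and no facts about $\len_n$ beyond the elementary identity $a(ck)+b(ck)=ack+bck$, are needed; in particular, the bijectivity of $f$ plays no role in the argument.
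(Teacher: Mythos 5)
Your proposal is correct and follows essentially the same route as the paper: both arguments simply instantiate the universally quantified parameter in Definition~\ref{proportional} at the value $ck$ (the paper writes $k=ck^\prime$), using only that $ck$ is a non-negative integer and the identity $a(ck)+b(ck)=(ac)k+(bc)k$. Nothing further is needed.
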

\begin{proof}
Suppose that $f$ is a base-$n$ proportional pairing function with constants of proportionality $a$ and $b$.  Consider any positive integer $c$ and any non-negative integer $k^\prime$.  By Definition~\ref{proportional},
\begin{equation*}
\Bigl(\len_n(x)\leq ak\;\And\;\len_n(y)\leq bk\Bigr)\quad\text{implies}\quad\len_n\bigl(f(x,y)\bigr)\leq ak+bk
\end{equation*}
for all non-negative integers $x$, $y$, and $k$.  But letting $k=ck^\prime$, we have that
\begin{equation*}
\Bigl(\len_n(x)\leq(ac)k^\prime\;\And\;\len_n(y)\leq(bc)k^\prime\Bigr)\quad\text{implies}\quad\len_n\bigl(f(x,y)\bigr)\leq(ac)k^\prime+(bc)k^\prime
\end{equation*}
for all non-negative integers $x$, $y$, and $k^\prime$.  It then follows from Definition~\ref{proportional} that $f$ is base-$n$ proportional with constants of proportionality $ac$ and $bc$.
\end{proof}

If we are only interested in finding a base-$n$ proportional pairing function with constants of proportionality $ac$ and $bc$, where $a$, $b$, and $c$ are positive integers, then according to the theorem, it is sufficient to consider the function $p_{a,b}$.  For this reason, we often restrict our attention to functions of the form $p_{a,b}$ where $\gcd(a,b)=1$.

\section{Pairing Functions With Square Shells}\label{square-shells}

Given any $d$-tupling function $f\colon\mathbb{N}^d\to\mathbb{N}$, a function $s\colon\mathbb{N}^d\to\mathbb{N}$ is said to be a \emph{shell numbering} for $f$ if and only if
\begin{equation}\label{shell-def}
s(\mathbf{x})<s(\mathbf{y})\quad\text{implies}\quad f(\mathbf{x})<f(\mathbf{y})
\end{equation}
for all $\mathbf{x}$ and $\mathbf{y}$ in $\mathbb{N}^d$.  Pairing functions that have $\max(x,y)$ as a shell numbering are said to have \emph{square shells}.  The Rosenberg-Strong pairing function, given by the formula
\begin{equation*}
r_2(x,y)=\bigl(\max(x,y)\bigr)^2+\max(x,y)+x-y,
\end{equation*}
is the most well-known example of a pairing function with square shells.\footnote{
The functions originally described by Rosenberg and Strong~\cite{Rosenberg1972,Rosenberg1974} were bijections from $\mathbb{P}^d$ to $\mathbb{P}$, where $\mathbb{P}$ denotes the set of all positive integers.  For this article, we have translated these functions to the non-negative integers, and reversed the order of the arguments.
}  This function is illustrated in Figure~\ref{rosenberg-strong},
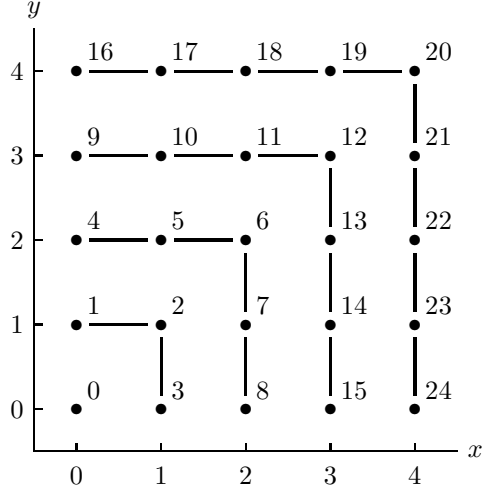
\begin{figure}
\centering
\begin{picture}(179,184)(-25,-29)
%
%
\multiput(0,0)(0,32){5}{\multiput(0,0)(32,0){5}{\circle*{4}}}
%
\thinlines
\put(-16,-16){\line(1,0){160}}
\put(-16,-16){\line(0,1){160}}
%
\put(148,-16){\makebox(0,0)[l]{$x$}}
\put(-16,148){\makebox(0,0)[b]{$y$}}
%
\thinlines
\multiput(0,-16)(32,0){5}{\line(0,1){3}}
\multiput(-16,0)(0,32){5}{\line(1,0){3}}
%
\put(0,-22){\makebox(0,0)[t]{0}}
\put(32,-22){\makebox(0,0)[t]{1}}
\put(64,-22){\makebox(0,0)[t]{2}}
\put(96,-22){\makebox(0,0)[t]{3}}
\put(128,-22){\makebox(0,0)[t]{4}}
%
\put(-20,0){\makebox(0,0)[r]{0}}
\put(-20,32){\makebox(0,0)[r]{1}}
\put(-20,64){\makebox(0,0)[r]{2}}
\put(-20,96){\makebox(0,0)[r]{3}}
\put(-20,128){\makebox(0,0)[r]{4}}
%
\thicklines
\put(5,32){\line(1,0){22}}
\put(32,27){\line(0,-1){22}}
\put(5,64){\line(1,0){22}}
\put(37,64){\line(1,0){22}}
\put(64,59){\line(0,-1){22}}
\put(64,27){\line(0,-1){22}}
\put(5,96){\line(1,0){22}}
\put(37,96){\line(1,0){22}}
\put(69,96){\line(1,0){22}}
\put(96,91){\line(0,-1){22}}
\put(96,59){\line(0,-1){22}}
\put(96,27){\line(0,-1){22}}
\put(5,128){\line(1,0){22}}
\put(37,128){\line(1,0){22}}
\put(69,128){\line(1,0){22}}
\put(101,128){\line(1,0){22}}
\put(128,123){\line(0,-1){22}}
\put(128,91){\line(0,-1){22}}
\put(128,59){\line(0,-1){22}}
\put(128,27){\line(0,-1){22}}
%
\put(4,4){\makebox(0,0)[bl]{0}}
\put(4,36){\makebox(0,0)[bl]{1}}
\put(36,36){\makebox(0,0)[bl]{2}}
\put(36,4){\makebox(0,0)[bl]{3}}
\put(4,68){\makebox(0,0)[bl]{4}}
\put(36,68){\makebox(0,0)[bl]{5}}
\put(68,68){\makebox(0,0)[bl]{6}}
\put(68,36){\makebox(0,0)[bl]{7}}
\put(68,4){\makebox(0,0)[bl]{8}}
\put(4,100){\makebox(0,0)[bl]{9}}
\put(36,100){\makebox(0,0)[bl]{10}}
\put(68,100){\makebox(0,0)[bl]{11}}
\put(100,100){\makebox(0,0)[bl]{12}}
\put(100,68){\makebox(0,0)[bl]{13}}
\put(100,36){\makebox(0,0)[bl]{14}}
\put(100,4){\makebox(0,0)[bl]{15}}
\put(4,132){\makebox(0,0)[bl]{16}}
\put(36,132){\makebox(0,0)[bl]{17}}
\put(68,132){\makebox(0,0)[bl]{18}}
\put(100,132){\makebox(0,0)[bl]{19}}
\put(132,132){\makebox(0,0)[bl]{20}}
\put(132,100){\makebox(0,0)[bl]{21}}
\put(132,68){\makebox(0,0)[bl]{22}}
\put(132,36){\makebox(0,0)[bl]{23}}
\put(132,4){\makebox(0,0)[bl]{24}}
\end{picture}
\caption{The Rosenberg-Strong pairing function $r_2(x,y)$.  To make the sequence of points $r_2^{-1}(0)$, $r_2^{-1}(1)$, $r_2^{-1}(2)$, \ldots\ more visually apparent, line segments have been drawn between some of the points that occur consecutively in the sequence.}
\label{rosenberg-strong}
\end{figure}
and its inverse is given by
\begin{equation*}
r_2^{-1}(z)=\begin{cases}
\bigl(z-m^2,m\bigr) &\text{\quad if $z-m^2<m$}\\
\bigl(m,m^2+2m-z\bigr) &\text{\quad otherwise}\rule{0pt}{15pt}
\end{cases},
\end{equation*}
where $m=\bigl\lfloor\sqrt{z}\,\bigr\rfloor$ for each $z\in\mathbb{N}$.

Rosenberg and Strong also generalized their pairing function to higher dimensions.  In particular, the Rosenberg-Strong $d$-tupling function can be defined recursively so that $r_1(x_1)=x_1$, and so that for each integer $d>1$,
\begin{equation*}
r_d(x_1,\ldots,x_{d-1},x_d)=r_{d-1}(x_1,\ldots,x_{d-1})+m^d+(m-x_d)\bigl((m+1)^{d-1}-m^{d-1}\bigr),
\end{equation*}
where $m=\max(x_1,\ldots,x_{d-1},x_d)$.  The inverse of this function is defined so that $r_1^{-1}(z)=z$, and so that for each integer $d>1$,
\begin{equation}\label{rd-inv}
r_d^{-1}(z)=\Bigl(r_{d-1}^{-1}\bigl(z-m^d-(m-x_d)((m+1)^{d-1}-m^{d-1})\bigr),x_d\Bigr),
\end{equation}
where
\begin{equation*}
x_d=m-\Biggl\lfloor\frac{\max\bigl(0,z-m^d-m^{d-1}\bigr)}{(m+1)^{d-1}-m^{d-1\rule{0pt}{6pt}}}\Biggr\rfloor
\end{equation*}
and  $m=\bigl\lfloor\sqrt[d]{z}\,\bigr\rfloor$.  Note that in equation~\eqref{rd-inv} we use the set-theoretic convention~\cite{Enderton1977} that $(x_1,x_2,x_3,x_4)$ is an abbreviation for $\bigl(\bigl((x_1,x_2),x_3\bigr),x_4\bigr)$, for example.

Generalizing the concept of square shells to higher dimensions, any $d$-tupling function that has $\max(x_1,x_2,\ldots,x_d)$ as a shell numbering is said to have \emph{cubic shells}.  By this definition, every pairing function with square shells also has cubic shells.  Moreover, we have the following theorem~\cite{Szudzik2018}.

\begin{theorem}\label{cubic}
Let $f$ be any $d$-tupling function and let $n$ be any integer greater than $1$.  If $f$ has cubic shells then $f$ is base-$n$ perfect.
\end{theorem}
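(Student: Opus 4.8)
The plan is to convert the entire statement into inequalities by means of the length identity \eqref{len-identity}, and then to exploit the fact that the $f$-image of the union of the first few shells is an initial segment of $\mathbb{N}$. Fix non-negative integers $x_1,\ldots,x_d$ and $k$ with $\len_n(x_i)\le k$ for every $i$. By \eqref{len-identity} this hypothesis is equivalent to $x_i<n^k$ for every $i$, and the desired conclusion $\len_n\bigl(f(x_1,\ldots,x_d)\bigr)\le dk$ is, again by \eqref{len-identity}, equivalent to $f(x_1,\ldots,x_d)<n^{dk}$. So it suffices to show that $x_i<n^k$ for all $i$ implies $f(x_1,\ldots,x_d)<n^{dk}$.

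Next I would set $m=\max(x_1,\ldots,x_d)$ and consider the finite set $S=\{\mathbf{v}\in\mathbb{N}^d : \max(v_1,\ldots,v_d)\le m\}=\{0,1,\ldots,m\}^d$, which has exactly $(m+1)^d$ elements and contains $(x_1,\ldots,x_d)$. The key observation is that for every $\mathbf{v}\in S$ and every $\mathbf{w}\in\mathbb{N}^d\setminus S$ one has $\max(v_1,\ldots,v_d)\le m<\max(w_1,\ldots,w_d)$, so the cubic shell numbering satisfies $s(\mathbf{v})<s(\mathbf{w})$, and hence $f(\mathbf{v})<f(\mathbf{w})$ by \eqref{shell-def}. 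Thus every value of $f$ on $S$ is strictly smaller than every value of $f$ on $\mathbb{N}^d\setminus S$.

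I would then invoke bijectivity of $f$. Since $f$ is injective, $f(S)$ is a subset of $\mathbb{N}$ of size $(m+1)^d$, and by the previous paragraph every natural number not in $f(S)$ exceeds every element of $f(S)$; the only subset of $\mathbb{N}$ with these two properties is the initial segment $\{0,1,\ldots,(m+1)^d-1\}$. (If some $j\notin f(S)$ were smaller than some element of $f(S)$, then $j$ could not be a value of $f$ on $\mathbb{N}^d\setminus S$ either, contradicting surjectivity.) In particular $f(x_1,\ldots,x_d)<(m+1)^d$.

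Finally, $x_i<n^k$ for all $i$ gives $m<n^k$, hence $m+1\le n^k$ and therefore $(m+1)^d\le n^{dk}$, so that $f(x_1,\ldots,x_d)<(m+1)^d\le n^{dk}$, as required. The only step demanding any thought is the identification of $f(S)$ with an initial segment of $\mathbb{N}$, and that is a routine consequence of $f$ being a bijection together with the defining property \eqref{shell-def} of a shell numbering; the remaining steps are bookkeeping with \eqref{len-identity}.
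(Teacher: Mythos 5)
Your proof is correct. The paper does not actually prove Theorem~\ref{cubic} in this document---it cites the result to an external reference---but the machinery it develops in Section~\ref{square-shells} encapsulates exactly your argument: taking $s(\mathbf{x})=\max(x_1,\ldots,x_d)$, your set $S$ is the paper's $U_s^{<m+1}$, with $\bigl\lvert U_s^{<m+1}\bigr\rvert=(m+1)^d$, and Corollary~\ref{upper-shell-numbering} then gives $f(\mathbf{x})<(m+1)^d$ at once. Your third paragraph, identifying $f(S)$ with the initial segment $\{0,1,\ldots,(m+1)^d-1\}$ via injectivity and surjectivity, is an inline re-derivation of the relevant special case of Lemma~\ref{upper-implies-lower}, where the paper's set $J_s^{<k+1}$ plays the role of your $f(S)$. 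So the route is the same shell-counting argument, just made self-contained; what the paper's packaging buys is reusability, since the same corollary also drives the proofs of Theorem~\ref{base-n-shells} and Theorem~\ref{phi-shell-numbering}, whereas your version is shorter if one only wants this single theorem.
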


The Rosenberg-Strong $d$-tupling function is an example of a $d$-tupling function with cubic shells.  Therefore, by the theorem, $r_d$ is base-$n$ perfect for all integers $n>1$.  But the converse of Theorem~\ref{cubic} does not hold---there are examples of pairing functions that are base-$n$ perfect for some integer $n>1$, but that do not have cubic shells.  In fact, the discrete space-filling curves of Hilbert and Peano, discussed in the next section, are two such examples.

Given any integer $n>1$, we say that a $d$-tupling function $f$ has \emph{base-$n$ shells} if and only if $f$ has
\begin{equation*}
\max\bigl(\len_n(x_1),\len_n(x_2),\ldots,\len_n(x_d)\bigr)
\end{equation*}
as a shell numbering.  As will be seen in Theorem~\ref{base-n-shells}, the $d$-tupling functions with base-$n$ shells are exactly those $d$-tupling functions that are base-$n$ perfect.  But first, define 
\begin{equation}\label{u-def}
U_s^{<k}=\bigl\{\,\mathbf{y}\in\mathbb{N}^d\;:\;s(\mathbf{y})<k\,\bigr\}
\end{equation}
for each function $s\colon\mathbb{N}^d\to\mathbb{N}$ and each non-negative integer $k$.  The following theorem~\cite{Szudzik2018} relates $U_s^{<k}$ to shell numberings.

\begin{theorem}\label{upper-and-lower-shell-numbering}
Let $f\colon\mathbb{N}^d\to\mathbb{N}$ be any $d$-tupling function.  A function $s\colon\mathbb{N}^d\to\mathbb{N}$ is a shell numbering for $f$ if and only if, for all $\mathbf{x}\in\mathbb{N}^d$,
\begin{equation*}
\bigl\lvert U_s^{<s(\mathbf{x})}\bigr\rvert\leq f(\mathbf{x})<\bigl\lvert U_s^{<s(\mathbf{x})+1}\bigr\rvert.
\end{equation*}
\end{theorem}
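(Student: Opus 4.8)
The plan is to prove both directions of the equivalence directly from the definition of a shell numbering, equation~\eqref{shell-def}, together with the observation that $f$ is a bijection. Throughout, fix $\mathbf{x}\in\mathbb{N}^d$ and abbreviate $k=s(\mathbf{x})$.

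For the forward direction, assume $s$ is a shell numbering for $f$. I would first establish the lower bound $\bigl\lvert U_s^{<k}\bigr\rvert\leq f(\mathbf{x})$ by showing that every $\mathbf{y}\in U_s^{<k}$ satisfies $f(\mathbf{y})<f(\mathbf{x})$: indeed, $s(\mathbf{y})<k=s(\mathbf{x})$ gives $f(\mathbf{y})<f(\mathbf{x})$ by \eqref{shell-def}. Since $f$ is injective, the restriction $f\restriction U_s^{<k}$ is an injection into $\{0,1,\ldots,f(\mathbf{x})-1\}$, which has exactly $f(\mathbf{x})$ elements, so $\bigl\lvert U_s^{<k}\bigr\rvert\leq f(\mathbf{x})$. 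For the upper bound $f(\mathbf{x})<\bigl\lvert U_s^{<k+1}\bigr\rvert$, note that $U_s^{<k+1}=\{\,\mathbf{y}:s(\mathbf{y})\leq k\,\}\supseteq U_s^{<k}\cup\{\mathbf{x}\}$, and more usefully, I claim $f$ maps $U_s^{<k+1}$ into $\{0,1,\ldots,f(\mathbf{x})\}$. Here the key point is the contrapositive: if $f(\mathbf{z})>f(\mathbf{x})$ then we cannot have $s(\mathbf{z})<s(\mathbf{x})$ (that would force $f(\mathbf{z})<f(\mathbf{x})$), and we cannot have $s(\mathbf{z})=s(\mathbf{x})$ with $\mathbf{z}\neq\mathbf{x}$ unless\ldots{} — actually this case needs care. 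The clean route is: for any $\mathbf{z}$ with $f(\mathbf{z})\leq f(\mathbf{x})$, I show $s(\mathbf{z})\leq k$, because $s(\mathbf{z})>k=s(\mathbf{x})$ would give $f(\mathbf{x})<f(\mathbf{z})$ by \eqref{shell-def}, a contradiction. Since $f$ is a bijection, the set $\{\,\mathbf{z}:f(\mathbf{z})\leq f(\mathbf{x})\,\}$ has exactly $f(\mathbf{x})+1$ elements and is contained in $U_s^{<k+1}$, giving $f(\mathbf{x})+1\leq\bigl\lvert U_s^{<k+1}\bigr\rvert$, i.e.\ $f(\mathbf{x})<\bigl\lvert U_s^{<k+1}\bigr\rvert$.

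For the converse, assume the inequality $\bigl\lvert U_s^{<s(\mathbf{x})}\bigr\rvert\leq f(\mathbf{x})<\bigl\lvert U_s^{<s(\mathbf{x})+1}\bigr\rvert$ holds for every $\mathbf{x}$, and suppose $\mathbf{x},\mathbf{y}\in\mathbb{N}^d$ with $s(\mathbf{x})<s(\mathbf{y})$; I must show $f(\mathbf{x})<f(\mathbf{y})$. Since $s$ is integer-valued, $s(\mathbf{x})+1\leq s(\mathbf{y})$, hence $U_s^{<s(\mathbf{x})+1}\subseteq U_s^{<s(\mathbf{y})}$ and so $\bigl\lvert U_s^{<s(\mathbf{x})+1}\bigr\rvert\leq\bigl\lvert U_s^{<s(\mathbf{y})}\bigr\rvert$. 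Chaining the hypothesized inequalities applied at $\mathbf{x}$ and at $\mathbf{y}$ gives
\begin{equation*}
f(\mathbf{x})<\bigl\lvert U_s^{<s(\mathbf{x})+1}\bigr\rvert\leq\bigl\lvert U_s^{<s(\mathbf{y})}\bigr\rvert\leq f(\mathbf{y}),
\end{equation*}
which is exactly what is needed. By \eqref{shell-def}, $s$ is a shell numbering for $f$.

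I expect the main obstacle to be the upper-bound half of the forward direction: one must resist the tempting but insufficient inclusion $U_s^{<k+1}\supseteq U_s^{<k}\cup\{\mathbf{x}\}$ (which only yields $f(\mathbf{x})+1\leq\bigl\lvert U_s^{<k+1}\bigr\rvert$ after separately arguing that the $f$-images land in an initial segment) and instead count the preimage set $\{\,\mathbf{z}:f(\mathbf{z})\leq f(\mathbf{x})\,\}$, exploiting surjectivity of $f$ to know it has precisely $f(\mathbf{x})+1$ elements. The finiteness of $U_s^{<k}$ — needed to make the cardinality arguments meaningful — follows because $f$ injects it into a finite set, so no separate argument is required. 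All other steps are routine manipulations of \eqref{shell-def} and basic facts about bijections.
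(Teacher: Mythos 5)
Your proof is correct. Note, though, that the paper does not actually prove Theorem~\ref{upper-and-lower-shell-numbering}; it imports it from the cited reference, so there is no in-paper argument to compare against. On its own merits your argument is sound and complete: the lower bound via injecting $U_s^{<s(\mathbf{x})}$ into $\{0,\ldots,f(\mathbf{x})-1\}$, the upper bound via counting the preimage set $\{\mathbf{z}:f(\mathbf{z})\leq f(\mathbf{x})\}$ (which has exactly $f(\mathbf{x})+1$ elements by bijectivity and lands inside $U_s^{<s(\mathbf{x})+1}$ by the contrapositive of \eqref{shell-def}), and the converse by chaining the two hypothesized inequalities through the inclusion $U_s^{<s(\mathbf{x})+1}\subseteq U_s^{<s(\mathbf{y})}$ are all valid, and you correctly observe that finiteness of the relevant shells is automatic rather than an extra hypothesis. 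The only blemish is the abandoned case split in the middle of the forward direction (``this case needs care\ldots''), which you should delete since the ``clean route'' that follows supersedes it entirely.
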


But in the special case where $U_s^{<k}$ is a finite set for each non-negative integer $k$, the following lemma allows us to simplify the theorem.

\begin{lemma}\label{upper-implies-lower}
Let $f\colon\mathbb{N}^d\to\mathbb{N}$ be any $d$-tupling function, and let $s\colon\mathbb{N}^d\to\mathbb{N}$ be any function such that $U_s^{<k}$ is a finite set for each $k\in\mathbb{N}$.  If
\begin{equation}\label{upper-bound}
f(\mathbf{x})<\bigl\lvert U_s^{<s(\mathbf{x})+1}\bigr\rvert
\end{equation}
for all $\mathbf{x}\in\mathbb{N}^d$, then
\begin{equation*}
\bigl\lvert U_s^{<s(\mathbf{x})}\bigr\rvert\leq f(\mathbf{x})
\end{equation*}
for all $\mathbf{x}\in\mathbb{N}^d$.
\end{lemma}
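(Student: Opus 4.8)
The plan is to exploit the elementary fact that an injection between two finite sets of equal cardinality must be a surjection, applied to the restriction of $f$ to each of the finite sets $U_s^{<k}$. The hypothesis~\eqref{upper-bound} is precisely what forces $f$ to send $U_s^{<k}$ exactly onto the initial segment $\{0,1,\ldots,\lvert U_s^{<k}\rvert-1\}$ of $\mathbb{N}$, and the conclusion follows immediately from that.

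First I would fix an arbitrary $k\in\mathbb{N}$ and abbreviate $n_k=\bigl\lvert U_s^{<k}\bigr\rvert$, which is a well-defined non-negative integer by hypothesis. Since $U_s^{<j}\subseteq U_s^{<k}$ whenever $j\leq k$, the sequence $n_0,n_1,n_2,\ldots$ is non-decreasing. Also, by~\eqref{u-def}, a point $\mathbf{y}$ belongs to $U_s^{<k}$ exactly when $s(\mathbf{y})<k$, i.e.\ exactly when $s(\mathbf{y})=j$ for some $j<k$.

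Next I would show that $f$ maps $U_s^{<k}$ into $\{0,1,\ldots,n_k-1\}$. If $\mathbf{y}\in U_s^{<k}$, then $s(\mathbf{y})=j$ for some $j<k$, so inequality~\eqref{upper-bound} gives $f(\mathbf{y})<\bigl\lvert U_s^{<s(\mathbf{y})+1}\bigr\rvert=n_{j+1}$; since $j+1\leq k$, monotonicity yields $n_{j+1}\leq n_k$, hence $f(\mathbf{y})<n_k$. Because $f$ is a bijection $\mathbb{N}^d\to\mathbb{N}$, its restriction to the $n_k$-element set $U_s^{<k}$ is an injection into the $n_k$-element set $\{0,1,\ldots,n_k-1\}$, and is therefore a bijection onto it; that is, $f\bigl(U_s^{<k}\bigr)=\{0,1,\ldots,n_k-1\}$.

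Finally, given any $\mathbf{x}\in\mathbb{N}^d$, I would set $k=s(\mathbf{x})$. Then $\mathbf{x}\notin U_s^{<k}$, since $s(\mathbf{x})=k\not<k$. As $f$ is injective and $f\bigl(U_s^{<k}\bigr)=\{0,1,\ldots,n_k-1\}$, it follows that $f(\mathbf{x})\notin\{0,1,\ldots,n_k-1\}$, so $f(\mathbf{x})\geq n_k=\bigl\lvert U_s^{<s(\mathbf{x})}\bigr\rvert$, which is the desired conclusion. The only step that requires any idea is the middle one: recognizing that~\eqref{upper-bound} pins down $f$ on the finite initial segments $U_s^{<k}$ via a pigeonhole argument. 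Everything else is bookkeeping with the monotonicity of $k\mapsto\lvert U_s^{<k}\rvert$.
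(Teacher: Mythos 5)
Your proof is correct and follows essentially the same route as the paper: both arguments use inequality~\eqref{upper-bound} together with the injectivity of $f$ to show that $f$ maps each finite set $U_s^{<k}$ bijectively onto the initial segment $\{0,1,\ldots,\lvert U_s^{<k}\rvert-1\}$ (the paper calls this image $J_s^{<k}$), and then conclude that $f(\mathbf{x})\geq\bigl\lvert U_s^{<s(\mathbf{x})}\bigr\rvert$ because $\mathbf{x}\notin U_s^{<s(\mathbf{x})}$. The only cosmetic difference is that the paper splits off the case $s(\mathbf{x})=0$ explicitly, whereas your version absorbs it into the vacuous case $U_s^{<0}=\emptyset$.
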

\begin{proof}
Suppose that inequality~\eqref{upper-bound} holds for all $\mathbf{x}\in\mathbb{N}^d$.  Now consider any $k\in\mathbb{N}$ and any $\mathbf{x}\in U_s^{<k+1}$.  By the definition of $U_s^{<k+1}$,
 $s(\mathbf{x})<k+1$.  Therefore, $s(\mathbf{x})\leq k$.  And by inequality~\eqref{upper-bound},
\begin{equation}\label{upper-bound-k}
f(\mathbf{x})<\bigl\lvert U_s^{<s(\mathbf{x})+1}\bigr\rvert\leq\bigl\lvert U_s^{<k+1}\bigr\rvert.
\end{equation}
Now define
\begin{equation}\label{j-def}
J_s^{<k+1}=\bigl\{\,f(\mathbf{y})\;:\;\mathbf{y}\in U_s^{<k+1}\,\bigr\}.
\end{equation}
Because $f\colon\mathbb{N}^d\to\mathbb{N}$ is a bijection, $\bigl\lvert J_s^{<k+1}\bigr\rvert=\bigl\lvert U_s^{<k+1}\bigr\rvert$.  Therefore, by inequality~\eqref{upper-bound-k}, $f(\mathbf{x})<\bigl\lvert J_s^{<k+1}\bigr\rvert$.  But $f(\mathbf{x})\in J_s^{<k+1}$ because $\mathbf{x}\in U_s^{<k+1}$.  We have just shown that every member of $J_s^{<k+1}$ is less than $\bigl\lvert J_s^{<k+1}\bigr\rvert$.  And since $\bigl\lvert J_s^{<k+1}\bigr\rvert=\bigl\lvert U_s^{<k+1}\bigr\rvert$ is finite, it immediately follows that
\begin{equation}\label{j-equation}
J_s^{<k+1}=\bigl\{0,1,2,\ldots,\bigl\lvert J_s^{<k+1}\bigr\rvert-1\bigr\}
\end{equation}
for all $k\in\mathbb{N}$.

Now consider any $\mathbf{z}\in\mathbb{N}^d$.  If $s(\mathbf{z})=0$ then
\begin{equation*}
f(\mathbf{z})\geq0=\bigl\lvert U_s^{<s(\mathbf{z})}\bigr\rvert
\end{equation*}
because $U_s^{<s(\mathbf{z})}$ is empty.  Alternatively, if $s(\mathbf{z})>0$ then let $k=s(\mathbf{z})-1$.  By equation~\eqref{j-equation}, $J_s^{<s(\mathbf{z})}=\bigl\{0,1,2,\ldots,\bigl\lvert J_s^{<s(\mathbf{z})}\bigr\rvert-1\bigr\}$.  But by the definition of $J_s^{<s(\mathbf{z})}$, $J_s^{<s(\mathbf{z})}=\bigl\{f(\mathbf{y}):\mathbf{y}\in U_s^{<s(\mathbf{z})}\bigr\}$, and $\mathbf{z}\notin U_s^{<s(\mathbf{z})}$ by the definition of $U_s^{<s(\mathbf{z})}$.  Therefore, $f(\mathbf{z})\notin J_s^{<s(\mathbf{z})}$, and it must be the case that
\begin{equation*}
f(\mathbf{z})\geq\bigl\lvert J_s^{<s(\mathbf{z})}\bigr\rvert=\bigl\lvert U_s^{<s(\mathbf{z})}\bigr\rvert.
\end{equation*}
In either case, we have shown that $f(\mathbf{z})\geq\bigl\lvert U_s^{<s(\mathbf{z})}\bigr\rvert$, and this holds for all $\mathbf{z}\in\mathbb{N}^d$.
\end{proof}

\begin{corollary}\label{upper-shell-numbering}
Let $f\colon\mathbb{N}^d\to\mathbb{N}$ be any $d$-tupling function, and let $s\colon\mathbb{N}^d\to\mathbb{N}$ be any function such that $U_s^{<k}$ is a finite set for each $k\in\mathbb{N}$.  Then, $s$ is a shell numbering for $f$ if and only if, for all $\mathbf{x}\in\mathbb{N}^d$,
\begin{equation*}
f(\mathbf{x})<\bigl\lvert U_s^{<s(\mathbf{x})+1}\bigr\rvert.
\end{equation*}
\end{corollary}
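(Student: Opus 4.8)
The plan is to obtain the corollary directly from Theorem~\ref{upper-and-lower-shell-numbering} together with Lemma~\ref{upper-implies-lower}, so that essentially no new argument is needed beyond combining these two results. The statement is an ``if and only if'', so I would handle the two implications separately.

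For the forward direction, I would assume that $s$ is a shell numbering for $f$. Then Theorem~\ref{upper-and-lower-shell-numbering} immediately gives
\begin{equation*}
\bigl\lvert U_s^{<s(\mathbf{x})}\bigr\rvert\leq f(\mathbf{x})<\bigl\lvert U_s^{<s(\mathbf{x})+1}\bigr\rvert
\end{equation*}
for all $\mathbf{x}\in\mathbb{N}^d$, and in particular the right-hand inequality $f(\mathbf{x})<\bigl\lvert U_s^{<s(\mathbf{x})+1}\bigr\rvert$ holds for all $\mathbf{x}$, which is exactly the desired conclusion. It is worth noting that this direction does not use the finiteness hypothesis on the sets $U_s^{<k}$ at all.

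For the converse, I would assume that $f(\mathbf{x})<\bigl\lvert U_s^{<s(\mathbf{x})+1}\bigr\rvert$ for all $\mathbf{x}\in\mathbb{N}^d$. Here the finiteness hypothesis is what makes the argument work: it permits invoking Lemma~\ref{upper-implies-lower}, which yields $\bigl\lvert U_s^{<s(\mathbf{x})}\bigr\rvert\leq f(\mathbf{x})$ for all $\mathbf{x}\in\mathbb{N}^d$. Combining this lower bound with the assumed upper bound gives
\begin{equation*}
\bigl\lvert U_s^{<s(\mathbf{x})}\bigr\rvert\leq f(\mathbf{x})<\bigl\lvert U_s^{<s(\mathbf{x})+1}\bigr\rvert
\end{equation*}
for all $\mathbf{x}\in\mathbb{N}^d$, and then Theorem~\ref{upper-and-lower-shell-numbering} shows that $s$ is a shell numbering for $f$.

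Since each of the two implications is a one-line deduction from an already-established result, I do not expect any real obstacle in the proof. The only point requiring a bit of care is to be explicit about the asymmetry between the two directions: the finiteness assumption on $U_s^{<k}$ is needed solely for the converse, where it is the precondition for applying Lemma~\ref{upper-implies-lower}.
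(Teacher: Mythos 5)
Your proof is correct and takes exactly the same route as the paper, which simply states that the corollary is an immediate consequence of Theorem~\ref{upper-and-lower-shell-numbering} and Lemma~\ref{upper-implies-lower}. Your write-up just makes explicit the two one-line deductions (and the observation that finiteness is needed only for the converse) that the paper leaves implicit.
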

\begin{proof}
This is an immediate consequence of Theorem~\ref{upper-and-lower-shell-numbering} and Lemma~\ref{upper-implies-lower}.
\end{proof}

The corollary can now be applied to prove the following theorem.

\begin{theorem}\label{base-n-shells}
Let $n$ be any integer greater than $1$.  A $d$-tupling function is base-$n$ perfect if and only if it has base-$n$ shells.
\end{theorem}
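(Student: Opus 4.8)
The plan is to chain together Corollary~\ref{alt-perfect} and Corollary~\ref{upper-shell-numbering} by means of an explicit computation of $U_s^{<k}$ for the candidate shell numbering $s(\mathbf{x})=\max\bigl(\len_n(x_1),\len_n(x_2),\ldots,\len_n(x_d)\bigr)$.

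First I would fix an integer $n>1$ and a $d$-tupling function $f$, and set $s(\mathbf{x})=\max\bigl(\len_n(x_1),\len_n(x_2),\ldots,\len_n(x_d)\bigr)$, so that the statement ``$f$ has base-$n$ shells'' means precisely ``$s$ is a shell numbering for $f$.'' The key preliminary step is to evaluate $\bigl\lvert U_s^{<k}\bigr\rvert$. From the definition~\eqref{u-def}, a point $\mathbf{y}$ lies in $U_s^{<k}$ if and only if $\len_n(y_i)\leq k-1$ for every $i$, which by the length identity~\eqref{len-identity} holds if and only if $y_i<n^{k-1}$ for every $i$. Hence $U_s^{<k}=\{0,1,\ldots,n^{k-1}-1\}^d$ when $k\geq1$, so that $\bigl\lvert U_s^{<k}\bigr\rvert=n^{d(k-1)}$, while $U_s^{<0}=\varnothing$. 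In particular $U_s^{<k}$ is a finite set for every $k\in\mathbb{N}$, so Corollary~\ref{upper-shell-numbering} applies: $s$ is a shell numbering for $f$ if and only if $f(\mathbf{x})<\bigl\lvert U_s^{<s(\mathbf{x})+1}\bigr\rvert$ for all $\mathbf{x}\in\mathbb{N}^d$. Since $s(\mathbf{x})+1\geq1$, the formula just derived gives $\bigl\lvert U_s^{<s(\mathbf{x})+1}\bigr\rvert=n^{d\,s(\mathbf{x})}$, so this condition reads $f(\mathbf{x})<n^{d\,s(\mathbf{x})}$ for all $\mathbf{x}\in\mathbb{N}^d$.

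Next I would translate this back into a statement about lengths. Applying the length identity~\eqref{len-identity} once more, with $k$ replaced by $d\,s(\mathbf{x})$, the inequality $f(\mathbf{x})<n^{d\,s(\mathbf{x})}$ is equivalent to $\len_n\bigl(f(\mathbf{x})\bigr)\leq d\,s(\mathbf{x})=d\max\bigl(\len_n(x_1),\len_n(x_2),\ldots,\len_n(x_d)\bigr)$. Quantifying over all $\mathbf{x}\in\mathbb{N}^d$, this is exactly inequality~\eqref{alt-perfect-inequality}, which by Corollary~\ref{alt-perfect} holds if and only if $f$ is base-$n$ perfect. Stringing the equivalences together --- $f$ has base-$n$ shells iff $s$ is a shell numbering for $f$ iff $f(\mathbf{x})<n^{d\,s(\mathbf{x})}$ for all $\mathbf{x}$ iff $\len_n(f(\mathbf{x}))\leq d\,s(\mathbf{x})$ for all $\mathbf{x}$ iff $f$ is base-$n$ perfect --- yields the theorem, and handles both directions simultaneously.

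I do not expect any serious obstacle here; the proof is essentially a bookkeeping argument. The only points requiring care are the small-$k$ cases in the computation of $\bigl\lvert U_s^{<k}\bigr\rvert$ --- in particular noting that $U_s^{<0}$ is empty rather than a singleton, and that the index $s(\mathbf{x})+1$ is always at least $1$ so that the clean formula $\bigl\lvert U_s^{<k}\bigr\rvert=n^{d(k-1)}$ is the one actually invoked --- together with making sure the two uses of the length identity~\eqref{len-identity} are applied with the correct substitutions for $k$.
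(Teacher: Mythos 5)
Your proof is correct and takes essentially the same route as the paper's: compute $\bigl\lvert U_s^{<s(\mathbf{x})+1}\bigr\rvert=n^{d\,s(\mathbf{x})}$ for $s(\mathbf{x})=\max\bigl(\len_n(x_1),\ldots,\len_n(x_d)\bigr)$, apply Corollary~\ref{upper-shell-numbering}, and translate back through the length identity~\eqref{len-identity} and Corollary~\ref{alt-perfect}. Your explicit verification that $U_s^{<k}$ is finite for every $k$ (the hypothesis of Corollary~\ref{upper-shell-numbering}) is a minor point the paper leaves implicit, but it is not a different argument.
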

\begin{proof}
Let $s(x_1,\ldots,x_d)=\max\bigl(\len_n(x_1),\ldots,\len_n(x_d)\bigr)$ for each $(x_1,\ldots,x_d)\in\mathbb{N}^d$.  Now consider any $\mathbf{x}\in\mathbb{N}^d$.  By definition,
\begin{equation*}
U_s^{<s(\mathbf{x})+1}=\bigl\{\,(y_1,\ldots,y_d)\in\mathbb{N}^d\;:\;\max\bigl(\len_n(y_1),\ldots,\len_n(y_d)\bigr)<s(\mathbf{x})+1\,\bigr\}.
\end{equation*}
But $\max\bigl(\len_n(y_1),\ldots,\len_n(y_d)\bigr)<s(\mathbf{x})+1$ if and only if
\begin{equation*}
\len_n(y_i)\leq s(\mathbf{x})\quad\text{for all}\quad i\in\{1,\ldots,d\}.
\end{equation*}
Hence, by condition~\eqref{len-identity},
\begin{equation*}
U_s^{<s(\mathbf{x})+1}=\bigl\{\,(y_1,\ldots,y_d)\in\mathbb{N}^d\;:\;\text{$y_i<n^{s(\mathbf{x})}$ for all $i\in\{1,\ldots,d\}$}\,\bigr\}.
\end{equation*}
It immediately follows that $\bigl\lvert U_s^{<s(\mathbf{x})+1}\bigr\rvert=n^{s(\mathbf{x})d}$ for all $\mathbf{x}\in\mathbb{N}^d$.

Now consider any $d$-tupling function $f$.  By definition, $f$ has base-$n$ shells if and only if $s$ is a shell numbering for $f$.  Hence, by Corollary~\ref{upper-shell-numbering}, $f$ has base-$n$ shells if and only if $f(\mathbf{x})<n^{s(\mathbf{x})d}$ for all $\mathbf{x}\in\mathbb{N}^d$.  But by condition~\eqref{len-identity}, this is true if and only if $\len_n\bigl(f(\mathbf{x})\bigr)\leq s(\mathbf{x})d$ for all $\mathbf{x}\in\mathbb{N}^d$.  And by Corollary~\ref{alt-perfect}, this is true if and only if $f$ is base-$n$ perfect.
\end{proof}

The $d$-tupling functions with cubic shells may also be characterized in the following manner.

\begin{theorem}
Let $f$ be any $d$-tupling function.  Then $f$ has cubic shells if and only if, for all integers $n>1$, $f$ has base-$n$ shells.
\end{theorem}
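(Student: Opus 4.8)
The plan is to establish the two implications separately. The ``only if'' direction is immediate from work already done: if $f$ has cubic shells, then by Theorem~\ref{cubic} it is base-$n$ perfect for every integer $n>1$, and hence by Theorem~\ref{base-n-shells} it has base-$n$ shells for every such $n$. (If a self-contained argument is preferred, note that $\len_n$ is non-decreasing, so $\max\bigl(\len_n(x_1),\ldots,\len_n(x_d)\bigr)=\len_n\bigl(\max(x_1,\ldots,x_d)\bigr)$ for all $(x_1,\ldots,x_d)\in\mathbb{N}^d$; thus a strict inequality between the base-$n$ shell numberings of $\mathbf{x}$ and $\mathbf{y}$ forces $\max(x_1,\ldots,x_d)<\max(y_1,\ldots,y_d)$, and cubic shells then yields $f(\mathbf{x})<f(\mathbf{y})$.)

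The ``if'' direction is where the real work lies. Assume $f$ has base-$n$ shells for every integer $n>1$; I want to show that $\max(x_1,\ldots,x_d)$ is a shell numbering for $f$, so suppose $\mathbf{x}$ and $\mathbf{y}$ satisfy $M:=\max(x_1,\ldots,x_d)<\max(y_1,\ldots,y_d)=:M'$, the goal being $f(\mathbf{x})<f(\mathbf{y})$. Using once more the identity $\max\bigl(\len_n(x_1),\ldots,\len_n(x_d)\bigr)=\len_n(M)$ and its analogue for $\mathbf{y}$, it suffices to produce a single base $n>1$ for which $\len_n(M)<\len_n(M')$: then the base-$n$ shell numbering of $\mathbf{x}$ is strictly below that of $\mathbf{y}$, and the base-$n$ shell hypothesis for that particular $n$ finishes the job.

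To build such an $n$, I would invoke condition~\eqref{len-identity}: if $M<n^k\le M'$ for some integer $k\ge1$, then $\len_n(M)\le k<k+1\le\len_n(M')$. Taking $k=1$, it is enough to find an integer $n>1$ with $M<n\le M'$. When $M\ge1$ the choice $n=M+1$ works, since then $n\ge2$ and $M<M+1\le M'$. The case $M=0$ must be handled separately, because then ``$M+1$'' would be the base $1$, which is not allowed, and indeed no integer $n^k$ with $n\ge2$ and $k\ge1$ lies in the interval $(0,M']$ once $M'=1$; but in this case every $x_i$ equals $0$ while some $y_j\ge1$, so for any admissible base---$n=2$, say---we directly have $\max_i\len_2(x_i)=0<1\le\max_j\len_2(y_j)$, which is enough. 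In every case $f(\mathbf{x})<f(\mathbf{y})$, so $\max(x_1,\ldots,x_d)$ is a shell numbering and $f$ has cubic shells. I expect this last construction---choosing the base uniformly and isolating the degenerate case $M=0$---to be the only point requiring care; everything else reduces to the monotonicity of $\len_n$ and the characterization~\eqref{len-identity} of base-$n$ length.
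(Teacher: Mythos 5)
Your proposal is correct and follows essentially the same route as the paper: the forward direction via Theorems~\ref{cubic} and~\ref{base-n-shells}, and the converse by choosing the base $n=\max(x_1,\ldots,x_d)+1$ when that maximum is positive, with the degenerate case $\max(x_1,\ldots,x_d)=0$ handled separately in base $2$. The paper phrases the converse as a contrapositive while you argue it directly, but the witnessing base and the case split are identical.
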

\begin{proof}
Suppose that $f$ has cubic shells.  By Theorem~\ref{cubic}, $f$ is base-$n$ perfect for all integers $n>1$.  Then by Theorem~\ref{base-n-shells}, $f$ has base-$n$ shells for all integers $n>1$.

Alternatively, suppose that $f$ does not have cubic shells.  By condition~\eqref{shell-def}, there exist points $(x_1,\ldots,x_d)\in\mathbb{N}^d$ and $(y_1,\ldots,y_d)\in\mathbb{N}^d$ such that
\begin{equation*}
\max(x_1,\ldots,x_d)<\max(y_1,\ldots,y_d)\quad\And\quad f(x_1,\ldots,x_d)\geq f(y_1,\ldots,y_d).
\end{equation*}
Let $m=\max(x_1,\ldots,x_d)$.  There are two cases to consider.
\begin{description}
\item[Case 1] If $m=0$ then $x_i=0$ for each $i\in\{1,\ldots,d\}$.  Therefore, $\len_2(x_i)=0$ for each $i\in\{1,\ldots,d\}$.  But $\max(y_1,\ldots,y_d)>0$, so there must exist an index $j\in\{1,\ldots,d\}$ such that $\len_2(y_j)>0$.  Hence,
\begin{equation*}
\max\bigl(\len_2(x_1),\ldots,\len_2(x_d)\bigr)=0<\len_2(y_j)\leq\max\bigl(\len_2(y_1),\ldots,\len_2(y_d)\bigr).
\end{equation*}
\item[Case 2] If $m>0$ then consider the base-$(m+1)$ lengths of $x_1$, \ldots, $x_d$ and $y_1$, \ldots, $y_d$.  Notice that $x_i<m+1$ for each $i\in\{1,\ldots,d\}$.  By condition~\eqref{len-identity}, this implies that $\len_{m+1}(x_i)\leq1$ for each $i\in\{1,\ldots,d\}$.  And since $\max(y_1,\ldots,y_d)>m$, there must exist an index $j\in\{1,\ldots,d\}$ such that $y_j\geq m+1$.  Then by condition~\eqref{len-identity}, $\len_{m+1}(y_j)>1$.  Hence,
\begin{align*}
\max\bigl(\len_{m+1}(x_1),\ldots,\len_{m+1}(x_d)\bigr)\leq1&<\len_{m+1}(y_j)\\
&\leq\max\bigl(\len_{m+1}(y_1),\ldots,\len_{m+1}(y_d)\bigr).
\end{align*}
\end{description}
In either case, there exists an integer $n>1$ such that
\begin{equation*}
\max\bigl(\len_n(x_1),\ldots,\len_n(x_d)\bigr)<\max\bigl(\len_n(y_1),\ldots,\len_n(y_d)\bigr)
\end{equation*}
and
\begin{equation*}
f(x_1,\ldots,x_d)\geq f(y_1,\ldots,y_d).
\end{equation*}
That is, there exists an integer $n>1$ such that $f$ does not have base-$n$ shells.
\end{proof}

\section{Discrete Space-Filling Curves}\label{space-filling}

We use the following conventions in this section.  For any non-negative integer $n$, a bijection from $\{0,1,\ldots,n-1\}$ to $\{0,1,\ldots,n-1\}$ is said to be a \emph{permutation on $n$ symbols}.  The set of all permutations on $n$ symbols is $S_n$.  We use the two-row notation
\begin{equation*}
\left(\begin{matrix}
0 & 1 & \cdots & n-1\\
y_0 & y_1 & \cdots & y_{n-1}
\end{matrix}\right),
\end{equation*} 
where $y_0$, $y_1$, \ldots, $y_{n-1}$ are non-negative integers, to denote the permutation $\sigma\in S_n$ that satisfies the equations
\begin{equation*}
\sigma(0)=y_0,\qquad\sigma(1)=y_1,\qquad\ldots,\qquad\sigma(n-1)=y_{n-1}.
\end{equation*}
For any sets $A$, $B$, and $C$, and any functions $g\colon A\to B$ and $f\colon B\to C$, we use $f\circ g$ the denote the function from $A$ to $C$ that satisfies the equation $f\circ g(x)=f\bigl(g(x)\bigr)$ for all $x\in A$.  We say that $f\circ g$ is the \emph{composition} of $f$ with $g$.  The identity function is denoted $I$.  We use $\sigma^k$, where $\sigma\in S_n$ and $k\in\mathbb{N}$, to denote the function
\begin{equation*}
\underbrace{\sigma\circ\sigma\circ\cdots\circ\sigma}_\text{$k$ many $\sigma$'s}\circ I.
\end{equation*}
If $k=1$, $\sigma^{-k}$ is the inverse of $\sigma$.  Otherwise, $\sigma^{-k}$ is the function $(\sigma^{-1})^k$.

Now, any continuous surjection from the unit interval $[0,1]$ to the $d$-dimensional unit cube $[0,1]^d$, where $d$ is an integer greater than $1$, is said to be a \emph{space-filling curve}.\footnote{
Sagan~\cite{Sagan1994} provides a more general definition for the notion of a space-filling curve.
}  The first space-filling curves to be discovered were the \emph{continuous Peano curve}~\cite{Peano1890} and the \emph{continuous Hilbert curve}~\cite{Hilbert1891}, both of which are functions from $[0,1]$ to $[0,1]^2$.  If $F\colon[0,1]\to[0,1]^2$ is the continuous Peano curve, then there exists a unique pairing function $f\colon\mathbb{N}^2\to\mathbb{N}$ such that
\begin{equation*}
F\left(\frac{z+1/2}{9^k}\right)=3^{-k}\Bigl(f^{-1}(z)+\bigl(\tfrac{1}{2},\tfrac{1}{2}\bigl)\Bigr)
\end{equation*}
for all non-negative integers $z$ and $k$ such that $z<9^k$.
This function $f$ is said to be the \emph{discrete Peano curve}.  More generally, we make the following definition.

\begin{definition}\label{discrete}
Let $n$ be any integer greater than $1$.  A function $f\colon\mathbb{N}^2\to\mathbb{N}$ is said to be a \emph{base-$n$ discrete space-filling curve} if and only if there exist permutations $\tau$, $\sigma_0$, $\sigma_1$, \ldots, $\sigma_{n^2-1}$ in $S_{n^2}$ such that the following three conditions hold.
\begin{enumerate}
\item[(a)] $\tau(0)=0$ and $\sigma_0(0)=0$.
\item[(b)] $f(0,0)=0$.
\item[(c)] For all non-negative integers $x$, $y$, and $z$ such that $(x,y)\ne(0,0)$, $f(x,y)=z$ if and only if
\begin{equation}\label{discrete-equations}
\begin{aligned}
\sigma_0^{-(m-1)}\circ\tau\bigl(nx_{m-1}+y_{m-1}\bigr)&=z_{m-1},\\
\sigma_{z_{m-1}}\circ\sigma_0^{-(m-1)}\circ\tau\bigl(nx_{m-2}+y_{m-2}\bigr)&=z_{m-2},\\
\sigma_{z_{m-2}}\circ\sigma_{z_{m-1}}\circ\sigma_0^{-(m-1)}\circ\tau\bigl(nx_{m-3}+y_{m-3}\bigr)&=z_{m-3},\\
&\;\;\vdots\\
\sigma_{z_1}\circ\cdots\circ\sigma_{z_{m-2}}\circ\sigma_{z_{m-1}}\circ\sigma_0^{-(m-1)}\circ\tau\bigl(nx_0+y_0\bigr)&=z_0,
\end{aligned}
\end{equation}
where $m=\max\bigl(\len_n(x),\len_n(y)\bigr)$, where
\begin{align*}
x&=x_{m-1}n^{m-1}+x_{m-2}n^{m-2}+\cdots+x_0n^0,\\
y&=y_{m-1}n^{m-1}+y_{m-2}n^{m-2}+\cdots+y_0n^0
\end{align*}
are base-$n$ expansions of $x$ and $y$, and where
\begin{equation*}
z=z_{m-1}\bigl(n^2\bigr)^{m-1}+z_{m-2}\bigl(n^2\bigr)^{m-2}+\cdots+z_0\bigl(n^2\bigr)^0
\end{equation*}
is a base-$n^2$ expansion of $z$.
\end{enumerate}
\end{definition}

If $f\colon\mathbb{N}^2\to\mathbb{N}$ is a base-$n$ discrete space-filling curve, then the corresponding permutations $\tau$, $\sigma_0$, $\sigma_1$, \ldots, $\sigma_{n^2-1}$ uniquely determine the value of $f(x,y)$ for each $(x,y)\in\mathbb{N}^2$.  In particular, equations~\eqref{discrete-equations} can be used to calculate a base-$n^2$ expansion of $f(x,y)$, given the appropriate base-$n$ expansions of $x$ and $y$.  Equations~\eqref{discrete-equations} can also be inverted, as follows:
\begin{equation}\label{discrete-equations-inv}
\begin{aligned}
(x_{m-1},y_{m-1})&=\delta_n^{-1}\circ\tau^{-1}\circ\sigma_0^{m-1}\bigl(z_{m-1}\bigr),\\
(x_{m-2},y_{m-2})&=\delta_n^{-1}\circ\tau^{-1}\circ\sigma_0^{m-1}\circ\sigma_{z_{m-1}}^{-1}\bigl(z_{m-2}\bigr),\\
(x_{m-3},y_{m-3})&=\delta_n^{-1}\circ\tau^{-1}\circ\sigma_0^{m-1}\circ\sigma_{z_{m-1}}^{-1}\circ\sigma_{z_{m-2}}^{-1}\bigl(z_{m-3}\bigr),\\
&\;\;\vdots\\
(x_0,y_0)&=\delta_n^{-1}\circ\tau^{-1}\circ\sigma_0^{m-1}\circ\sigma_{z_{m-1}}^{-1}\circ\sigma_{z_{m-2}}^{-1}\circ\cdots\circ\sigma_{z_1}^{-1}\bigl(z_0\bigr),
\end{aligned}
\end{equation}
where $\delta_n$ is the bijection from $\{0,1,\ldots,n-1\}^2$ to $\{0,1,\ldots,n^2-1\}$ that is given by the formula $\delta_n(x,y)=nx+y$.

\begin{theorem}\label{discrete-perfect}
If $f\colon\mathbb{N}^2\to\mathbb{N}$ is a base-$n$ discrete space-filling curve, where $n$ is an integer greater than $1$, then $f$ is a base-$n$ perfect pairing function.
\end{theorem}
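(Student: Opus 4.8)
The plan is to establish, in turn, that $f$ is a bijection from $\mathbb{N}^2$ onto $\mathbb{N}$ and that it is base-$n$ perfect. The second assertion is nearly immediate: for $(x,y)\ne(0,0)$, condition~(c) of Definition~\ref{discrete} expresses $f(x,y)$ through a base-$n^2$ expansion with exactly $m=\max\bigl(\len_n(x),\len_n(y)\bigr)$ digits $z_{m-1},\ldots,z_0$, each of which is the value of a composition of permutations from $S_{n^2}$ applied to some $\delta_n(x_j,y_j)\in\{0,\ldots,n^2-1\}$ and hence itself lies in $\{0,\ldots,n^2-1\}$. Therefore $f(x,y)\le (n^2)^m-1<n^{2m}$, so $\len_n\bigl(f(x,y)\bigr)\le 2m$ by~\eqref{len-identity}; and since $\len_n\bigl(f(0,0)\bigr)=\len_n(0)=0$ as well, inequality~\eqref{alt-perfect-inequality} holds for all $(x,y)\in\mathbb{N}^2$ with $d=2$, so Corollary~\ref{alt-perfect} yields that $f$ is base-$n$ perfect.

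For the bijectivity, the key preliminary step is a \emph{padding lemma}: for every $(x,y)\in\mathbb{N}^2$ and every integer $m\ge 1$ with $\len_n(x)\le m$ and $\len_n(y)\le m$, if one writes $m$-digit base-$n$ expansions $x=\sum_{i=0}^{m-1}x_in^i$ and $y=\sum_{i=0}^{m-1}y_in^i$ (permitting leading zeros) and computes $z_{m-1},\ldots,z_0$ from equations~\eqref{discrete-equations}, then $f(x,y)=\sum_{i=0}^{m-1}z_i(n^2)^i$. This is proved by induction on $m$; the base case is condition~(c), together with condition~(b) and $\tau(0)=0$ in the case $(x,y)=(0,0)$. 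In the inductive step one appends a leading pair $x_m=y_m=0$, and condition~(a)---namely $\tau(0)=0$ and $\sigma_0(0)=0$---forces the new top digit to be $z_m=\sigma_0^{-m}\bigl(\tau(0)\bigr)=0$, whereupon $\sigma_{z_m}\circ\sigma_0^{-m}=\sigma_0\circ\sigma_0^{-m}=\sigma_0^{-(m-1)}$ and every lower digit is left unchanged.

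Granting the padding lemma, fix an integer $k\ge 1$ and put $s(x,y)=\max\bigl(\len_n(x),\len_n(y)\bigr)$, so that $U_s^{<k}=\bigl\{(x,y)\in\mathbb{N}^2:x<n^{k-1}\text{ and }y<n^{k-1}\bigr\}$ has $n^{2(k-1)}$ elements, as computed in the proof of Theorem~\ref{base-n-shells}. For $(x,y)\in U_s^{<k}$, using the $(k-1)$-digit expansions, the padding lemma gives $f(x,y)=\sum_{i=0}^{k-2}z_i(n^2)^i$ with the $z_i$ determined by~\eqref{discrete-equations}. The resulting map $\bigl((x_{k-2},y_{k-2}),\ldots,(x_0,y_0)\bigr)\mapsto(z_{k-2},\ldots,z_0)$ is a bijection from $\bigl(\{0,\ldots,n-1\}^2\bigr)^{k-1}$ onto $\bigl(\{0,\ldots,n^2-1\}\bigr)^{k-1}$, since it is triangular: given the digits of higher index, $z_i$ depends bijectively on $(x_i,y_i)$ because $\delta_n$, $\tau$, and every $\sigma_j$ is a bijection, and one recovers the $(x_i,y_i)$ from the $z_i$ by successive substitution---this is precisely equations~\eqref{discrete-equations-inv}. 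Passing through the base-$n$ and base-$n^2$ expansions, $f$ therefore restricts to a bijection from $U_s^{<k}$ onto $\{0,1,\ldots,n^{2(k-1)}-1\}$; letting $k\to\infty$, and noting that these sets increase to $\mathbb{N}^2$ and to $\mathbb{N}$ respectively, we conclude that $f$ is a bijection, so, with the first paragraph, $f$ is a base-$n$ perfect pairing function.

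I expect the bijectivity half---and within it the padding lemma---to be the main obstacle. The subtlety is that equations~\eqref{discrete-equations} are written with a digit count $m=\max\bigl(\len_n(x),\len_n(y)\bigr)$ that varies with the input, so before knowing that prepending zeros does not change $f(x,y)$ one cannot compare the outputs at different inputs, nor treat $f$ uniformly on the sets $U_s^{<k}$; condition~(a) of Definition~\ref{discrete} is exactly what makes that invariance hold. Everything else---the length bound yielding base-$n$ perfection, and the verification that the digit-by-digit map is triangular with bijective entries---is routine. An alternative to the union argument would be to check directly that the function given by equations~\eqref{discrete-equations-inv} is a two-sided inverse of $f$, using that condition~(a) forces the leading base-$n^2$ digit of $f(x,y)$ to be nonzero whenever $(x,y)\ne(0,0)$, so that the digit count $m$ matches in both directions.
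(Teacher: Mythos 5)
Your proof is correct, but the bijectivity half takes a genuinely different route from the paper's. The paper partitions the domain into the exact-length shells $C_m=\{(x,y):\max(\len_n(x),\len_n(y))=m\}$ and the codomain into $D_m=\{z:\len_{n^2}(z)=m\}$, shows $f(C_m)\subseteq D_m$ --- the key point being that condition~(a) forces the leading base-$n^2$ digit $z_{m-1}$ to be nonzero when $(x,y)\ne(0,0)$, so that $\len_{n^2}(f(x,y))$ equals $m$ exactly --- and then checks that equations~\eqref{discrete-equations-inv} invert $f$ shell by shell; since the $C_m$ partition $\mathbb{N}^2$ and the $D_m$ partition $\mathbb{N}$, bijectivity follows. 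You instead work with the nested cubes $U_s^{<k}$, which requires your padding lemma (prepending zero digit-pairs leaves the output unchanged) precisely because the digit count $m$ in Definition~\ref{discrete}(c) varies with the input; in exchange you never need the nonzero-leading-digit observation for bijectivity, and your triangular-map argument makes explicit what the paper compresses into ``by similar reasoning \ldots\ it is straightforward to show that $g$ is the inverse of $f$.'' The two uses of condition~(a) are complementary: the paper uses $\sigma_0^{-(m-1)}\circ\tau\circ\delta_n(0,0)=0$ to keep shells from spilling downward, while you use it to make the padded computation stable. Your perfection argument is also slightly more direct --- you bound $f(x,y)<n^{2m}$ outright and invoke condition~\eqref{len-identity}, whereas the paper computes $\len_{n^2}(f(x,y))=m$ exactly and then applies the inequality $\len_n(i)\leq2\len_{n^2}(i)$ --- and your version does not even need $z_{m-1}\ne0$ for that step. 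Both arguments are sound; the paper's disjoint-shell decomposition yields the sharper intermediate fact $\len_{n^2}(f(x,y))=\max(\len_n(x),\len_n(y))$, while yours isolates the padding invariance, which is the honest technical content behind treating $f$ uniformly on initial segments.
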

\begin{proof}
Suppose that $f\colon\mathbb{N}^2\to\mathbb{N}$ is a base-$n$ discrete space-filling curve, and consider any $(x,y)\in\mathbb{N}^2$.  Let $m=\max\bigl(\len_n(x),\len_n(y)\bigr)$ and suppose that $(x,y)=(0,0)$.  Then, $m=0$.  And by Definition~\ref{discrete}(b), $f(x,y)=0$.  Therefore,
\begin{equation*}
\len_{n^2}\bigl(f(x,y)\bigr)=\len_{n^2}\bigl(0\bigr)=0=m.
\end{equation*}
Alternatively, suppose that $(x,y)\ne(0,0)$ and let
\begin{align*}
x&=x_{m-1}n^{m-1}+x_{m-2}n^{m-2}+\cdots+x_0n^0,\\
y&=y_{m-1}n^{m-1}+y_{m-2}n^{m-2}+\cdots+y_0n^0
\end{align*}
be base-$n$ expansions of $x$ and $y$.  Note that $m>0$ because $(x,y)\ne(0,0)$.  By Definition~\ref{discrete},
\begin{equation*}
f(x,y)=z_{m-1}\bigl(n^2\bigr)^{m-1}+z_{m-2}\bigl(n^2\bigr)^{m-2}+\cdots+z_0\bigl(n^2\bigr)^0,
\end{equation*}
where $z_{m-1}$, $z_{m-2}$, \ldots, $z_0$ are given by equations~\eqref{discrete-equations}.  In particular,
\begin{equation*}
\sigma_0^{-(m-1)}\circ\tau\circ\delta_n\bigl(x_{m-1},y_{m-1}\bigr)=z_{m-1}.
\end{equation*}
But $\sigma_0^{-1}$, $\tau$, and $\delta_n$ are bijections, and a composition of bijections is itself a bijection.  Therefore, $\sigma_0^{-(m-1)}\circ\tau\circ\delta_n$ is a bijection from $\{0,1,\ldots,n-1\}^2$ to $\{0,1,\ldots,n^2-1\}$.  And by Definition~\ref{discrete}(a),
\begin{equation*}
\sigma_0^{-(m-1)}\circ\tau\circ\delta_n(0,0)=0.
\end{equation*}
This implies that $z_{m-1}=0$ if and only if $(x_{m-1},y_{m-1})=(0,0)$.  But
%
%
$(x_{m-1},y_{m-1})\linebreak[0]\ne(0,0)$ because $x$ or $y$ must have a base-$n$ length of $m$.  Therefore, $z_{m-1}\ne0$.  That is, $\len_{n^2}\bigl(f(x,y)\bigr)=m$.  We have now shown that
\begin{equation}\label{discrete-length}
\len_{n^2}\bigl(f(x,y)\bigr)=m
\end{equation}
for all $(x,y)\in\mathbb{N}^2$.  Hence, for each non-negative integer $m$, $f$ is a function from
\begin{equation*} C_m=\bigl\{\,(x,y)\in\mathbb{N}^2\;:\;\max\bigl(\len_n(x),\len_n(y)\bigr)=m\,\bigr\}
\end{equation*}
to
\begin{equation*}
D_m=\bigl\{\,z\in\mathbb{N}\;:\;\len_{n^2}(z)=m\,\bigr\}.
\end{equation*}
By similar reasoning, equations~\eqref{discrete-equations-inv} define a function $g\colon D_m\to C_m$ for each each non-negative integer $m$, where $g(0)=(0,0)$.  It is then straightforward to show that $g$ is the inverse of $f$.  Therefore, $f\colon C_m\to D_m$ is a bijection for each non-negative integer $m$.  But $C_0$, $C_1$, $C_2$, \ldots\ are disjoint sets whose union is  $\mathbb{N}^2$, and $D_0$, $D_1$, $D_2$, \ldots\ are disjoint sets whose union is $\mathbb{N}$.  Hence, $f$ is a bijection from $\mathbb{N}^2$ to $\mathbb{N}$.
%
%
%
That is, $f$ is a pairing function.

Now, for each non-negative integer $i$, $\len_n(i)\leq2\len_{n^2}(i)$ because
\begin{equation*}
\bigl\lceil\,\log_n(i+1)\bigr\rceil=\left\lceil\frac{\log_{n^2}(i+1)}{\log_{n^2}(n)}\right\rceil=\bigl\lceil2\log_{n^2}(i+1)\bigr\rceil\leq2\bigl\lceil\,\log_{n^2}(i+1)\bigr\rceil.
\end{equation*}
%
%
Therefore, by equation~\eqref{discrete-length},
\begin{equation*}
\len_n\bigl(f(x,y)\bigr)\leq2\len_{n^2}\bigl(f(x,y)\bigr)=2\max\bigl(\len_n(x),\len_n(y)\bigr)
\end{equation*}
for all $(x,y)\in\mathbb{N}^2$.  We may then conclude, by Corollary~\ref{alt-perfect}, that $f$ is a base-$n$ perfect pairing function.
\end{proof}

Several examples of base-$n$ discrete space-filling curves can be found in the published literature.\footnote{
Asano et al.~\cite{Asano1997} survey the base-$2$ discrete space-filling curves that often appear in the literature, but only discuss these curves on finite domains.  Chen et al.~\cite{Chen2007} extend the domain of the discrete Hilbert curve to $\mathbb{N}^2$.  Neither paper uses the term ``discrete space-filling curve''.  That term is used by Gotsman and Lindenbaum~\cite{Gotsman1996}, for example.
}  In particular, the discrete Peano curve (Figure~\ref{space-filling-illust}(a))
\begin{figure}
\centering
\begin{picture}(300,525)
%
%
\put(150,378){\makebox(0,0)[t]{\parbox{132pt}{\centering\small(a) The discrete Peano curve}}}
\put(84,388){\begin{picture}(132,137)(-21,-25)
%
%
\multiput(0,0)(0,12){9}{\multiput(0,0)(12,0){9}{\circle*{2}}}
%
\thinlines
\put(-12,-12){\line(1,0){114}}
\put(-12,-12){\line(0,1){114}}
%
\small
\put(105,-12){\makebox(0,0)[l]{$x$}}
\put(-12,105){\makebox(0,0)[b]{$y$}}
%
\thinlines
\multiput(0,-12)(12,0){9}{\line(0,1){3}}
\multiput(-12,0)(0,12){9}{\line(1,0){3}}
%
\small
\put(0,-18){\makebox(0,0)[t]{0}}
\put(12,-18){\makebox(0,0)[t]{1}}
\put(24,-18){\makebox(0,0)[t]{2}}
\put(36,-18){\makebox(0,0)[t]{3}}
\put(48,-18){\makebox(0,0)[t]{4}}
\put(60,-18){\makebox(0,0)[t]{5}}
\put(72,-18){\makebox(0,0)[t]{6}}
\put(84,-18){\makebox(0,0)[t]{7}}
\put(96,-18){\makebox(0,0)[t]{8}}
%
\small
\put(-16,0){\makebox(0,0)[r]{0}}
\put(-16,12){\makebox(0,0)[r]{1}}
\put(-16,24){\makebox(0,0)[r]{2}}
\put(-16,36){\makebox(0,0)[r]{3}}
\put(-16,48){\makebox(0,0)[r]{4}}
\put(-16,60){\makebox(0,0)[r]{5}}
\put(-16,72){\makebox(0,0)[r]{6}}
\put(-16,84){\makebox(0,0)[r]{7}}
\put(-16,96){\makebox(0,0)[r]{8}}
%
\thinlines
\put(0,0){\line(0,1){12}}
\put(0,12){\line(0,1){12}}
\put(0,24){\line(1,0){12}}
\put(12,24){\line(0,-1){12}}
\put(12,12){\line(0,-1){12}}
\put(12,0){\line(1,0){12}}
\put(24,0){\line(0,1){12}}
\put(24,12){\line(0,1){12}}
\put(24,24){\line(0,1){12}}
\put(24,36){\line(0,1){12}}
\put(24,48){\line(0,1){12}}
\put(24,60){\line(-1,0){12}}
\put(12,60){\line(0,-1){12}}
\put(12,48){\line(0,-1){12}}
\put(12,36){\line(-1,0){12}}
\put(0,36){\line(0,1){12}}
\put(0,48){\line(0,1){12}}
\put(0,60){\line(0,1){12}}
\put(0,72){\line(0,1){12}}
\put(0,84){\line(0,1){12}}
\put(0,96){\line(1,0){12}}
\put(12,96){\line(0,-1){12}}
\put(12,84){\line(0,-1){12}}
\put(12,72){\line(1,0){12}}
\put(24,72){\line(0,1){12}}
\put(24,84){\line(0,1){12}}
\put(24,96){\line(1,0){12}}
\put(36,96){\line(0,-1){12}}
\put(36,84){\line(0,-1){12}}
\put(36,72){\line(1,0){12}}
\put(48,72){\line(0,1){12}}
\put(48,84){\line(0,1){12}}
\put(48,96){\line(1,0){12}}
\put(60,96){\line(0,-1){12}}
\put(60,84){\line(0,-1){12}}
\put(60,72){\line(0,-1){12}}
\put(60,60){\line(0,-1){12}}
\put(60,48){\line(0,-1){12}}
\put(60,36){\line(-1,0){12}}
\put(48,36){\line(0,1){12}}
\put(48,48){\line(0,1){12}}
\put(48,60){\line(-1,0){12}}
\put(36,60){\line(0,-1){12}}
\put(36,48){\line(0,-1){12}}
\put(36,36){\line(0,-1){12}}
\put(36,24){\line(0,-1){12}}
\put(36,12){\line(0,-1){12}}
\put(36,0){\line(1,0){12}}
\put(48,0){\line(0,1){12}}
\put(48,12){\line(0,1){12}}
\put(48,24){\line(1,0){12}}
\put(60,24){\line(0,-1){12}}
\put(60,12){\line(0,-1){12}}
\put(60,0){\line(1,0){12}}
\put(72,0){\line(0,1){12}}
\put(72,12){\line(0,1){12}}
\put(72,24){\line(1,0){12}}
\put(84,24){\line(0,-1){12}}
\put(84,12){\line(0,-1){12}}
\put(84,0){\line(1,0){12}}
\put(96,0){\line(0,1){12}}
\put(96,12){\line(0,1){12}}
\put(96,24){\line(0,1){12}}
\put(96,36){\line(0,1){12}}
\put(96,48){\line(0,1){12}}
\put(96,60){\line(-1,0){12}}
\put(84,60){\line(0,-1){12}}
\put(84,48){\line(0,-1){12}}
\put(84,36){\line(-1,0){12}}
\put(72,36){\line(0,1){12}}
\put(72,48){\line(0,1){12}}
\put(72,60){\line(0,1){12}}
\put(72,72){\line(0,1){12}}
\put(72,84){\line(0,1){12}}
\put(72,96){\line(1,0){12}}
\put(84,96){\line(0,-1){12}}
\put(84,84){\line(0,-1){12}}
\put(84,72){\line(1,0){12}}
\put(96,72){\line(0,1){12}}
\put(96,84){\line(0,1){12}}
\end{picture}}
%
\put(60,207){\makebox(0,0)[t]{\parbox{120pt}{\centering\small(b) The discrete Hilbert curve}}}
\put(0,217){\begin{picture}(120,125)(-21,-25)
%
%
\multiput(0,0)(0,12){8}{\multiput(0,0)(12,0){8}{\circle*{2}}}
%
\thinlines
\put(-12,-12){\line(1,0){102}}
\put(-12,-12){\line(0,1){102}}
%
\small
\put(93,-12){\makebox(0,0)[l]{$x$}}
\put(-12,93){\makebox(0,0)[b]{$y$}}
%
\thinlines
\multiput(0,-12)(12,0){8}{\line(0,1){3}}
\multiput(-12,0)(0,12){8}{\line(1,0){3}}
%
\small
\put(0,-18){\makebox(0,0)[t]{0}}
\put(12,-18){\makebox(0,0)[t]{1}}
\put(24,-18){\makebox(0,0)[t]{2}}
\put(36,-18){\makebox(0,0)[t]{3}}
\put(48,-18){\makebox(0,0)[t]{4}}
\put(60,-18){\makebox(0,0)[t]{5}}
\put(72,-18){\makebox(0,0)[t]{6}}
\put(84,-18){\makebox(0,0)[t]{7}}
%
\small
\put(-16,0){\makebox(0,0)[r]{0}}
\put(-16,12){\makebox(0,0)[r]{1}}
\put(-16,24){\makebox(0,0)[r]{2}}
\put(-16,36){\makebox(0,0)[r]{3}}
\put(-16,48){\makebox(0,0)[r]{4}}
\put(-16,60){\makebox(0,0)[r]{5}}
\put(-16,72){\makebox(0,0)[r]{6}}
\put(-16,84){\makebox(0,0)[r]{7}}
%
\thinlines
\put(0,0){\line(0,1){12}}
\put(0,12){\line(1,0){12}}
\put(12,12){\line(0,-1){12}}
\put(12,0){\line(1,0){12}}
\put(24,0){\line(1,0){12}}
\put(36,0){\line(0,1){12}}
\put(36,12){\line(-1,0){12}}
\put(24,12){\line(0,1){12}}
\put(24,24){\line(1,0){12}}
\put(36,24){\line(0,1){12}}
\put(36,36){\line(-1,0){12}}
\put(24,36){\line(-1,0){12}}
\put(12,36){\line(0,-1){12}}
\put(12,24){\line(-1,0){12}}
\put(0,24){\line(0,1){12}}
\put(0,36){\line(0,1){12}}
\put(0,48){\line(1,0){12}}
\put(12,48){\line(0,1){12}}
\put(12,60){\line(-1,0){12}}
\put(0,60){\line(0,1){12}}
\put(0,72){\line(0,1){12}}
\put(0,84){\line(1,0){12}}
\put(12,84){\line(0,-1){12}}
\put(12,72){\line(1,0){12}}
\put(24,72){\line(0,1){12}}
\put(24,84){\line(1,0){12}}
\put(36,84){\line(0,-1){12}}
\put(36,72){\line(0,-1){12}}
\put(36,60){\line(-1,0){12}}
\put(24,60){\line(0,-1){12}}
\put(24,48){\line(1,0){12}}
\put(36,48){\line(1,0){12}}
\put(48,48){\line(1,0){12}}
\put(60,48){\line(0,1){12}}
\put(60,60){\line(-1,0){12}}
\put(48,60){\line(0,1){12}}
\put(48,72){\line(0,1){12}}
\put(48,84){\line(1,0){12}}
\put(60,84){\line(0,-1){12}}
\put(60,72){\line(1,0){12}}
\put(72,72){\line(0,1){12}}
\put(72,84){\line(1,0){12}}
\put(84,84){\line(0,-1){12}}
\put(84,72){\line(0,-1){12}}
\put(84,60){\line(-1,0){12}}
\put(72,60){\line(0,-1){12}}
\put(72,48){\line(1,0){12}}
\put(84,48){\line(0,-1){12}}
\put(84,36){\line(0,-1){12}}
\put(84,24){\line(-1,0){12}}
\put(72,24){\line(0,1){12}}
\put(72,36){\line(-1,0){12}}
\put(60,36){\line(-1,0){12}}
\put(48,36){\line(0,-1){12}}
\put(48,24){\line(1,0){12}}
\put(60,24){\line(0,-1){12}}
\put(60,12){\line(-1,0){12}}
\put(48,12){\line(0,-1){12}}
\put(48,0){\line(1,0){12}}
\put(60,0){\line(1,0){12}}
\put(72,0){\line(0,1){12}}
\put(72,12){\line(1,0){12}}
\put(84,12){\line(0,-1){12}}
\end{picture}}
%
\put(240,207){\makebox(0,0)[t]{\parbox{120pt}{\centering\small(c) The z-order}}}
\put(180,217){\begin{picture}(120,125)(-21,-25)
%
%
\multiput(0,0)(0,12){8}{\multiput(0,0)(12,0){8}{\circle*{2}}}
%
\thinlines
\put(-12,-12){\line(1,0){102}}
\put(-12,-12){\line(0,1){102}}
%
\small
\put(93,-12){\makebox(0,0)[l]{$x$}}
\put(-12,93){\makebox(0,0)[b]{$y$}}
%
\thinlines
\multiput(0,-12)(12,0){8}{\line(0,1){3}}
\multiput(-12,0)(0,12){8}{\line(1,0){3}}
%
\small
\put(0,-18){\makebox(0,0)[t]{0}}
\put(12,-18){\makebox(0,0)[t]{1}}
\put(24,-18){\makebox(0,0)[t]{2}}
\put(36,-18){\makebox(0,0)[t]{3}}
\put(48,-18){\makebox(0,0)[t]{4}}
\put(60,-18){\makebox(0,0)[t]{5}}
\put(72,-18){\makebox(0,0)[t]{6}}
\put(84,-18){\makebox(0,0)[t]{7}}
%
\small
\put(-16,0){\makebox(0,0)[r]{0}}
\put(-16,12){\makebox(0,0)[r]{1}}
\put(-16,24){\makebox(0,0)[r]{2}}
\put(-16,36){\makebox(0,0)[r]{3}}
\put(-16,48){\makebox(0,0)[r]{4}}
\put(-16,60){\makebox(0,0)[r]{5}}
\put(-16,72){\makebox(0,0)[r]{6}}
\put(-16,84){\makebox(0,0)[r]{7}}
%
\thinlines
\put(0,0){\line(0,1){12}}
\put(0,12){\line(1,-1){12}}
\put(12,0){\line(0,1){12}}
\put(12,12){\line(-1,1){12}}
\put(0,24){\line(0,1){12}}
\put(0,36){\line(1,-1){12}}
\put(12,24){\line(0,1){12}}
\put(12,36){\line(1,-3){12}}
\put(24,0){\line(0,1){12}}
\put(24,12){\line(1,-1){12}}
\put(36,0){\line(0,1){12}}
\put(36,12){\line(-1,1){12}}
\put(24,24){\line(0,1){12}}
\put(24,36){\line(1,-1){12}}
\put(36,24){\line(0,1){12}}
\put(36,36){\line(-3,1){36}}
\put(0,48){\line(0,1){12}}
\put(0,60){\line(1,-1){12}}
\put(12,48){\line(0,1){12}}
\put(12,60){\line(-1,1){12}}
\put(0,72){\line(0,1){12}}
\put(0,84){\line(1,-1){12}}
\put(12,72){\line(0,1){12}}
\put(12,84){\line(1,-3){12}}
\put(24,48){\line(0,1){12}}
\put(24,60){\line(1,-1){12}}
\put(36,48){\line(0,1){12}}
\put(36,60){\line(-1,1){12}}
\put(24,72){\line(0,1){12}}
\put(24,84){\line(1,-1){12}}
\put(36,72){\line(0,1){12}}
\qbezier(36,84)(42,42)(48,0)
\put(48,0){\line(0,1){12}}
\put(48,12){\line(1,-1){12}}
\put(60,0){\line(0,1){12}}
\put(60,12){\line(-1,1){12}}
\put(48,24){\line(0,1){12}}
\put(48,36){\line(1,-1){12}}
\put(60,24){\line(0,1){12}}
\put(60,36){\line(1,-3){12}}
\put(72,0){\line(0,1){12}}
\put(72,12){\line(1,-1){12}}
\put(84,0){\line(0,1){12}}
\put(84,12){\line(-1,1){12}}
\put(72,24){\line(0,1){12}}
\put(72,36){\line(1,-1){12}}
\put(84,24){\line(0,1){12}}
\put(84,36){\line(-3,1){36}}
\put(48,48){\line(0,1){12}}
\put(48,60){\line(1,-1){12}}
\put(60,48){\line(0,1){12}}
\put(60,60){\line(-1,1){12}}
\put(48,72){\line(0,1){12}}
\put(48,84){\line(1,-1){12}}
\put(60,72){\line(0,1){12}}
\put(60,84){\line(1,-3){12}}
\put(72,48){\line(0,1){12}}
\put(72,60){\line(1,-1){12}}
\put(84,48){\line(0,1){12}}
\put(84,60){\line(-1,1){12}}
\put(72,72){\line(0,1){12}}
\put(72,84){\line(1,-1){12}}
\put(84,72){\line(0,1){12}}
\end{picture}}
%
\put(60,36){\makebox(0,0)[t]{\parbox{120pt}{\centering\small(d) The Gray-coded curve}}}
\put(0,46){\begin{picture}(120,125)(-21,-25)
%
%
\multiput(0,0)(0,12){8}{\multiput(0,0)(12,0){8}{\circle*{2}}}
%
\thinlines
\put(-12,-12){\line(1,0){102}}
\put(-12,-12){\line(0,1){102}}
%
\small
\put(93,-12){\makebox(0,0)[l]{$x$}}
\put(-12,93){\makebox(0,0)[b]{$y$}}
%
\thinlines
\multiput(0,-12)(12,0){8}{\line(0,1){3}}
\multiput(-12,0)(0,12){8}{\line(1,0){3}}
%
\small
\put(0,-18){\makebox(0,0)[t]{0}}
\put(12,-18){\makebox(0,0)[t]{1}}
\put(24,-18){\makebox(0,0)[t]{2}}
\put(36,-18){\makebox(0,0)[t]{3}}
\put(48,-18){\makebox(0,0)[t]{4}}
\put(60,-18){\makebox(0,0)[t]{5}}
\put(72,-18){\makebox(0,0)[t]{6}}
\put(84,-18){\makebox(0,0)[t]{7}}
%
\small
\put(-16,0){\makebox(0,0)[r]{0}}
\put(-16,12){\makebox(0,0)[r]{1}}
\put(-16,24){\makebox(0,0)[r]{2}}
\put(-16,36){\makebox(0,0)[r]{3}}
\put(-16,48){\makebox(0,0)[r]{4}}
\put(-16,60){\makebox(0,0)[r]{5}}
\put(-16,72){\makebox(0,0)[r]{6}}
\put(-16,84){\makebox(0,0)[r]{7}}
%
\thinlines
\put(0,0){\line(0,1){12}}
\put(0,12){\line(1,0){12}}
\put(12,12){\line(0,-1){12}}
\qbezier(12,0)(20,18)(12,36)
\put(12,36){\line(0,-1){12}}
\put(12,24){\line(-1,0){12}}
\put(0,24){\line(0,1){12}}
\qbezier(0,36)(18,44)(36,36)
\put(36,36){\line(0,-1){12}}
\put(36,24){\line(-1,0){12}}
\put(24,24){\line(0,1){12}}
\qbezier(24,36)(16,18)(24,0)
\put(24,0){\line(0,1){12}}
\put(24,12){\line(1,0){12}}
\put(36,12){\line(0,-1){12}}
\qbezier(36,0)(44,42)(36,84)
\put(36,84){\line(0,-1){12}}
\put(36,72){\line(-1,0){12}}
\put(24,72){\line(0,1){12}}
\qbezier(24,84)(16,66)(24,48)
\put(24,48){\line(0,1){12}}
\put(24,60){\line(1,0){12}}
\put(36,60){\line(0,-1){12}}
\qbezier(36,48)(18,40)(0,48)
\put(0,48){\line(0,1){12}}
\put(0,60){\line(1,0){12}}
\put(12,60){\line(0,-1){12}}
\qbezier(12,48)(20,66)(12,84)
\put(12,84){\line(0,-1){12}}
\put(12,72){\line(-1,0){12}}
\put(0,72){\line(0,1){12}}
\qbezier(0,84)(42,92)(84,84)
\put(84,84){\line(0,-1){12}}
\put(84,72){\line(-1,0){12}}
\put(72,72){\line(0,1){12}}
\qbezier(72,84)(64,66)(72,48)
\put(72,48){\line(0,1){12}}
\put(72,60){\line(1,0){12}}
\put(84,60){\line(0,-1){12}}
\qbezier(84,48)(66,40)(48,48)
\put(48,48){\line(0,1){12}}
\put(48,60){\line(1,0){12}}
\put(60,60){\line(0,-1){12}}
\qbezier(60,48)(68,66)(60,84)
\put(60,84){\line(0,-1){12}}
\put(60,72){\line(-1,0){12}}
\put(48,72){\line(0,1){12}}
\qbezier(48,84)(40,42)(48,0)
\put(48,0){\line(0,1){12}}
\put(48,12){\line(1,0){12}}
\put(60,12){\line(0,-1){12}}
\qbezier(60,0)(68,18)(60,36)
\put(60,36){\line(0,-1){12}}
\put(60,24){\line(-1,0){12}}
\put(48,24){\line(0,1){12}}
\qbezier(48,36)(66,44)(84,36)
\put(84,36){\line(0,-1){12}}
\put(84,24){\line(-1,0){12}}
\put(72,24){\line(0,1){12}}
\qbezier(72,36)(64,18)(72,0)
\put(72,0){\line(0,1){12}}
\put(72,12){\line(1,0){12}}
\put(84,12){\line(0,-1){12}}
\end{picture}}
%
\put(240,36){\makebox(0,0)[t]{\parbox{120pt}{\centering\small(e) A base-$2$ discrete space-filling curve}}}
\put(180,46){\begin{picture}(120,125)(-21,-25)
%
%
\multiput(0,0)(0,12){8}{\multiput(0,0)(12,0){8}{\circle*{2}}}
%
\thinlines
\put(-12,-12){\line(1,0){102}}
\put(-12,-12){\line(0,1){102}}
%
\small
\put(93,-12){\makebox(0,0)[l]{$x$}}
\put(-12,93){\makebox(0,0)[b]{$y$}}
%
\thinlines
\multiput(0,-12)(12,0){8}{\line(0,1){3}}
\multiput(-12,0)(0,12){8}{\line(1,0){3}}
%
\small
\put(0,-18){\makebox(0,0)[t]{0}}
\put(12,-18){\makebox(0,0)[t]{1}}
\put(24,-18){\makebox(0,0)[t]{2}}
\put(36,-18){\makebox(0,0)[t]{3}}
\put(48,-18){\makebox(0,0)[t]{4}}
\put(60,-18){\makebox(0,0)[t]{5}}
\put(72,-18){\makebox(0,0)[t]{6}}
\put(84,-18){\makebox(0,0)[t]{7}}
%
\small
\put(-16,0){\makebox(0,0)[r]{0}}
\put(-16,12){\makebox(0,0)[r]{1}}
\put(-16,24){\makebox(0,0)[r]{2}}
\put(-16,36){\makebox(0,0)[r]{3}}
\put(-16,48){\makebox(0,0)[r]{4}}
\put(-16,60){\makebox(0,0)[r]{5}}
\put(-16,72){\makebox(0,0)[r]{6}}
\put(-16,84){\makebox(0,0)[r]{7}}
%
\thinlines
\put(0,0){\line(0,1){12}}
\put(0,12){\line(1,-1){12}}
\put(12,0){\line(0,1){12}}
\put(12,12){\line(1,1){12}}
\put(24,24){\line(1,1){12}}
\put(36,36){\line(0,-1){12}}
\put(36,24){\line(-1,1){12}}
\put(24,36){\line(-1,0){12}}
\put(12,36){\line(-1,-1){12}}
\put(0,24){\line(0,1){12}}
\put(0,36){\line(1,-1){12}}
\put(12,24){\line(1,-1){12}}
\put(24,12){\line(0,-1){12}}
\put(24,0){\line(1,1){12}}
\put(36,12){\line(0,-1){12}}
\put(36,0){\line(1,0){12}}
\put(48,0){\line(0,1){12}}
\put(48,12){\line(1,0){12}}
\put(60,12){\line(0,-1){12}}
\put(60,0){\line(1,0){12}}
\put(72,0){\line(1,0){12}}
\put(84,0){\line(0,1){12}}
\put(84,12){\line(-1,0){12}}
\put(72,12){\line(-1,1){12}}
\put(60,24){\line(-1,0){12}}
\put(48,24){\line(0,1){12}}
\put(48,36){\line(1,0){12}}
\put(60,36){\line(1,0){12}}
\put(72,36){\line(0,-1){12}}
\put(72,24){\line(1,0){12}}
\put(84,24){\line(0,1){12}}
\put(84,36){\line(0,1){12}}
\put(84,48){\line(0,1){12}}
\put(84,60){\line(-1,0){12}}
\put(72,60){\line(0,-1){12}}
\put(72,48){\line(-1,0){12}}
\put(60,48){\line(-1,0){12}}
\put(48,48){\line(0,1){12}}
\put(48,60){\line(1,0){12}}
\put(60,60){\line(1,1){12}}
\put(72,72){\line(1,0){12}}
\put(84,72){\line(0,1){12}}
\put(84,84){\line(-1,0){12}}
\put(72,84){\line(-1,0){12}}
\put(60,84){\line(0,-1){12}}
\put(60,72){\line(-1,0){12}}
\put(48,72){\line(0,1){12}}
\put(48,84){\line(-1,0){12}}
\put(36,84){\line(0,-1){12}}
\put(36,72){\line(-1,1){12}}
\put(24,84){\line(0,-1){12}}
\put(24,72){\line(-1,-1){12}}
\put(12,60){\line(-1,-1){12}}
\put(0,48){\line(0,1){12}}
\put(0,60){\line(1,-1){12}}
\put(12,48){\line(1,0){12}}
\put(24,48){\line(1,1){12}}
\put(36,60){\line(0,-1){12}}
\put(36,48){\line(-1,1){12}}
\put(24,60){\line(-1,1){12}}
\put(12,72){\line(0,1){12}}
\put(12,84){\line(-1,-1){12}}
\put(0,72){\line(0,1){12}}
\end{picture}}
\end{picture}
\caption{Base-$n$ discrete space-filling curves.  For each pairing function $f\colon\mathbb{N}^2\to\mathbb{N}$, points that occur consecutively in the sequence $f^{-1}(0)$, $f^{-1}(1)$, $f^{-1}(2)$, \ldots\ are joined with a line segment or arc.}
\label{space-filling-illust}
\end{figure}
is the base-$3$ discrete space-filling curve that is determined by the permutations
\begin{gather*}
\tau=\left(\begin{matrix}
0 & 1 & 2 & 3 & 4 & 5 & 6 & 7 & 8\\
0 & 1 & 2 & 5 & 4 & 3 & 6 & 7 & 8
\end{matrix}\right),\qquad\sigma_4=\left(\begin{matrix}
0 & 1 & 2 & 3 & 4 & 5 & 6 & 7 & 8\\
8 & 7 & 6 & 5 & 4 & 3 & 2 & 1 & 0
\end{matrix}\right),\\
\sigma_0=\sigma_2=\sigma_6=\sigma_8=I,\qquad\sigma_1=\sigma_7=\left(\begin{matrix}
0 & 1 & 2 & 3 & 4 & 5 & 6 & 7 & 8\\
6 & 7 & 8 & 3 & 4 & 5 & 0 & 1 & 2
\end{matrix}\right),\\
\sigma_3=\sigma_5=\left(\begin{matrix}
0 & 1 & 2 & 3 & 4 & 5 & 6 & 7 & 8\\
2 & 1 & 0 & 5 & 4 & 3 & 8 & 7 & 6
\end{matrix}\right).
\end{gather*}
And the \emph{discrete Hilbert curve} (Figure~\ref{space-filling-illust}(b)) is the base-$2$ discrete space-filling curve defined by the permutations $\sigma_1=\sigma_2=I$ and
\begin{equation*}
\tau=\left(\begin{matrix}
0 & 1 & 2 & 3\\
0 & 1 & 3 & 2
\end{matrix}\right),\qquad\sigma_0=\left(\begin{matrix}
0 & 1 & 2 & 3\\
0 & 3 & 2 & 1
\end{matrix}\right),\qquad\sigma_3=\left(\begin{matrix}
0 & 1 & 2 & 3\\
2 & 1 & 0 & 3
\end{matrix}\right).
\end{equation*}
If $F\colon[0,1]\to[0,1]^2$ is the continuous Hilbert curve, then the discrete Hilbert curve is the unique pairing function $f\colon\mathbb{N}^2\to\mathbb{N}$ such that
\begin{equation*}
F\left(\frac{z+1/2}{4^{2k+1}}\right)=2^{-(2k+1)}\Bigl(f^{-1}(z)+\bigl(\tfrac{1}{2},\tfrac{1}{2}\bigl)\Bigr)
\end{equation*}
for all non-negative integers $z$ and $k$ such that $z<4^{2k+1}$.
%
%
Other well-known examples of base-$2$ discrete space-filling curves include the \emph{z-order} (Figure~\ref{space-filling-illust}(c)), which is defined by the permutations $\tau=\sigma_0=\sigma_1=\sigma_2=\sigma_3=I$, and the \emph{Gray-coded curve} (Figure~\ref{space-filling-illust}(d)), given by the permutations
\begin{equation*}
\tau=\left(\begin{matrix}
0 & 1 & 2 & 3\\
0 & 1 & 3 & 2
\end{matrix}\right),\qquad\sigma_0=\sigma_3=I,\qquad\sigma_1=\sigma_2=\left(\begin{matrix}
0 & 1 & 2 & 3\\
2 & 3 & 0 & 1
\end{matrix}\right).
\end{equation*}

In all of these examples, and for each $i\in\{0,1,\ldots,n^2-1\}$, the function $\delta_n^{-1}\circ\tau^{-1}\circ\sigma_i^{-1}\circ\tau\circ\delta_n$ happens to be a geometric isometry from $\{0,1,\ldots,n-1\}^2$ to
%
%
$\{0,1,\ldots,\linebreak[0]n-1\}^2$.
%
%
These isometries are closely related to the similarity transformations~\cite{Sagan1994} that are often used to construct space-filling curves.  But Definition~\ref{discrete} allows any permutations $\tau$, $\sigma_0$, $\sigma_1$, \ldots, $\sigma_{n^2-1}$ in $S_n$ that satisfy the equations $\tau(0)=0$ and $\sigma_0(0)=0$ to determine a base-$n$ discrete space-filling curve, including those permutations where $\delta_n^{-1}\circ\tau^{-1}\circ\sigma_i^{-1}\circ\tau\circ\delta_n$ is not an isometry.  For example, a base-$2$ discrete space-filling curve where $\delta_n^{-1}\circ\tau^{-1}\circ\sigma_i^{-1}\circ\tau\circ\delta_n$ is not an isometry for any $i\in\{0,1,2,3\}$ is defined by the permutations
\begin{gather*}
\tau=I,\qquad\sigma_0=\left(\begin{matrix}
0 & 1 & 2 & 3\\
0 & 3 & 1 & 2
\end{matrix}\right),\qquad\sigma_1=\left(\begin{matrix}
0 & 1 & 2 & 3\\
0 & 1 & 3 & 2
\end{matrix}\right),\\
\qquad\sigma_2=\left(\begin{matrix}
0 & 1 & 2 & 3\\
1 & 0 & 2 & 3
\end{matrix}\right),\qquad\sigma_3=\left(\begin{matrix}
0 & 1 & 2 & 3\\
1 & 2 & 0 & 3
\end{matrix}\right).
\end{gather*}
This function is illustrated in Figure~\ref{space-filling-illust}(e).

The definition of a base-$n$ discrete space-filling curve can also be extended to higher dimensions.  In three dimensions we have the following.

\begin{definition}
Let $n$ be any integer greater than $1$.  A function $f\colon\mathbb{N}^3\to\mathbb{N}$ is said to be a \emph{$3$-dimensional base-$n$ discrete space-filling curve} if and only if there exist permutations $\tau$, $\sigma_0$, $\sigma_1$, \ldots, $\sigma_{n^3-1}$ in $S_{n^3}$ such that the following three conditions hold.
\begin{enumerate}
\item[(a)] $\tau(0)=0$ and $\sigma_0(0)=0$.
\item[(b)] $f(0,0,0)=0$.
\item[(c)] For all non-negative integers $w$, $x$, $y$, and $z$ such that $(w,x,y)\ne(0,0,0)$, $f(w,x,y)=z$ if and only if
\begin{equation*}
\begin{aligned}
\sigma_0^{-(m-1)}\circ\tau\bigl(n^2w_{m-1}+nx_{m-1}+y_{m-1}\bigr)&=z_{m-1},\\
\sigma_{z_{m-1}}\circ\sigma_0^{-(m-1)}\circ\tau\bigl(n^2w_{m-2}+nx_{m-2}+y_{m-2}\bigr)&=z_{m-2},\\
\sigma_{z_{m-2}}\circ\sigma_{z_{m-1}}\circ\sigma_0^{-(m-1)}\circ\tau\bigl(n^2w_{m-3}+nx_{m-3}+y_{m-3}\bigr)&=z_{m-3},\\
&\;\;\vdots\\
\sigma_{z_1}\circ\cdots\circ\sigma_{z_{m-2}}\circ\sigma_{z_{m-1}}\circ\sigma_0^{-(m-1)}\circ\tau\bigl(n^2w_0+nx_0+y_0\bigr)&=z_0,
\end{aligned}
\end{equation*}
where $m=\max\bigl(\len_n(w),\len_n(x),\len_n(y)\bigr)$, where
\begin{align*}
w&=w_{m-1}n^{m-1}+w_{m-2}n^{m-2}+\cdots+w_0n^0,\\
x&=x_{m-1}n^{m-1}+x_{m-2}n^{m-2}+\cdots+x_0n^0,\\
y&=y_{m-1}n^{m-1}+y_{m-2}n^{m-2}+\cdots+y_0n^0
\end{align*}
are base-$n$ expansions of $w$, $x$ and $y$, and where
\begin{equation*}
z=z_{m-1}\bigl(n^3\bigr)^{m-1}+z_{m-2}\bigl(n^3\bigr)^{m-2}+\cdots+z_0\bigl(n^3\bigr)^0
\end{equation*}
is a base-$n^3$ expansion of $z$.
\end{enumerate}
\end{definition}

Generalizing the proof of Theorem~\ref{discrete-perfect}, it can be shown that every $3$-dimensional base-$n$ discrete space-filling curve, where $n$ is an integer greater than $1$, is a base-$n$ perfect $3$-tupling function.  Many examples of these functions appear in the published literature.  Most notably, Alber and Niedermeier~\cite{Alber2000} have identified $3!\cdot1536=9216$ distinct $3$-dimensional base-$2$ discrete space-filling curves that generalize the discrete Hilbert curve.  One such generalization is given by the permutations
\begin{gather*}
\tau=\left(\begin{matrix}
0 & 1 & 2 & 3 & 4 & 5 & 6 & 7\\
0 & 1 & 3 & 2 & 7 & 6 & 4 & 5
\end{matrix}\right),\qquad\sigma_1=\sigma_2=\left(\begin{matrix}
0 & 1 & 2 & 3 & 4 & 5 & 6 & 7\\
0 & 1 & 6 & 7 & 4 & 5 & 2 & 3
\end{matrix}\right),\\
\sigma_0=\left(\begin{matrix}
0 & 1 & 2 & 3 & 4 & 5 & 6 & 7\\
0 & 7 & 4 & 3 & 2 & 5 & 6 & 1
\end{matrix}\right),\qquad\sigma_3=\sigma_4=\left(\begin{matrix}
0 & 1 & 2 & 3 & 4 & 5 & 6 & 7\\
2 & 3 & 0 & 1 & 6 & 7 & 4 & 5
\end{matrix}\right),\\
\sigma_5=\sigma_6=\left(\begin{matrix}
0 & 1 & 2 & 3 & 4 & 5 & 6 & 7\\
4 & 5 & 2 & 3 & 0 & 1 & 6 & 7
\end{matrix}\right),\qquad\sigma_7=\left(\begin{matrix}
0 & 1 & 2 & 3 & 4 & 5 & 6 & 7\\
6 & 1 & 2 & 5 & 4 & 3 & 0 & 7
\end{matrix}\right).
\end{gather*}

\section{A Technique For Constructing Pairing Functions}\label{technique}

In this section we describe a technique for constructing a pairing function from a non-decreasing unbounded function.  Let $g$ be any function from $\mathbb{N}$ to $\mathbb{N}$.  By definition, $g$ is \emph{non-decreasing} if and only if
\begin{equation*}
\forall x,y\in\mathbb{N}\,\bigl(\,x\leq y\quad\text{implies}\quad g(x)\leq g(y)\,\bigr).
\end{equation*}
And $g$ is \emph{unbounded} if and only if
\begin{equation*}
\forall y\in\mathbb{N}\;\;\exists x\in\mathbb{N}\,\bigl(\,g(x)\geq y\,\bigr).
\end{equation*}
If $g$ is unbounded, we define $g^+\colon\mathbb{N}\to\mathbb{N}$ so that $g^+(y)=x$ if and only if
\begin{equation}\label{pseudo-inv}
g(x)\geq y\quad\And\quad\forall u\in\mathbb{N}\,\bigl(\,g(u)\geq y\quad\text{implies}\quad u\geq x\,\bigr).
\end{equation}
That is, we define $g^+(y)$ to be the smallest $x\in\mathbb{N}$ such that $g(x)\geq y$.  If $g$ is non-decreasing and unbounded, this definition implies that $g^+$ is a pseudo-inverse of $g$, in the sense that $g\bigl(g^+(g(x))\bigr)=g(x)$ for all $x\in\mathbb{N}$.
%
%
%

\begin{lemma}\label{unbounded}
Let $g\colon\mathbb{N}\to\mathbb{N}$ be any unbounded function.  Then $g^+$ is unbounded.
\end{lemma}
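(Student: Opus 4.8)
The plan is to unwind the definition of $g^+$ and reduce the statement to an elementary remark about maxima of finite sets. First note that, since $g$ is unbounded, for every $x\in\mathbb{N}$ the set $\{u\in\mathbb{N}:g(u)\geq x\}$ is nonempty, so by the well-ordering of $\mathbb{N}$ it has a least element; this least element is precisely $g^+(x)$ by~\eqref{pseudo-inv}. Hence $g^+$ is a well-defined total function on $\mathbb{N}$, and to show it is unbounded I must produce, for each given $y\in\mathbb{N}$, some $x\in\mathbb{N}$ with $g^+(x)\geq y$.

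I would first dispose of the trivial case $y=0$ by taking $x=0$, since $g^+(0)\geq0$ automatically. For $y>0$, I would set $x=1+\max\bigl(g(0),g(1),\ldots,g(y-1)\bigr)$, which is a well-defined non-negative integer because it is the maximum of a nonempty finite set of non-negative integers. By construction $g(u)<x$ for every $u\in\{0,1,\ldots,y-1\}$, so no $u<y$ satisfies $g(u)\geq x$. Now I would invoke~\eqref{pseudo-inv}: $g^+(x)$ is the smallest $u\in\mathbb{N}$ with $g(u)\geq x$, and since every such $u$ must satisfy $u\geq y$, it follows that $g^+(x)\geq y$. As $y$ was arbitrary, $g^+$ is unbounded.

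There is essentially no obstacle here; the only points that require a moment's care are checking that $g^+$ is genuinely defined at the chosen argument $x$ — which is exactly where the unboundedness hypothesis on $g$ (as opposed to monotonicity) is used — and handling the $y=0$ edge case separately so that the maximum in the main case is taken over a nonempty set.
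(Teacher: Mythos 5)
Your proof is correct and takes essentially the same approach as the paper: both choose the argument of $g^+$ to be one more than the maximum of $g$ over an initial segment of $\mathbb{N}$, and conclude that the least preimage witnessing $g^+$ must lie beyond that segment. The paper avoids your $y=0$ case split by taking the maximum over $g(0),g(1),\ldots,g(y)$ (a nonempty list for every $y$) and concluding $g^+(m+1)>y$ directly.
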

\begin{proof}
Consider any $y\in\mathbb{N}$.  Let $m=\max\bigl(g(0),g(1),\ldots,g(y)\bigr)$ and let $z=g^+(m+1)$.  By definition, $z$ is the smallest non-negative integer such that $g(z)\geq m+1$.  But $g(0)$, $g(1)$, \ldots, $g(y)$ are all strictly less than $m+1$ because
\begin{equation*}
\max\bigl(g(0),g(1),\ldots,g(y)\bigr)=m<m+1.
\end{equation*}
Hence, it must be the case that $z>y$.  That is, $g^+(m+1)=z>y$.  We have shown that for every $y\in\mathbb{N}$ there exists an $x\in\mathbb{N}$, namely $x=m+1$, such that $g^+(x)\geq y$.
\end{proof}

Given any unbounded function $g\colon\mathbb{N}\to\mathbb{N}$, let $S=\bigl\{g^+(y):y\in\mathbb{N}\bigr\}$ denote the range of $g^+$.  We say that $S$ is the set of \emph{step points} of $g$.  Conventionally, we order the members of $S$ in the sequence $s_0$, $s_1$, $s_2$, \ldots\ where
\begin{equation*}
s_0<s_1<s_2<\cdots.
\end{equation*}
We say that this is the \emph{sequence of step points} of $g$.  This definition implies that
\begin{equation}\label{s0}
s_0=0,
\end{equation}
since $g^+(0)=0$ for every unbounded function $g\colon\mathbb{N}\to\mathbb{N}$.  And by Lemma~\ref{unbounded}, each unbounded function $g\colon\mathbb{N}\to\mathbb{N}$ has infinitely many step points.  

\begin{lemma}\label{step-points}
Let $g\colon\mathbb{N}\to\mathbb{N}$ be any non-decreasing unbounded function, and let $s_0$, $s_1$, $s_2$, \ldots\ be the sequence of step points of $g$.  Then,
\begin{equation*}
\forall x,k\in\mathbb{N}\,\bigl(\,x<s_{k+1}\quad\text{implies}\quad g(x)\leq g(s_k)\,\bigr).
\end{equation*}
\end{lemma}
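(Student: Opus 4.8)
The plan is to argue by contradiction, exhibiting a step point that lies strictly between $s_k$ and $s_{k+1}$ whenever the conclusion fails. So fix $x,k\in\mathbb{N}$ with $x<s_{k+1}$ (the index $k+1$ is legitimate, since by Lemma~\ref{unbounded} every unbounded $g$ has infinitely many step points), and suppose toward a contradiction that $g(x)>g(s_k)$.

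First I would dispose of the easy regime using monotonicity alone: if $x\leq s_k$, then $g(x)\leq g(s_k)$ because $g$ is non-decreasing, contradicting the assumption. Hence $s_k<x<s_{k+1}$. Now set $y=g(x)$ and examine $g^+(y)$, the least $u\in\mathbb{N}$ with $g(u)\geq y$. Since $g(x)=y$, the integer $x$ satisfies $g(x)\geq y$, so $g^+(y)\leq x$; and since for every $u\leq s_k$ we have $g(u)\leq g(s_k)<y$ by monotonicity together with our assumption, no integer $\leq s_k$ qualifies, so $g^+(y)>s_k$. Thus $g^+(y)$ is a step point of $g$ with $s_k<g^+(y)\leq x<s_{k+1}$.

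This is the contradiction: because $s_0<s_1<s_2<\cdots$ enumerates every step point in strictly increasing order, the step point $g^+(y)$ equals $s_j$ for some $j$, and $s_j>s_k$ forces $j\geq k+1$, hence $g^+(y)=s_j\geq s_{k+1}$ — incompatible with $g^+(y)<s_{k+1}$. I do not expect a genuine obstacle here; the one point requiring care is that the strict inequality $g(x)>g(s_k)$, rather than merely $g(x)\neq g(s_k)$, is exactly what pushes $g^+(y)$ strictly past $s_k$, and the only mildly clever move is recognizing that $g^+(g(x))$ is the step point to trap between $s_k$ and $s_{k+1}$.
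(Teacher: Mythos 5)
Your proof is correct and follows essentially the same route as the paper: both arguments hinge on the step point $g^+(g(x))$, showing it exceeds $s_k$ (via monotonicity and the assumed strict inequality) while being at most $x$, and then invoking the increasing enumeration of step points. The paper phrases this as a contrapositive ($g(x)>g(s_k)$ implies $x\geq s_{k+1}$) rather than a contradiction, but the content is identical.
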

\begin{proof}
Consider any non-negative integers $x$ and $k$.  We will prove the contrapositive that
\begin{equation*}
g(x)> g(s_k)\quad\text{implies}\quad x\geq s_{k+1}.
\end{equation*}
So, suppose that
\begin{equation}\label{step-inequality}
g(x)> g(s_k)
\end{equation}
and let $g^+\bigl(g(x)\bigr)=z$.  Because $z$ is in the range of $g^+$, $z$ must be a step point of $g$.  Furthermore, by condition~\eqref{pseudo-inv},
\begin{equation}\label{step-pseudo-inv}
g(z)\geq g(x)\quad\And\quad\forall u\in\mathbb{N}\,\bigl(\,g(u)\geq g(x)\quad\text{implies}\quad u\geq z\,\bigr).
\end{equation}
Combining conditions~\eqref{step-pseudo-inv} and~\eqref{step-inequality},
\begin{equation*}
g(z)\geq g(x)>g(s_k).
\end{equation*}
But $g$ is non-decreasing, so this implies that $z>s_k$.  We have shown that the step point $z$ is larger than $s_k$.  Therefore, it must be the case that
\begin{equation*}
z\geq s_{k+1}.
\end{equation*}
But substituting $u=x$ into condition~\eqref{step-pseudo-inv}, we have
\begin{equation*}
g(x)\geq g(x)\quad\text{implies}\quad x\geq z.
\end{equation*}
Therefore, $x\geq z\geq s_{k+1}$.
\end{proof}

\begin{definition}\label{a-b-def}
Let $g\colon\mathbb{N}\to\mathbb{N}$ be any unbounded function, and let $s_0$, $s_1$, $s_2$, \ldots\ be the sequence of step points of $g$.  For each $k\in\mathbb{N}$, define
\begin{equation*}
A_k=\bigl\{\,(x,y)\in\mathbb{N}^2\;:\;x<s_{k+1}\;\;\And\;\;y\leq g(s_k)\,\bigr\}
\end{equation*}
and
\begin{equation*}
B_k=\bigl\{0,1,2,\ldots,s_{k+1}\bigl(g(s_k)+1\bigr)-1\bigr\}.
\end{equation*}
\end{definition}

\begin{definition}\label{phi-def}
Given any non-decreasing unbounded function $g\colon\mathbb{N}\to\mathbb{N}$, let $\phi_g\colon\mathbb{N}^2\to\mathbb{N}$ be defined\footnote{
Note that $y>g(x)$ can be replaced with $x<g^+(y)$ in this definition, since
\begin{equation*}
y>g(x)\quad\text{if and only if}\quad x<g^+(y).
\end{equation*}
%
%
} so that for all non-negative integers $x$ and $y$,
\begin{equation*}
\phi_g(x,y)=\begin{cases}
y\cdot g^+(y)+x &\text{\quad if $y>g(x)$}\\
x\bigl(g(x)+1\bigr)+y &\text{\quad otherwise}\rule{0pt}{15pt}
\end{cases}.
\end{equation*}
\end{definition}

We will prove that $\phi_g(x,y)$ is a pairing function in Theorem~\ref{phi-pairing}.  The function is illustrated in Figure~\ref{phi-illust}.
\begin{figure}
\centering
\begin{picture}(243,184)(-25,-29)
%
%
\multiput(0,0)(0,32){5}{\multiput(0,0)(32,0){7}{\circle*{4}}}
%
\thinlines
\put(-16,-16){\line(1,0){224}}
\put(-16,-16){\line(0,1){160}}
%
\put(212,-16){\makebox(0,0)[l]{$x$}}
\put(-16,148){\makebox(0,0)[b]{$y$}}
%
\thinlines
\multiput(0,-16)(32,0){7}{\line(0,1){3}}
\multiput(-16,0)(0,32){5}{\line(1,0){3}}
%
\put(0,-22){\makebox(0,0)[t]{0}}
\put(32,-22){\makebox(0,0)[t]{1}}
\put(64,-22){\makebox(0,0)[t]{2}}
\put(96,-22){\makebox(0,0)[t]{3}}
\put(128,-22){\makebox(0,0)[t]{4}}
\put(160,-22){\makebox(0,0)[t]{5}}
\put(192,-22){\makebox(0,0)[t]{6}}
%
\put(-20,0){\makebox(0,0)[r]{0}}
\put(-20,32){\makebox(0,0)[r]{1}}
\put(-20,64){\makebox(0,0)[r]{2}}
\put(-20,96){\makebox(0,0)[r]{3}}
\put(-20,128){\makebox(0,0)[r]{4}}
%
\multiput(3.816,3.816)(1.875,1.875){14}{\circle*{1}}
\multiput(37.396,32.000)(2.651,0.000){9}{\circle*{1}}
\multiput(69.396,32.000)(2.651,0.000){9}{\circle*{1}}
\multiput(98.366,36.732)(1.186,2.371){24}{\circle*{1}}
\multiput(131.816,99.816)(1.875,1.875){14}{\circle*{1}}
\multiput(165.396,128.000)(2.651,0.000){9}{\circle*{1}}
%
\thicklines
\put(32,5){\line(0,1){22}}
\put(64,5){\line(0,1){22}}
\put(96,5){\line(0,1){22}}
\put(5,64){\line(1,0){22}}
\put(37,64){\line(1,0){22}}
\put(69,64){\line(1,0){22}}
\put(5,96){\line(1,0){22}}
\put(37,96){\line(1,0){22}}
\put(69,96){\line(1,0){22}}
\put(128,5){\line(0,1){22}}
\put(128,37){\line(0,1){22}}
\put(128,69){\line(0,1){22}}
\put(5,128){\line(1,0){22}}
\put(37,128){\line(1,0){22}}
\put(69,128){\line(1,0){22}}
\put(101,128){\line(1,0){22}}
\put(160,5){\line(0,1){22}}
\put(160,37){\line(0,1){22}}
\put(160,69){\line(0,1){22}}
\put(160,101){\line(0,1){22}}
\put(192,5){\line(0,1){22}}
\put(192,37){\line(0,1){22}}
\put(192,69){\line(0,1){22}}
\put(192,101){\line(0,1){22}}
%
\put(6,0){\makebox(0,0)[l]{0}}
\put(0,38){\makebox(0,0)[b]{1}}
\put(38,0){\makebox(0,0)[l]{2}}
\put(32,38){\makebox(0,0)[b]{3}}
\put(70,0){\makebox(0,0)[l]{4}}
\put(64,38){\makebox(0,0)[b]{5}}
\put(102,0){\makebox(0,0)[l]{6}}
\put(102,32){\makebox(0,0)[l]{7}}
\put(0,70){\makebox(0,0)[b]{8}}
\put(32,70){\makebox(0,0)[b]{9}}
\put(64,70){\makebox(0,0)[b]{10}}
\put(96,70){\makebox(0,0)[b]{11}}
\put(0,102){\makebox(0,0)[b]{12}}
\put(32,102){\makebox(0,0)[b]{13}}
\put(64,102){\makebox(0,0)[b]{14}}
\put(96,102){\makebox(0,0)[b]{15}}
\put(134,0){\makebox(0,0)[l]{16}}
\put(134,32){\makebox(0,0)[l]{17}}
\put(134,64){\makebox(0,0)[l]{18}}
\put(134,96){\makebox(0,0)[l]{19}}
\put(0,134){\makebox(0,0)[b]{20}}
\put(32,134){\makebox(0,0)[b]{21}}
\put(64,134){\makebox(0,0)[b]{22}}
\put(96,134){\makebox(0,0)[b]{23}}
\put(128,134){\makebox(0,0)[b]{24}}
\put(166,0){\makebox(0,0)[l]{25}}
\put(166,32){\makebox(0,0)[l]{26}}
\put(166,64){\makebox(0,0)[l]{27}}
\put(166,96){\makebox(0,0)[l]{28}}
\put(160,134){\makebox(0,0)[b]{29}}
\put(198,0){\makebox(0,0)[l]{30}}
\put(198,32){\makebox(0,0)[l]{31}}
\put(198,64){\makebox(0,0)[l]{32}}
\put(198,96){\makebox(0,0)[l]{33}}
\put(192,134){\makebox(0,0)[b]{34}}
\end{picture}
\caption{A non-decreasing unbounded function $g(x)$, together with the pairing function $\phi_g(x,y)$ that is constructed from $g$.  Points in the graph of $g$ are interpolated with dotted line segments.  To make the sequence of points $\phi_g^{-1}(0)$, $\phi_g^{-1}(1)$, $\phi_g^{-1}(2)$, \ldots\ more visually apparent, solid line segments have been drawn between some of the points that occur consecutively in the sequence.}
\label{phi-illust}
\end{figure}
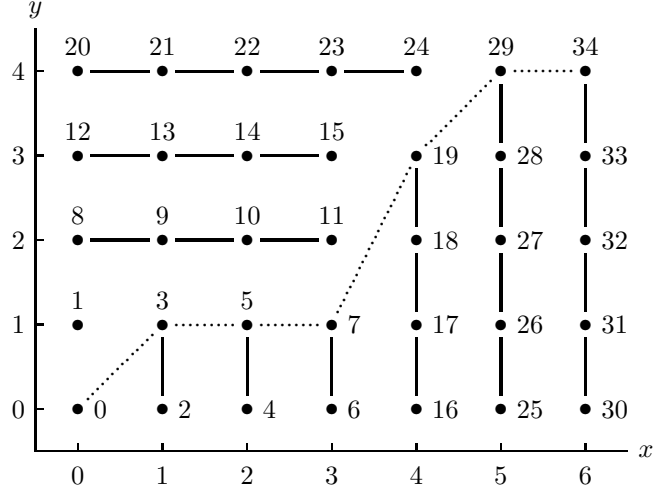

\begin{definition}\label{psi-def}
Let $g\colon\mathbb{N}\to\mathbb{N}$ be any unbounded function, and let $s_0$, $s_1$, $s_2$, \ldots\ be the sequence of step points of $g$.  Define $\psi_g\colon\mathbb{N}\to\mathbb{N}^2$ so that
\begin{equation*}
\psi_g(z)=\begin{cases}
\biggl(z\bmod s_m\;,\;\Bigl\lfloor\,\dfrac{z}{s_m}\Bigr\rfloor\biggr) &\text{\quad if $z<s_m\bigl(g(s_m)+1\bigr)$}\\
\biggl(\Bigl\lfloor\dfrac{z}{\,g(s_m)+1}\Bigr\rfloor\;,\;z\bmod\bigl(g(s_m)+1\bigr)\biggr) &\text{\quad otherwise}\rule{0pt}{21pt}
\end{cases},
\end{equation*}
where $m$ is the smallest non-negative integer such that $z\in B_m$.
\end{definition}
%
%

\begin{lemma}\label{phi-a-b}
Let $g\colon\mathbb{N}\to\mathbb{N}$ be any non-decreasing unbounded function.  For each $k\in\mathbb{N}$, $\phi_g$ is a bijection from $A_k$ to $B_k$, and $\psi_g$ is its inverse.
\end{lemma}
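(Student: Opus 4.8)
\emph{Proof proposal.} The plan is to exploit the fact that $A_k$ is a rectangle. Fix $k\in\mathbb{N}$. By Definition~\ref{a-b-def}, $A_k=\{0,1,\ldots,s_{k+1}-1\}\times\{0,1,\ldots,g(s_k)\}$, so $A_k$ is finite with $|A_k|=s_{k+1}(g(s_k)+1)=|B_k|$. I would therefore reduce the lemma to two claims: (i) $\phi_g$ maps $A_k$ into $B_k$; and (ii) $\psi_g\bigl(\phi_g(x,y)\bigr)=(x,y)$ for every $(x,y)\in A_k$. Claim (ii) makes $\phi_g$ injective on $A_k$, so with (i) and $|A_k|=|B_k|<\infty$ it is a bijection from $A_k$ onto $B_k$; and then (ii) says precisely that $\psi_g$ restricted to $B_k$ is its inverse.

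Before the case analysis I would record two preliminary facts. First, for each $x\in\mathbb{N}$ there is a unique $j$ with $s_j\le x<s_{j+1}$ (the step points are strictly increasing and unbounded, with $s_0=0$ by \eqref{s0}), and for that $j$ one has $g(x)=g(s_j)$: indeed $s_j\le x$ gives $g(s_j)\le g(x)$ since $g$ is non-decreasing, while $x<s_{j+1}$ gives $g(x)\le g(s_j)$ by Lemma~\ref{step-points}. Second, writing $N_k=|B_k|=s_{k+1}(g(s_k)+1)$, the sequence $N_0,N_1,N_2,\ldots$ is non-decreasing (because $s_{k+1}<s_{k+2}$ and $g(s_k)\le g(s_{k+1})$), so $B_0\subseteq B_1\subseteq\cdots$ and the index $m$ in Definition~\ref{psi-def} is simply $\min\{m:z<N_m\}$. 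Pinning this index down in each case is the heart of the matter.

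\emph{Case $y\le g(x)$.} Let $j$ be the index with $s_j\le x<s_{j+1}$; then $j\le k$ and $\phi_g(x,y)=x(g(x)+1)+y=x(g(s_j)+1)+y=:z$, with $0\le y\le g(s_j)$. From $s_j\le x\le s_{j+1}-1$ I obtain $s_j(g(s_j)+1)\le z\le s_{j+1}(g(s_j)+1)-1=N_j-1<N_j\le N_k$, and since $s_j(g(s_j)+1)\ge s_j(g(s_{j-1})+1)=N_{j-1}$, I get $z\notin B_{j-1}$, so the index of Definition~\ref{psi-def} is $m=j$ (for $j=0$ this is automatic). As $z\ge s_m(g(s_m)+1)$, the ``otherwise'' branch of $\psi_g$ applies, and dividing $z=x(g(s_j)+1)+y$ by $g(s_j)+1$ returns quotient $x$ and remainder $y$, i.e.\ $\psi_g(z)=(x,y)$.

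\emph{Case $y>g(x)$.} Here $y\ge1$, and $y>g(x)$ is equivalent to $x<g^+(y)$ (the footnote to Definition~\ref{phi-def}); moreover $y\le g(s_k)$ forces $g^+(y)\le s_k$. Since $g^+(y)$ lies in the range of $g^+$, it is a step point, say $g^+(y)=s_j$, with $1\le j\le k$ (here $j\ge1$ since $x<g^+(y)$ excludes $g^+(y)=s_0=0$). Condition~\eqref{pseudo-inv} gives $g(s_{j-1})<y\le g(s_j)$. Then $\phi_g(x,y)=y\,g^+(y)+x=y\,s_j+x=:z$ with $0\le x<s_j$, so $N_{j-1}=s_j(g(s_{j-1})+1)\le z\le g(s_j)s_j+(s_j-1)<s_j(g(s_j)+1)\le N_j\le N_k$; as before this makes the index $m=j$, but now $z<s_j(g(s_j)+1)=s_m(g(s_m)+1)$, so the first branch of $\psi_g$ applies and dividing $z=y\,s_j+x$ by $s_j$ returns quotient $y$ and remainder $x$, i.e.\ $\psi_g(z)=(x,y)$.

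In both cases $z=\phi_g(x,y)\in B_k$ and $\psi_g(z)=(x,y)$, which establishes (i) and (ii), and the conclusion then follows as outlined in the first paragraph. I expect the only delicate point to be the index bookkeeping in Definition~\ref{psi-def}: showing that $z$ lands in exactly the slab $[N_{j-1},N_j)$ and that this selects the correct branch of $\psi_g$. This is where the identity $g(x)=g(s_j)$ and the strict inequalities $s_{j-1}<s_j<s_{j+1}$ do the work.
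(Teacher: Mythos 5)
Your proof is correct, and it takes a genuinely different route from the paper's. The paper proves the lemma by induction on $k$: the base step handles $A_0$, and the inductive step peels off the annulus $A_{k+1}\setminus A_k$, splitting it into the region $x\geq s_{k+1}$ (where it shows $g(x)=g(s_{k+1})$, so the second branch of $\phi_g$ fires) and the region $x<s_{k+1}$, $y>g(s_k)$ (where it shows $g^+(y)=s_{k+1}$, so the first branch fires). You instead give a direct, non-inductive argument: for an arbitrary $(x,y)\in A_k$ you locate the exact slab $[N_{j-1},N_j)$ containing $\phi_g(x,y)$, splitting on which branch of $\phi_g$ applies rather than on where $(x,y)$ sits relative to $A_{k-1}$. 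To do this you need two facts the paper's induction supplies implicitly: that every $x$ lies in a unique interval $[s_j,s_{j+1})$ with $g(x)=g(s_j)$, and that the $N_j$ are non-decreasing so that $B_0\subseteq B_1\subseteq\cdots$ and the minimal index $m$ of Definition~\ref{psi-def} is pinned down by a single containment and a single exclusion. The underlying computations (the bounds $N_{j-1}\leq z<N_j$, the identification $g^+(y)=s_j$ via condition~\eqref{pseudo-inv}, and the appeal to Euclidean division) coincide with the paper's, but your organization buys a uniform treatment of all of $A_k$ at once and, as a byproduct, an explicit formula for the shell index $\mu(x,y)=j$ of each point --- information the paper only extracts later, in Theorem~\ref{phi-shell-numbering} and its sequels. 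The paper's induction, by contrast, avoids the interval bookkeeping entirely, since in the annulus the relevant index is always $k+1$.
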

\begin{proof}
Consider any $k\in\mathbb{N}$.  Since $A_k$ and $B_k$ are finite sets with the same number of elements, it is sufficient to prove that $\phi_g$ is an injection from $A_k$ to $B_k$, and that $\psi_g$ is its inverse.  Equivalently, it is sufficient to prove that $\phi_g$ is a function from $A_k$ to $B_k$, and that $\psi_g$ is its left inverse.  The proof is by induction on $k$.

For the base step, consider any $(x,y)\in A_0$.  By Definition~\ref{a-b-def},
\begin{equation}\label{y-leq-g0}
y\leq g(s_0)
\end{equation}
and
\begin{equation}\label{x-less-s1}
x<s_1.
\end{equation}
Then, by Lemma~\ref{step-points}, $g(x)\leq g(s_0)$.  But $s_0=0$ by equation~\eqref{s0}, and $g$ is non-decreasing, so $g(x)<g(s_0)$ implies that $x<0$.  Since $x$ cannot be negative, it must be the case that $g(x)=g(s_0)$.  Therefore, by inequality~\eqref{y-leq-g0}, $y\leq g(x)$.  It immediately follows from Definition~\ref{phi-def} that
\begin{equation}\label{quot-rem-gs0}
\phi_g(x,y)=x\bigl(g(s_0)+1\bigr)+y.
\end{equation}
Together with inequalities~\eqref{y-leq-g0} and~\eqref{x-less-s1}, this implies that
\begin{align*}
\phi_g(x,y)&\leq(s_1-1)\bigl(g(s_0)+1\bigr)+g(s_0),\\
\phi_g(x,y)&\leq s_1\bigl(g(s_0)+1\bigr)-g(s_0)-1+g(s_0),\\
\phi_g(x,y)&\leq s_1\bigl(g(s_0)+1\bigr)-1.
\end{align*}
Therefore, $\phi_g(x,y)\in B_0$ by Definition~\ref{a-b-def}.  We have shown that $\phi_g$ is a function from $A_0$ to $B_0$.

Now, we have also shown that $m=0$ is the smallest non-negative integer such that $\phi_g(x,y)\in B_m$.  Furthermore, $\phi_g(x,y)\geq s_0\bigl(g(s_0)+1\bigr)$ because $s_0=0$.  Therefore, by Definition~\ref{psi-def},
\begin{equation*}
\psi_g\bigl(\phi_g(x,y)\bigr)=\biggl(\Bigl\lfloor\dfrac{\phi_g(x,y)}{\,g(s_0)+1}\Bigr\rfloor\;,\;\phi_g(x,y)\bmod\bigl(g(s_0)+1\bigr)\biggr).
\end{equation*}
But $y<g(s_0)+1$ by inequality~\eqref{y-leq-g0}.  So, by equation~\eqref{quot-rem-gs0} and the Euclidean division theorem, $\psi_g\bigl(\phi_g(x,y)\bigr)=(x,y)$.  We have shown that $\psi_g$ is a left inverse of $\phi_g\colon A_0\to B_0$.

Next, consider any $k\in\mathbb{N}$ and suppose, as the induction hypothesis, that $\phi_g$ is a function from $A_k$ to $B_k$, and that $\psi_g\bigl(\phi_g(x,y)\bigr)=(x,y)$ for all $(x,y)\in A_k$.  Now consider any $(x,y)\in A_{k+1}$.  By Definition~\ref{a-b-def},
\begin{equation}\label{x-less-sk2}
x<s_{k+2}
\end{equation}
and
\begin{equation}\label{y-leq-gk1}
y\leq g(s_{k+1}).
\end{equation}
If $(x,y)\in A_k$, then it follows from the induction hypothesis that $\phi_g(x,y)\in B_k\subseteq B_{k+1}$ and $\psi_g\bigl(\phi_g(x,y)\bigr)=(x,y)$.  Otherwise, $(x,y)\notin A_k$ and Definition~\ref{a-b-def} implies that
\begin{equation}\label{x-y-notin-ak}
x\geq s_{k+1}\quad\text{or}\quad y>g(s_k).
\end{equation}
There are two cases to consider.
\begin{description}
\item[Case 1] If
\begin{equation}\label{x-geq-sk1}
x\geq s_{k+1}
\end{equation}
then $g(x)\geq g(s_{k+1})$ because $g$ is non-decreasing.  But by inequality~\eqref{x-less-sk2} and Lemma~\ref{step-points}, $g(x)\leq g(s_{k+1})$.  Therefore, $g(x)=g(s_{k+1})$.  And by inequality~\eqref{y-leq-gk1}, this implies that $y\leq g(x)$.  It then follows from Definition~\ref{phi-def} that
\begin{equation}\label{quot-rem-gsk1}
\phi_g(x,y)=x\bigl(g(s_{k+1})+1\bigr)+y.
\end{equation}
Together with inequalities~\eqref{x-less-sk2} and~\eqref{y-leq-gk1}, this implies that
\begin{align*}
\phi_g(x,y)&\leq(s_{k+2}-1)\bigl(g(s_{k+1})+1\bigr)+g(s_{k+1}),\\
\phi_g(x,y)&\leq s_{k+2}\bigl(g(s_{k+1})+1\bigr)-g(s_{k+1})-1+g(s_{k+1}),\\
\phi_g(x,y)&\leq s_{k+2}\bigl(g(s_{k+1})+1\bigr)-1.
\end{align*}
Therefore, $\phi_g(x,y)\in B_{k+1}$ by Definition~\ref{a-b-def}.  Moreover, by inequality~\eqref{x-geq-sk1} and equation~\eqref{quot-rem-gsk1},
\begin{equation}\label{psi-cond2}
\phi_g(x,y)\geq s_{k+1}\bigl(g(s_{k+1})+1\bigr).
\end{equation}
But $s_{k+1}\geq s_k$, so $g(s_{k+1})\geq g(s_k)$.  This implies that
\begin{equation*}
\phi_g(x,y)\geq s_{k+1}\bigl(g(s_k)+1\bigr).
\end{equation*}
Hence, $\phi_g(x,y)\notin B_k$.  It immediately follows that $m=k+1$ is the smallest non-negative integer such that $\phi_g(x,y)\in B_m$.  Therefore, by inequality~\eqref{psi-cond2} and Definition~\ref{psi-def},
\begin{equation*}
\psi_g\bigl(\phi_g(x,y)\bigr)=\biggl(\Bigl\lfloor\dfrac{\phi_g(x,y)}{\,g(s_{k+1})+1}\Bigr\rfloor\;,\;\phi_g(x,y)\bmod\bigl(g(s_{k+1})+1\bigr)\biggr).
\end{equation*}
But $y<g(s_{k+1})+1$ by inequality~\eqref{y-leq-gk1}.  So, by equation~\eqref{quot-rem-gsk1} and the Euclidean division theorem, $\psi_g\bigl(\phi_g(x,y)\bigr)=(x,y)$.
\item[Case 2] If
\begin{equation}\label{x-less-sk1}
x<s_{k+1}
\end{equation}
then $g(x)\leq g(s_k)$ by Lemma~\ref{step-points}.  But by condition~\eqref{x-y-notin-ak},
\begin{equation}\label{y-greater-gsk}
y>g(s_k).
\end{equation}
So, $y>g(s_k)\geq g(x)$.  It immediately follows from Definition~\ref{phi-def} that
\begin{equation}\label{quot-rem-pseudo-inv-y}
\phi_g(x,y)=y\cdot g^+(y)+x.
\end{equation}
Now consider any $u\in\mathbb{N}$ and suppose that $u<s_{k+1}$.  By Lemma~\ref{step-points} and inequality~\eqref{y-greater-gsk}, $g(u)\leq g(s_k)<y$.  So,
\begin{equation*}
u<s_{k+1}\quad\text{implies}\quad g(u)<y.
\end{equation*}
The contrapositive of this statement is
\begin{equation*}
g(u)\geq y\quad\text{implies}\quad u\geq s_{k+1}.
\end{equation*}
Together with inequality~\eqref{y-leq-gk1}, this implies that
\begin{equation*}
g(s_{k+1})\geq y\quad\And\quad\forall u\in\mathbb{N}\,\bigl(\,g(u)\geq y\quad\text{implies}\quad u\geq s_{k+1}\,\bigr).
\end{equation*}
By condition~\eqref{pseudo-inv}, $g^+(y)=s_{k+1}$.  Hence, by equation~\eqref{quot-rem-pseudo-inv-y},
\begin{equation}\label{quot-rem-sk1}
\phi_g(x,y)=y\cdot s_{k+1}+x.
\end{equation}
Then, by inequalities~\eqref{y-leq-gk1} and~\eqref{x-less-sk1},
\begin{align}
\phi_g(x,y)&<g(s_{k+1})\cdot s_{k+1}+s_{k+1},\notag\\
\phi_g(x,y)&<s_{k+1}\bigl(g(s_{k+1})+1\bigr).\label{psi-cond1}
\end{align}
But $s_{k+1}<s_{k+2}$, so $\phi_g(x,y)<s_{k+2}\bigl(g(s_{k+1})+1\bigr)$.  It immediately follows from Definition~\ref{a-b-def} that $\phi_g(x,y)\in B_{k+1}$.  But by inequality~\eqref{y-greater-gsk}, $y\geq g(s_k)+1$.  Therefore, by equation~\eqref{quot-rem-sk1}, $\phi_g(x,y)\geq s_{k+1}\bigl(g(s_k)+1\bigr)$.  Then, by Definition~\ref{a-b-def}, $\phi_g(x,y)\notin B_k$.  We have shown that $m=k+1$ is the smallest non-negative integer such that $\phi_g(x,y)\in B_m$.  Therefore, by inequality~\eqref{psi-cond1} and Definition~\ref{psi-def},
\begin{equation*}
\psi_g\bigl(\phi_g(x,y)\bigr)=\biggl(\phi_g(x,y)\bmod s_{k+1}\;,\;\Bigl\lfloor\,\dfrac{\phi_g(x,y)}{s_{k+1}}\Bigr\rfloor\biggr).
\end{equation*}
It then follows from inequality~\eqref{x-less-sk1}, equation~\eqref{quot-rem-sk1}, and the Euclidean division theorem that $\psi_g\bigl(\phi_g(x,y)\bigr)=(x,y)$.
\end{description}
In all cases, we have shown that $\phi_g(x,y)\in B_{k+1}$ and $\psi_g\bigl(\phi_g(x,y)\bigr)=(x,y)$.  Therefore, $\phi_g$ is a function from $A_{k+1}$ to $B_{k+1}$, and $\psi_g$ is its left inverse.
\end{proof}

\begin{theorem}\label{phi-pairing}
Let $g\colon\mathbb{N}\to\mathbb{N}$ be any non-decreasing unbounded function.  Then $\phi_g$ is a pairing function from $\mathbb{N}^2$ to $\mathbb{N}$, and $\psi_g$ is its inverse.
\end{theorem}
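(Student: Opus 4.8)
The plan is to lean on Lemma~\ref{phi-a-b}, which already does the real work: for each $k\in\mathbb{N}$, $\phi_g$ restricts to a bijection from $A_k$ onto $B_k$ whose inverse is $\psi_g$.  Everything that remains is an exhaustion argument gluing these local bijections into a single global one.

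First I would record two monotonicity facts, both immediate from Definition~\ref{a-b-def} together with $s_{k+1}<s_{k+2}$ and (since $g$ is non-decreasing) $g(s_k)\leq g(s_{k+1})$: the families are nested, $A_0\subseteq A_1\subseteq A_2\subseteq\cdots$ and $B_0\subseteq B_1\subseteq B_2\subseteq\cdots$.  Next I would check exhaustion.  Because the step points are strictly increasing, the sequence $s_0,s_1,s_2,\ldots$ is unbounded; and because $g$ is non-decreasing and unbounded, the sequence $g(s_0),g(s_1),g(s_2),\ldots$ is unbounded as well.  Hence, given any $(x,y)\in\mathbb{N}^2$, one can choose $k$ large enough that $s_{k+1}>x$ and $g(s_k)\geq y$ hold simultaneously, so $(x,y)\in A_k$ and $\bigcup_k A_k=\mathbb{N}^2$.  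Likewise $s_{k+1}\bigl(g(s_k)+1\bigr)$ is non-decreasing in $k$ and at least $k+1$, so it tends to infinity and $\bigcup_k B_k=\mathbb{N}$.

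Finally I would assemble the conclusion.  For injectivity of $\phi_g$, given $\phi_g(x,y)=\phi_g(x^\prime,y^\prime)$ I would use nesting to find a single $k$ with both pairs in $A_k$ and then apply the injectivity half of Lemma~\ref{phi-a-b}.  For surjectivity, given $z\in\mathbb{N}$ I would pick $k$ with $z\in B_k$ and use that $\phi_g\colon A_k\to B_k$ is onto; this establishes that $\phi_g$ is a pairing function.  For the inverse, the identity $\psi_g\bigl(\phi_g(x,y)\bigr)=(x,y)$ holds for every $(x,y)$ by choosing $k$ with $(x,y)\in A_k$ and quoting Lemma~\ref{phi-a-b}, so $\psi_g$ is a left inverse of the bijection $\phi_g$ and therefore equals $\phi_g^{-1}$.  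I do not anticipate a genuine obstacle; the only points needing a moment's care are that $g(s_k)$ (not merely $s_k$) is unbounded — which is exactly where the non-decreasingness of $g$ enters — and that $\psi_g$ is in fact defined on all of $\mathbb{N}$, since every $z$ lies in some $B_m$ and hence in a least such $B_m$.
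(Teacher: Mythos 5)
Your proposal is correct and follows essentially the same route as the paper: both proofs glue the local bijections $\phi_g\colon A_k\to B_k$ from Lemma~\ref{phi-a-b} into a global one by observing that the $A_k$ exhaust $\mathbb{N}^2$ and the $B_k$ exhaust $\mathbb{N}$. The paper simply quotes the two composition identities $\psi_g\circ\phi_g=\mathrm{id}$ and $\phi_g\circ\psi_g=\mathrm{id}$ on each shell and concludes directly, whereas you spell out the nesting, unboundedness, and the injectivity/surjectivity split in slightly more detail; the substance is identical.
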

\begin{proof}
Consider any $(x,y)\in\mathbb{N}^2$, and note that there must exist a non-negative integer $k$ such that $(x,y)\in A_k$.  Therefore, by Lemma~\ref{phi-a-b}, $\psi_g\bigl(\phi_g(x,y)\bigr)=(x,y)$.  Similarly, consider any $z\in\mathbb{N}$.  There must exist a non-negative integer $k^\prime$ such that $z\in B_{k^\prime}$.  Therefore, by Lemma~\ref{phi-a-b}, $\phi_g\bigl(\psi_g(z)\bigr)=z$.  It immediately follows that $\psi_g\colon\mathbb{N}\to\mathbb{N}^2$ is the unique inverse of the bijection $\phi_g\colon\mathbb{N}^2\to\mathbb{N}$.
\end{proof}

\begin{corollary}\label{a-b-elem}
Let $g\colon\mathbb{N}\to\mathbb{N}$ be any non-decreasing unbounded function.  For all non-negative integers $x$, $y$, and $k$,
\begin{equation*}
(x,y)\in A_k\quad\text{if and only if}\quad\phi_g(x,y)\in B_k.
\end{equation*}
\end{corollary}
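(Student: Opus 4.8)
The plan is to derive this directly from Lemma~\ref{phi-a-b} and Theorem~\ref{phi-pairing}, treating the two implications separately.

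For the forward direction, suppose $(x,y)\in A_k$. By Lemma~\ref{phi-a-b}, $\phi_g$ is a bijection from $A_k$ to $B_k$; in particular $\phi_g$ maps $A_k$ into $B_k$, so $\phi_g(x,y)\in B_k$. Nothing more is needed here.

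For the converse, suppose $\phi_g(x,y)\in B_k$, and set $z=\phi_g(x,y)$. Again by Lemma~\ref{phi-a-b}, $\psi_g$ is the inverse of the bijection $\phi_g\colon A_k\to B_k$, so $\psi_g$ maps $B_k$ onto $A_k$; since $z\in B_k$, this gives $\psi_g(z)\in A_k$. On the other hand, by Theorem~\ref{phi-pairing}, $\psi_g$ is the (global) inverse of $\phi_g\colon\mathbb{N}^2\to\mathbb{N}$, so $\psi_g(z)=\psi_g\bigl(\phi_g(x,y)\bigr)=(x,y)$. Combining the two facts, $(x,y)=\psi_g(z)\in A_k$, as desired.

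I do not expect any real obstacle: the statement is a bookkeeping corollary that simply repackages the bijection $\phi_g\colon A_k\to B_k$ (and its inverse $\psi_g$) together with the fact that $\psi_g$ is a genuine two-sided inverse of $\phi_g$ on all of $\mathbb{N}^2$. The only point worth stating carefully is that applying $\psi_g$ to $z\in B_k$ lands in $A_k$, which is exactly the content of ``$\psi_g$ is the inverse of $\phi_g\colon A_k\to B_k$'' in Lemma~\ref{phi-a-b}.
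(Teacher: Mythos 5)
Your proposal is correct and follows essentially the same route as the paper: the forward direction is immediate from Lemma~\ref{phi-a-b}, and the converse combines the fact that $\psi_g$ carries $B_k$ into $A_k$ (Lemma~\ref{phi-a-b}) with the global identity $\psi_g\bigl(\phi_g(x,y)\bigr)=(x,y)$ from Theorem~\ref{phi-pairing}. No gaps.
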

\begin{proof}
Consider any non-negative integers $x$, $y$, and $k$.  Now suppose that $(x,y)\in A_k$.  By Lemma~\ref{phi-a-b}, $\phi_g(x,y)\in B_k$.  Conversely, suppose that $\phi_g(x,y)\in B_k$.  By Lemma~\ref{phi-a-b}, $\psi_g\bigl(\phi_g(x,y)\bigr)\in A_k$.  But by Theorem~\ref{phi-pairing}, $\psi_g\bigl(\phi_g(x,y)\bigr)=(x,y)$.  Hence, $(x,y)\in A_k$.
\end{proof}

Given any non-decreasing unbounded function $g\colon\mathbb{N}\to\mathbb{N}$, and given any $x,y\in\mathbb{N}$, define $\mu(x,y)$ to be the smallest non-negative integer $k$ such that $\phi_g(x,y)\in B_k$.  By this definition, the quantity $m$ that appears in Definition~\ref{psi-def} is equal to $\mu\bigl(\psi_g(z)\bigr)$.  And by Corollary~\ref{a-b-elem}, $\mu(x,y)$ is also the smallest non-negative integer $k$ such that $(x,y)\in A_k$.  It then follows from equations~\eqref{u-def} and~\eqref{j-def} that $A_k=U_\mu^{<k+1}$ and $B_k=J_\mu^{<k+1}$ for each $k\in\mathbb{N}$.  

\begin{theorem}\label{phi-shell-numbering}
Let $g\colon\mathbb{N}\to\mathbb{N}$ be any non-decreasing unbounded function.  Then $\mu$ is a shell numbering for $\phi_g$.
%
%
\end{theorem}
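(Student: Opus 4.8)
The plan is to invoke Corollary~\ref{upper-shell-numbering} with $s=\mu$ and $f=\phi_g$, so that the entire argument reduces to checking two things: (i) that $U_\mu^{<k}$ is a finite set for each $k\in\mathbb{N}$, and (ii) that $\phi_g(x,y)<\bigl\lvert U_\mu^{<\mu(x,y)+1}\bigr\rvert$ for all $(x,y)\in\mathbb{N}^2$. The identifications $A_k=U_\mu^{<k+1}$ and $B_k=J_\mu^{<k+1}$, established in the discussion immediately preceding the theorem, will do most of the bookkeeping.

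For (i), I would note that $U_\mu^{<0}=\emptyset$, and that for $k\geq1$ we have $U_\mu^{<k}=A_{k-1}$. By Definition~\ref{a-b-def}, $A_{k-1}\subseteq\{0,1,\ldots,s_k-1\}\times\{0,1,\ldots,g(s_{k-1})\}$, which is finite, so $U_\mu^{<k}$ is finite for every $k\in\mathbb{N}$, as required by the hypothesis of Corollary~\ref{upper-shell-numbering}.

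For (ii), fix $(x,y)\in\mathbb{N}^2$ and set $k=\mu(x,y)$. By the definition of $\mu$, $\phi_g(x,y)\in B_k$, and by Definition~\ref{a-b-def} every element of $B_k$ is strictly less than $s_{k+1}\bigl(g(s_k)+1\bigr)=\bigl\lvert B_k\bigr\rvert$; hence $\phi_g(x,y)<\bigl\lvert B_k\bigr\rvert$. On the other hand, Lemma~\ref{phi-a-b} says that $\phi_g$ restricts to a bijection from $A_k$ onto $B_k$, so $\bigl\lvert B_k\bigr\rvert=\bigl\lvert A_k\bigr\rvert=\bigl\lvert U_\mu^{<k+1}\bigr\rvert$. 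Combining these gives $\phi_g(x,y)<\bigl\lvert U_\mu^{<\mu(x,y)+1}\bigr\rvert$, and Corollary~\ref{upper-shell-numbering} then immediately yields that $\mu$ is a shell numbering for $\phi_g$.

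I do not expect any serious obstacle: the inductive heavy lifting was already carried out in Lemma~\ref{phi-a-b}, and the correspondence between $A_k$, $B_k$ and the sets $U_\mu^{<k+1}$, $J_\mu^{<k+1}$ was arranged beforehand precisely so that this theorem follows from Corollary~\ref{upper-shell-numbering}. The only point that warrants a moment's attention is verifying the finiteness hypothesis of that corollary, but this is immediate from the explicit finite description of $B_k$ (equivalently $A_k$) in Definition~\ref{a-b-def}.
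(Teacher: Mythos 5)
Your proof is correct, but it takes a different route from the paper's. You reduce the theorem to Corollary~\ref{upper-shell-numbering}, which obliges you to verify the finiteness of each $U_\mu^{<k}$ and the cardinality identity $\bigl\lvert U_\mu^{<k+1}\bigr\rvert=\bigl\lvert A_k\bigr\rvert=\bigl\lvert B_k\bigr\rvert=s_{k+1}\bigl(g(s_k)+1\bigr)$; both checks go through exactly as you say, using the identification $A_k=U_\mu^{<k+1}$ established just before the theorem and the bijection from Lemma~\ref{phi-a-b}. The paper instead verifies condition~\eqref{shell-def} directly: given $\mu(x_1,y_1)=m_1<m_2=\mu(x_2,y_2)$, minimality of $m_2$ forces $\phi_g(x_2,y_2)\notin B_{m_1}$ while $\phi_g(x_1,y_1)\in B_{m_1}$, and since $B_{m_1}$ is the initial segment $\bigl\{0,\ldots,s_{m_1+1}(g(s_{m_1})+1)-1\bigr\}$ this gives $\phi_g(x_1,y_1)<s_{m_1+1}\bigl(g(s_{m_1})+1\bigr)\leq\phi_g(x_2,y_2)$. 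The two arguments rest on the same underlying fact---that $B_k$ is an initial segment of $\mathbb{N}$ whose right endpoint grows with $k$---but the paper's version is slightly leaner (no finiteness hypothesis or counting needed), while yours is more systematic in that it reuses the general shell-numbering machinery of Section~\ref{square-shells} rather than re-deriving the comparison by hand.
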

\begin{proof}
Consider any points $(x_1,y_1)$ and $(x_2,y_2)$ in $\mathbb{N}^2$.  Let $m_1=\mu(x_1,y_1)$ and let $m_2=\mu(x_2,y_2)$.  Now suppose that $\mu(x_1,y_1)<\mu(x_2,y_2)$.  Then $m_1<m_2$.  Since $m_2$ is the smallest non-negative integer such that $\phi_g(x_2,y_2)\in B_{m_2}$, and since $m_1$ is smaller than $m_2$, it must be the case that $\phi_g(x_2,y_2)\notin B_{m_1}$.  But $\phi_g(x_1,y_1)\in B_{m_1}$ by the definition of $\mu(x_1,y_1)$.  Hence, by Definition~\ref{a-b-def},
\begin{equation*}
\phi_g(x_1,y_1)<s_{m_1+1}\bigl(g(s_{m_1})+1\bigr)\leq\phi_g(x_2,y_2).
\end{equation*}
We have shown that $\mu(x_1,y_1)<\mu(x_2,y_2)$ implies $\phi_g(x_1,y_1)<\phi_g(x_2,y_2)$.  Therefore, by condition~\eqref{shell-def}, $\mu$ is shell numbering for $\phi_g$.
\end{proof}

\section{A Proportional Pairing Function}\label{derivation}

Given any positive integers $a$ and $b$, and any non-negative integer $x$, define\footnote{
Many of the formulas in this section that are expressed with the floor operation can also be expressed with the ceiling operation if one uses the fact that $\bigl\lfloor\sqrt[a]{x}\,\bigr\rfloor+1=\bigl\lceil\sqrt[a]{x+1}\;\bigr\rceil$ for all positive integers $a$ and all non-negative integers $x$.
%
%
}
\begin{equation*}
g_{a,b}(x)=\Bigl(\bigl\lfloor\sqrt[a]{x}\,\bigr\rfloor+1\Bigr)^b-1.
\end{equation*}
Note that $g_{a,b}$ is a non-decreasing unbounded function from $\mathbb{N}$ to $\mathbb{N}$.  In this section, we will use the function $g_{a,b}$ to construct the pairing function $p_{a,b}$.  And in the course of deriving the formula for $p_{a,b}$, we will rely on the following well-known identity~\cite{Graham1994}, which holds for all integers $k$ and all real numbers $t$:
\begin{equation}\label{k-leq-t}
k\leq t\quad\text{if and only if}\quad k\leq\lfloor t\rfloor.
\end{equation}

\begin{lemma}\label{y-greater-g}
Let $a$ and $b$ be any positive integers.  For all non-negative integers $x$ and $y$,
\begin{equation*}
y>g_{a,b}(x)\quad\text{if and only if}\quad\bigl\lfloor\sqrt[b]{y\rule{0pt}{6pt}}\,\bigr\rfloor>\bigl\lfloor\sqrt[a]{x}\,\bigr\rfloor.
\end{equation*}
\end{lemma}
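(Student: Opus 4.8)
The plan is to reduce both sides of the asserted equivalence to the single arithmetic inequality $y\geq\bigl(\lfloor\sqrt[a]{x}\,\rfloor+1\bigr)^b$ and then chain the resulting equivalences. To this end, abbreviate $u=\lfloor\sqrt[a]{x}\,\rfloor$, so that by definition $g_{a,b}(x)=(u+1)^b-1$. Since $y$ and $(u+1)^b$ are both integers, the inequality $y>g_{a,b}(x)$, i.e.\ $y>(u+1)^b-1$, holds if and only if $y\geq(u+1)^b$. This disposes of the left-hand side.

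For the right-hand side, I would rewrite $\lfloor\sqrt[b]{y}\,\rfloor>u$. Because $\lfloor\sqrt[b]{y}\,\rfloor$ and $u$ are integers, this is equivalent to $\lfloor\sqrt[b]{y}\,\rfloor\geq u+1$, and then identity~\eqref{k-leq-t}, applied with the integer $k=u+1$ and the real number $t=\sqrt[b]{y}$, shows that this holds if and only if $u+1\leq\sqrt[b]{y}$. Finally, since $b$ is a positive integer and both $u+1$ and $\sqrt[b]{y}$ are non-negative reals, the map $t\mapsto t^b$ is strictly increasing on $[0,\infty)$, so $u+1\leq\sqrt[b]{y}$ if and only if $(u+1)^b\leq y$.

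Putting these together gives the chain
\begin{equation*}
y>g_{a,b}(x)\quad\Longleftrightarrow\quad y\geq(u+1)^b\quad\Longleftrightarrow\quad \bigl\lfloor\sqrt[b]{y}\,\bigr\rfloor>u=\bigl\lfloor\sqrt[a]{x}\,\bigr\rfloor,
\end{equation*}
which is exactly the statement of the lemma. I do not expect any genuine obstacle here; the only points that warrant explicit mention are the two places where ``strict inequality between integers'' is converted to a ``$\geq$'' with the next integer, and the single step where passing between $\lfloor\sqrt[b]{y}\,\rfloor$ and a polynomial inequality in $u$ requires both identity~\eqref{k-leq-t} and the monotonicity of $t\mapsto t^b$ on the non-negative reals. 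I would spell out each of these so that the equivalences are manifestly reversible.
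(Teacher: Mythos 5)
Your proof is correct and follows essentially the same route as the paper's: convert the strict integer inequality $y>(u+1)^b-1$ to $y\geq(u+1)^b$, pass between $y$ and $\sqrt[b]{y}$ using the monotonicity of the $b$-th power on the non-negative reals, and apply identity~\eqref{k-leq-t} with $k=u+1$ and $t=\sqrt[b]{y}$ to insert the floor. The only cosmetic difference is that you work from both ends toward the middle inequality $y\geq(u+1)^b$ rather than writing one linear chain of equivalences.
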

\begin{proof}
Consider any non-negative integers $x$ and $y$.  The following statements are equivalent:
\begin{alignat*}{2}
%
%
y&>{}&g_{a,b}(x),&\\
y&>{}&\Bigl(\bigl\lfloor\sqrt[a]{x}\,\bigr\rfloor&+1\Bigr)^b-1,\\
y&\geq{}&\Bigl(\bigl\lfloor\sqrt[a]{x}\,\bigr\rfloor&+1\Bigr)^b,\\
\sqrt[b]{y\rule{0pt}{6pt}}&\geq{}&\bigl\lfloor\sqrt[a]{x}\,\bigr\rfloor&+1.
\end{alignat*}
But by condition~\eqref{k-leq-t}, this is equivalent to
\begin{align*}
\bigl\lfloor\sqrt[b]{y\rule{0pt}{6pt}}\,\bigr\rfloor&\geq\bigl\lfloor\sqrt[a]{x}\,\bigr\rfloor+1,\\
\bigl\lfloor\sqrt[b]{y\rule{0pt}{6pt}}\,\bigr\rfloor&>\bigl\lfloor\sqrt[a]{x}\,\bigr\rfloor.
\end{align*}
\end{proof}

\begin{lemma}\label{pseudo-inv-floor}
Let $a$ and $b$ be any positive integers.  Then $g_{a,b}^+(y)=\bigl\lfloor\sqrt[b]{y\rule{0pt}{6pt}}\,\bigr\rfloor^a$ for all non-negative integers $y$.
\end{lemma}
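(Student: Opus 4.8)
The plan is to unwind the definition of $g_{a,b}^+$ and reduce everything to the biconditional already established in Lemma~\ref{y-greater-g}. Recall that, by condition~\eqref{pseudo-inv}, $g_{a,b}^+(y)$ is the least non-negative integer $x$ for which $g_{a,b}(x)\geq y$. So it suffices to characterize the set of such $x$ and identify its minimum.

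First I would take the contrapositive of Lemma~\ref{y-greater-g}: for all non-negative integers $x$ and $y$, $g_{a,b}(x)\geq y$ if and only if $\bigl\lfloor\sqrt[a]{x}\,\bigr\rfloor\geq\bigl\lfloor\sqrt[b]{y\rule{0pt}{6pt}}\,\bigr\rfloor$. Next, since $\bigl\lfloor\sqrt[b]{y\rule{0pt}{6pt}}\,\bigr\rfloor$ is an integer and $\sqrt[a]{x}$ is a non-negative real number, condition~\eqref{k-leq-t} gives that $\bigl\lfloor\sqrt[b]{y\rule{0pt}{6pt}}\,\bigr\rfloor\leq\bigl\lfloor\sqrt[a]{x}\,\bigr\rfloor$ if and only if $\bigl\lfloor\sqrt[b]{y\rule{0pt}{6pt}}\,\bigr\rfloor\leq\sqrt[a]{x}$, which, on raising both non-negative sides to the $a$th power, holds if and only if $\bigl\lfloor\sqrt[b]{y\rule{0pt}{6pt}}\,\bigr\rfloor^a\leq x$. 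Chaining these equivalences yields that $g_{a,b}(x)\geq y$ if and only if $x\geq\bigl\lfloor\sqrt[b]{y\rule{0pt}{6pt}}\,\bigr\rfloor^a$.

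Finally I would observe that the set $\bigl\{\,x\in\mathbb{N}:x\geq\bigl\lfloor\sqrt[b]{y\rule{0pt}{6pt}}\,\bigr\rfloor^a\,\bigr\}$ has least element $\bigl\lfloor\sqrt[b]{y\rule{0pt}{6pt}}\,\bigr\rfloor^a$; equivalently, $x=\bigl\lfloor\sqrt[b]{y\rule{0pt}{6pt}}\,\bigr\rfloor^a$ satisfies $g_{a,b}(x)\geq y$, and every $u\in\mathbb{N}$ with $g_{a,b}(u)\geq y$ satisfies $u\geq\bigl\lfloor\sqrt[b]{y\rule{0pt}{6pt}}\,\bigr\rfloor^a$. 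By condition~\eqref{pseudo-inv}, this is precisely the assertion $g_{a,b}^+(y)=\bigl\lfloor\sqrt[b]{y\rule{0pt}{6pt}}\,\bigr\rfloor^a$.

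There is no serious obstacle here: the argument is entirely bookkeeping with biconditionals, and Lemma~\ref{y-greater-g} has already carried out the nontrivial step of relating $g_{a,b}$ to integer roots. The only point requiring a little care is the direction in which condition~\eqref{k-leq-t} is invoked, namely matching the integer $\bigl\lfloor\sqrt[b]{y\rule{0pt}{6pt}}\,\bigr\rfloor$ with $k$ and the real $\sqrt[a]{x}$ with $t$, and then using monotonicity of $t\mapsto t^a$ on the non-negative reals to strip the remaining radical.
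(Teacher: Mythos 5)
Your proposal is correct and follows essentially the same route as the paper's proof: negate Lemma~\ref{y-greater-g} to recast $g_{a,b}(x)\geq y$ as $\bigl\lfloor\sqrt[b]{y\rule{0pt}{6pt}}\,\bigr\rfloor\leq\bigl\lfloor\sqrt[a]{x}\,\bigr\rfloor$, strip the floor with condition~\eqref{k-leq-t}, raise to the $a$th power, and read off the least admissible $x$. Your explicit remark about matching $k$ with $\bigl\lfloor\sqrt[b]{y\rule{0pt}{6pt}}\,\bigr\rfloor$ and $t$ with $\sqrt[a]{x}$ is exactly the one point of care the paper's argument also relies on.
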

\begin{proof}
Consider any non-negative integer $y$.  By definition, $g_{a,b}^+(y)$ is the smallest non-negative integer $x$ such that $g_{a,b}(x)\geq y$.  But by Lemma~\ref{y-greater-g}, this inequality is equivalent to $\bigl\lfloor\sqrt[b]{y\rule{0pt}{6pt}}\,\bigr\rfloor\leq\bigl\lfloor\sqrt[a]{x}\,\bigr\rfloor$.  And by condition~\eqref{k-leq-t}, this is equivalent to
\begin{alignat*}{2}
&\bigl\lfloor\sqrt[b]{y\rule{0pt}{6pt}}\,\bigr\rfloor&&\leq\sqrt[a]{x},\\
&\bigl\lfloor\sqrt[b]{y\rule{0pt}{6pt}}\,\bigr\rfloor^a&&\leq x.
\end{alignat*}
Since $g_{a,b}^+(y)$ is the smallest non-negative integer $x$ satisfying this inequality, it must be the case that $g_{a,b}^+(y)=\bigl\lfloor\sqrt[b]{y\rule{0pt}{6pt}}\,\bigr\rfloor^a$.
\end{proof}

Now, for each pair $a$ and $b$ of positive integers, let $p_{a,b}$ be the function $\phi_{g_{a,b}}$ that is defined in Definition~\ref{phi-def}.  Then by Lemmas~\ref{y-greater-g} and~\ref{pseudo-inv-floor},
\begin{equation*}
p_{a,b}(x,y)=\begin{cases}
y\bigl\lfloor\sqrt[b]{y\rule{0pt}{6pt}}\,\bigr\rfloor^a+x &\text{\quad if $\bigl\lfloor\sqrt[b]{y\rule{0pt}{6pt}}\,\bigr\rfloor>\bigl\lfloor\sqrt[a]{x}\,\bigr\rfloor$}\\
x\Bigl(\bigl\lfloor\sqrt[a]{x}\,\bigr\rfloor+1\Bigr)^{\!b}+y &\text{\quad otherwise}\rule{0pt}{20pt}
\end{cases}
\end{equation*}
for all $(x,y)\in\mathbb{N}^2$.  It then follows from Theorem~\ref{phi-pairing} that $p_{a,b}$ is a pairing function.  And when $a=1$ and $b=1$, we have
\begin{equation*}
p_{1,1}(x,y)=\begin{cases}
y^2+x &\text{\quad if $y>x$}\\
x^2+x+y &\text{\quad otherwise}\rule{0pt}{15pt}
\end{cases}.
\end{equation*}
This special case has previously been considered by Rosenberg~\cite{Rosenberg1978}.  It has also been studied by Szudzik~\cite{Szudzik2006}.  The function $p_{3,2}$ is illustrated in Figure~\ref{p-3-2},
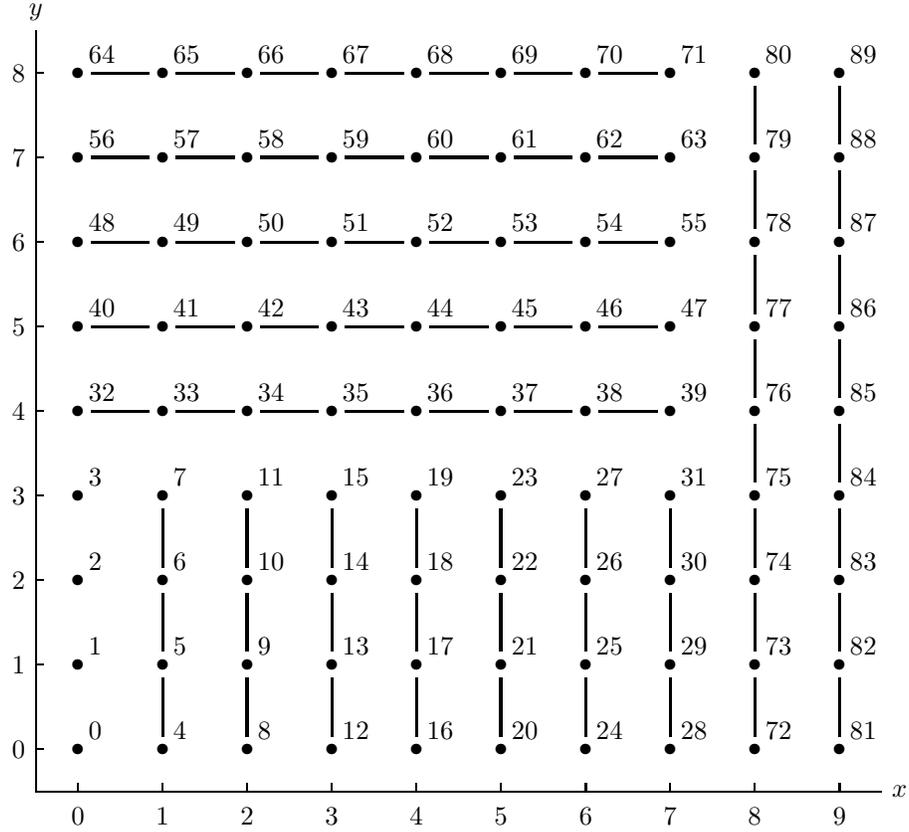
\begin{figure}
\centering
\begin{picture}(339,312)(-25,-29)
%
%
\multiput(0,0)(0,32){9}{\multiput(0,0)(32,0){10}{\circle*{4}}}
%
\thinlines
\put(-16,-16){\line(1,0){320}}
\put(-16,-16){\line(0,1){288}}
%
\put(308,-16){\makebox(0,0)[l]{$x$}}
\put(-16,276){\makebox(0,0)[b]{$y$}}
%
\thinlines
\multiput(0,-16)(32,0){10}{\line(0,1){3}}
\multiput(-16,0)(0,32){9}{\line(1,0){3}}
%
\put(0,-22){\makebox(0,0)[t]{0}}
\put(32,-22){\makebox(0,0)[t]{1}}
\put(64,-22){\makebox(0,0)[t]{2}}
\put(96,-22){\makebox(0,0)[t]{3}}
\put(128,-22){\makebox(0,0)[t]{4}}
\put(160,-22){\makebox(0,0)[t]{5}}
\put(192,-22){\makebox(0,0)[t]{6}}
\put(224,-22){\makebox(0,0)[t]{7}}
\put(256,-22){\makebox(0,0)[t]{8}}
\put(288,-22){\makebox(0,0)[t]{9}}
%
\put(-20,0){\makebox(0,0)[r]{0}}
\put(-20,32){\makebox(0,0)[r]{1}}
\put(-20,64){\makebox(0,0)[r]{2}}
\put(-20,96){\makebox(0,0)[r]{3}}
\put(-20,128){\makebox(0,0)[r]{4}}
\put(-20,160){\makebox(0,0)[r]{5}}
\put(-20,192){\makebox(0,0)[r]{6}}
\put(-20,224){\makebox(0,0)[r]{7}}
\put(-20,256){\makebox(0,0)[r]{8}}
%
\thicklines
\put(32,5){\line(0,1){22}}
\put(32,37){\line(0,1){22}}
\put(32,69){\line(0,1){22}}
\put(64,5){\line(0,1){22}}
\put(64,37){\line(0,1){22}}
\put(64,69){\line(0,1){22}}
\put(96,5){\line(0,1){22}}
\put(96,37){\line(0,1){22}}
\put(96,69){\line(0,1){22}}
\put(128,5){\line(0,1){22}}
\put(128,37){\line(0,1){22}}
\put(128,69){\line(0,1){22}}
\put(160,5){\line(0,1){22}}
\put(160,37){\line(0,1){22}}
\put(160,69){\line(0,1){22}}
\put(192,5){\line(0,1){22}}
\put(192,37){\line(0,1){22}}
\put(192,69){\line(0,1){22}}
\put(224,5){\line(0,1){22}}
\put(224,37){\line(0,1){22}}
\put(224,69){\line(0,1){22}}
\put(5,128){\line(1,0){22}}
\put(37,128){\line(1,0){22}}
\put(69,128){\line(1,0){22}}
\put(101,128){\line(1,0){22}}
\put(133,128){\line(1,0){22}}
\put(165,128){\line(1,0){22}}
\put(197,128){\line(1,0){22}}
\put(5,160){\line(1,0){22}}
\put(37,160){\line(1,0){22}}
\put(69,160){\line(1,0){22}}
\put(101,160){\line(1,0){22}}
\put(133,160){\line(1,0){22}}
\put(165,160){\line(1,0){22}}
\put(197,160){\line(1,0){22}}
\put(5,192){\line(1,0){22}}
\put(37,192){\line(1,0){22}}
\put(69,192){\line(1,0){22}}
\put(101,192){\line(1,0){22}}
\put(133,192){\line(1,0){22}}
\put(165,192){\line(1,0){22}}
\put(197,192){\line(1,0){22}}
\put(5,224){\line(1,0){22}}
\put(37,224){\line(1,0){22}}
\put(69,224){\line(1,0){22}}
\put(101,224){\line(1,0){22}}
\put(133,224){\line(1,0){22}}
\put(165,224){\line(1,0){22}}
\put(197,224){\line(1,0){22}}
\put(5,256){\line(1,0){22}}
\put(37,256){\line(1,0){22}}
\put(69,256){\line(1,0){22}}
\put(101,256){\line(1,0){22}}
\put(133,256){\line(1,0){22}}
\put(165,256){\line(1,0){22}}
\put(197,256){\line(1,0){22}}
\put(256,5){\line(0,1){22}}
\put(256,37){\line(0,1){22}}
\put(256,69){\line(0,1){22}}
\put(256,101){\line(0,1){22}}
\put(256,133){\line(0,1){22}}
\put(256,165){\line(0,1){22}}
\put(256,197){\line(0,1){22}}
\put(256,229){\line(0,1){22}}
\put(288,5){\line(0,1){22}}
\put(288,37){\line(0,1){22}}
\put(288,69){\line(0,1){22}}
\put(288,101){\line(0,1){22}}
\put(288,133){\line(0,1){22}}
\put(288,165){\line(0,1){22}}
\put(288,197){\line(0,1){22}}
\put(288,229){\line(0,1){22}}
%
\put(4,4){\makebox(0,0)[bl]{0}}
\put(4,36){\makebox(0,0)[bl]{1}}
\put(4,68){\makebox(0,0)[bl]{2}}
\put(4,100){\makebox(0,0)[bl]{3}}
\put(36,4){\makebox(0,0)[bl]{4}}
\put(36,36){\makebox(0,0)[bl]{5}}
\put(36,68){\makebox(0,0)[bl]{6}}
\put(36,100){\makebox(0,0)[bl]{7}}
\put(68,4){\makebox(0,0)[bl]{8}}
\put(68,36){\makebox(0,0)[bl]{9}}
\put(68,68){\makebox(0,0)[bl]{10}}
\put(68,100){\makebox(0,0)[bl]{11}}
\put(100,4){\makebox(0,0)[bl]{12}}
\put(100,36){\makebox(0,0)[bl]{13}}
\put(100,68){\makebox(0,0)[bl]{14}}
\put(100,100){\makebox(0,0)[bl]{15}}
\put(132,4){\makebox(0,0)[bl]{16}}
\put(132,36){\makebox(0,0)[bl]{17}}
\put(132,68){\makebox(0,0)[bl]{18}}
\put(132,100){\makebox(0,0)[bl]{19}}
\put(164,4){\makebox(0,0)[bl]{20}}
\put(164,36){\makebox(0,0)[bl]{21}}
\put(164,68){\makebox(0,0)[bl]{22}}
\put(164,100){\makebox(0,0)[bl]{23}}
\put(196,4){\makebox(0,0)[bl]{24}}
\put(196,36){\makebox(0,0)[bl]{25}}
\put(196,68){\makebox(0,0)[bl]{26}}
\put(196,100){\makebox(0,0)[bl]{27}}
\put(228,4){\makebox(0,0)[bl]{28}}
\put(228,36){\makebox(0,0)[bl]{29}}
\put(228,68){\makebox(0,0)[bl]{30}}
\put(228,100){\makebox(0,0)[bl]{31}}
\put(4,132){\makebox(0,0)[bl]{32}}
\put(36,132){\makebox(0,0)[bl]{33}}
\put(68,132){\makebox(0,0)[bl]{34}}
\put(100,132){\makebox(0,0)[bl]{35}}
\put(132,132){\makebox(0,0)[bl]{36}}
\put(164,132){\makebox(0,0)[bl]{37}}
\put(196,132){\makebox(0,0)[bl]{38}}
\put(228,132){\makebox(0,0)[bl]{39}}
\put(4,164){\makebox(0,0)[bl]{40}}
\put(36,164){\makebox(0,0)[bl]{41}}
\put(68,164){\makebox(0,0)[bl]{42}}
\put(100,164){\makebox(0,0)[bl]{43}}
\put(132,164){\makebox(0,0)[bl]{44}}
\put(164,164){\makebox(0,0)[bl]{45}}
\put(196,164){\makebox(0,0)[bl]{46}}
\put(228,164){\makebox(0,0)[bl]{47}}
\put(4,196){\makebox(0,0)[bl]{48}}
\put(36,196){\makebox(0,0)[bl]{49}}
\put(68,196){\makebox(0,0)[bl]{50}}
\put(100,196){\makebox(0,0)[bl]{51}}
\put(132,196){\makebox(0,0)[bl]{52}}
\put(164,196){\makebox(0,0)[bl]{53}}
\put(196,196){\makebox(0,0)[bl]{54}}
\put(228,196){\makebox(0,0)[bl]{55}}
\put(4,228){\makebox(0,0)[bl]{56}}
\put(36,228){\makebox(0,0)[bl]{57}}
\put(68,228){\makebox(0,0)[bl]{58}}
\put(100,228){\makebox(0,0)[bl]{59}}
\put(132,228){\makebox(0,0)[bl]{60}}
\put(164,228){\makebox(0,0)[bl]{61}}
\put(196,228){\makebox(0,0)[bl]{62}}
\put(228,228){\makebox(0,0)[bl]{63}}
\put(4,260){\makebox(0,0)[bl]{64}}
\put(36,260){\makebox(0,0)[bl]{65}}
\put(68,260){\makebox(0,0)[bl]{66}}
\put(100,260){\makebox(0,0)[bl]{67}}
\put(132,260){\makebox(0,0)[bl]{68}}
\put(164,260){\makebox(0,0)[bl]{69}}
\put(196,260){\makebox(0,0)[bl]{70}}
\put(228,260){\makebox(0,0)[bl]{71}}
\put(260,4){\makebox(0,0)[bl]{72}}
\put(260,36){\makebox(0,0)[bl]{73}}
\put(260,68){\makebox(0,0)[bl]{74}}
\put(260,100){\makebox(0,0)[bl]{75}}
\put(260,132){\makebox(0,0)[bl]{76}}
\put(260,164){\makebox(0,0)[bl]{77}}
\put(260,196){\makebox(0,0)[bl]{78}}
\put(260,228){\makebox(0,0)[bl]{79}}
\put(260,260){\makebox(0,0)[bl]{80}}
\put(292,4){\makebox(0,0)[bl]{81}}
\put(292,36){\makebox(0,0)[bl]{82}}
\put(292,68){\makebox(0,0)[bl]{83}}
\put(292,100){\makebox(0,0)[bl]{84}}
\put(292,132){\makebox(0,0)[bl]{85}}
\put(292,164){\makebox(0,0)[bl]{86}}
\put(292,196){\makebox(0,0)[bl]{87}}
\put(292,228){\makebox(0,0)[bl]{88}}
\put(292,260){\makebox(0,0)[bl]{89}}
\end{picture}
\caption{The pairing function $p_{3,2}(x,y)$.  To make the sequence of points $p_{3,2}^{-1}(0)$, $p_{3,2}^{-1}(1)$, $p_{3,2}^{-1}(2)$, \ldots\ more visually apparent, line segments have been drawn between some of the points that occur consecutively in the sequence.}
\label{p-3-2}
\end{figure}
and $p_{1,1}$ is illustrated in Figure~\ref{p-1-1}.
\begin{figure}
\centering
\begin{picture}(179,184)(-25,-29)
%
%
\multiput(0,0)(0,32){5}{\multiput(0,0)(32,0){5}{\circle*{4}}}
%
\thinlines
\put(-16,-16){\line(1,0){160}}
\put(-16,-16){\line(0,1){160}}
%
\put(148,-16){\makebox(0,0)[l]{$x$}}
\put(-16,148){\makebox(0,0)[b]{$y$}}
%
\thinlines
\multiput(0,-16)(32,0){5}{\line(0,1){3}}
\multiput(-16,0)(0,32){5}{\line(1,0){3}}
%
\put(0,-22){\makebox(0,0)[t]{0}}
\put(32,-22){\makebox(0,0)[t]{1}}
\put(64,-22){\makebox(0,0)[t]{2}}
\put(96,-22){\makebox(0,0)[t]{3}}
\put(128,-22){\makebox(0,0)[t]{4}}
%
\put(-20,0){\makebox(0,0)[r]{0}}
\put(-20,32){\makebox(0,0)[r]{1}}
\put(-20,64){\makebox(0,0)[r]{2}}
\put(-20,96){\makebox(0,0)[r]{3}}
\put(-20,128){\makebox(0,0)[r]{4}}
%
\thicklines
\put(32,5){\line(0,1){22}}
\put(5,64){\line(1,0){22}}
\put(64,5){\line(0,1){22}}
\put(64,37){\line(0,1){22}}
\put(5,96){\line(1,0){22}}
\put(37,96){\line(1,0){22}}
\put(96,5){\line(0,1){22}}
\put(96,37){\line(0,1){22}}
\put(96,69){\line(0,1){22}}
\put(5,128){\line(1,0){22}}
\put(37,128){\line(1,0){22}}
\put(69,128){\line(1,0){22}}
\put(128,5){\line(0,1){22}}
\put(128,37){\line(0,1){22}}
\put(128,69){\line(0,1){22}}
\put(128,101){\line(0,1){22}}
%
\put(4,4){\makebox(0,0)[bl]{0}}
\put(4,36){\makebox(0,0)[bl]{1}}
\put(36,4){\makebox(0,0)[bl]{2}}
\put(36,36){\makebox(0,0)[bl]{3}}
\put(4,68){\makebox(0,0)[bl]{4}}
\put(36,68){\makebox(0,0)[bl]{5}}
\put(68,4){\makebox(0,0)[bl]{6}}
\put(68,36){\makebox(0,0)[bl]{7}}
\put(68,68){\makebox(0,0)[bl]{8}}
\put(4,100){\makebox(0,0)[bl]{9}}
\put(36,100){\makebox(0,0)[bl]{10}}
\put(68,100){\makebox(0,0)[bl]{11}}
\put(100,4){\makebox(0,0)[bl]{12}}
\put(100,36){\makebox(0,0)[bl]{13}}
\put(100,68){\makebox(0,0)[bl]{14}}
\put(100,100){\makebox(0,0)[bl]{15}}
\put(4,132){\makebox(0,0)[bl]{16}}
\put(36,132){\makebox(0,0)[bl]{17}}
\put(68,132){\makebox(0,0)[bl]{18}}
\put(100,132){\makebox(0,0)[bl]{19}}
\put(132,4){\makebox(0,0)[bl]{20}}
\put(132,36){\makebox(0,0)[bl]{21}}
\put(132,68){\makebox(0,0)[bl]{22}}
\put(132,100){\makebox(0,0)[bl]{23}}
\put(132,132){\makebox(0,0)[bl]{24}}
\end{picture}
\caption{The pairing function $p_{1,1}(x,y)$.  To make the sequence of points $p_{1,1}^{-1}(0)$, $p_{1,1}^{-1}(1)$, $p_{1,1}^{-1}(2)$, \ldots\ more visually apparent, line segments have been drawn between some of the points that occur consecutively in the sequence.}
\label{p-1-1}
\end{figure}

Now let $s_0$, $s_1$, $s_2$, \ldots\ be the sequence of step points of $g_{a,b}$.  Notice that for each non-negative integer $k$, the equation $\bigl\lfloor\sqrt[b]{y\rule{0pt}{6pt}}\,\bigr\rfloor=k$ has a solution for $y$, namely $y=k^b$.  And since $\bigl\lfloor\sqrt[b]{y\rule{0pt}{6pt}}\,\bigr\rfloor$ is a non-negative integer for each $y\in\mathbb{N}$, the range of $g_{a,b}^+(y)=\bigl\lfloor\sqrt[b]{y\rule{0pt}{6pt}}\,\bigr\rfloor^a$ is the set $S=\bigl\{k^a:k\in\mathbb{N}\bigr\}$.  Therefore, by definition,
\begin{equation*}
s_k=k^a
\end{equation*}
for each $k\in\mathbb{N}$.  It immediately follows that
\begin{equation*}
g_{a,b}(s_k)=(k+1)^b-1
\end{equation*}
and
\begin{equation*}
s_{k+1}\bigl(g_{a,b}(s_k)+1\bigr)=(k+1)^a(k+1)^b=(k+1)^{a+b}.
\end{equation*}
Then by Definition~\ref{a-b-def},
\begin{equation}\label{p-a-def}
A_k=\bigl\{\,(x,y)\in\mathbb{N}^2\;:\;x<(k+1)^a\;\;\And\;\;y<(k+1)^b\,\bigr\}
\end{equation}
and
\begin{equation}\label{p-b-def}
B_k=\bigl\{0,1,2,\ldots,(k+1)^{a+b}-1\bigr\}.
\end{equation}

\begin{lemma}
For each $z\in\mathbb{N}$, $\bigl\lfloor z^{1/(a+b)}\bigr\rfloor$ is the smallest non-negative integer $m$ such that $z\in B_m$.
\end{lemma}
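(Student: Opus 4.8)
The plan is to unwind the condition $z\in B_m$ into a simple inequality on $m$ and then read off the minimum. By equation~\eqref{p-b-def}, for any non-negative integer $m$ we have $z\in B_m$ if and only if $z\leq(m+1)^{a+b}-1$, and since $z$ and $(m+1)^{a+b}$ are both integers, this is equivalent to $z<(m+1)^{a+b}$.

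Next I would rephrase this using $(a+b)$-th roots. Because the map $t\mapsto t^{a+b}$ is strictly increasing on $[0,\infty)$ (here $a+b\geq2$), and both $z$ and $m+1$ are non-negative, $z<(m+1)^{a+b}$ holds if and only if $z^{1/(a+b)}<m+1$. Now apply identity~\eqref{k-leq-t} with the integer $k=m+1$ and the real number $t=z^{1/(a+b)}$: negating ``$k\leq t$ if and only if $k\leq\lfloor t\rfloor$'' gives that $m+1>z^{1/(a+b)}$ if and only if $m+1>\bigl\lfloor z^{1/(a+b)}\bigr\rfloor$, and since both sides of the latter are integers this is the same as $m\geq\bigl\lfloor z^{1/(a+b)}\bigr\rfloor$. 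Chaining all these equivalences, $z\in B_m$ if and only if $m\geq\bigl\lfloor z^{1/(a+b)}\bigr\rfloor$.

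Finally, the smallest non-negative integer $m$ satisfying $m\geq\bigl\lfloor z^{1/(a+b)}\bigr\rfloor$ is $\bigl\lfloor z^{1/(a+b)}\bigr\rfloor$ itself, which establishes the claim. I do not expect a genuine obstacle here; the only point needing a word of justification is that raising to the power $a+b$ (equivalently, extracting $(a+b)$-th roots) preserves strict inequalities between non-negative reals, which is immediate from monotonicity of $t\mapsto t^{a+b}$. One could instead avoid roots altogether by verifying directly that $z<\bigl(\bigl\lfloor z^{1/(a+b)}\bigr\rfloor+1\bigr)^{a+b}$ while $z\geq\bigl\lfloor z^{1/(a+b)}\bigr\rfloor^{a+b}$, together with the observation that $B_0\subseteq B_1\subseteq\cdots$, but the root-based phrasing is cleaner and uses only identities already available in the paper.
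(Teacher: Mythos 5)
Your proof is correct and follows essentially the same route as the paper's: rewrite $z\in B_m$ as $z<(m+1)^{a+b}$, take $(a+b)$-th roots, apply identity~\eqref{k-leq-t} to replace $z^{1/(a+b)}$ by its floor, and conclude that the minimal $m$ is $\bigl\lfloor z^{1/(a+b)}\bigr\rfloor$. The only difference is that you spell out the monotonicity justification and the negation of~\eqref{k-leq-t} explicitly, which the paper leaves implicit.
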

\begin{proof}
Consider any $z\in\mathbb{N}$, and let $m$ be the smallest non-negative integer such that $z\in B_m$.  Note that $z\in B_m$ if and only if
\begin{alignat*}{2}
z&<{}&(&m+1)^{a+b},\\
z^{1/(a+b)}&<{}&&m+1.
\end{alignat*}
And by condition~\eqref{k-leq-t}, this is equivalent to
\begin{align*}
\bigl\lfloor z^{1/(a+b)}\bigr\rfloor&<m+1,\\
\bigl\lfloor z^{1/(a+b)}\bigr\rfloor&\leq m.
\end{align*}
Since $m$ is the smallest non-negative integer satisfying this inequality, it must be the case that $m=\bigl\lfloor z^{1/(a+b)}\bigr\rfloor$.
\end{proof}

It immediately follows from Definition~\ref{psi-def} and Theorem~\ref{phi-pairing} that
\begin{equation*}
p_{a,b}^{-1}(z)=\begin{cases}
\biggl(z\bmod m^a\;,\;\Bigl\lfloor\,\dfrac{z}{m^a}\Bigr\rfloor\biggr) &\text{\quad if $z<m^a(m+1)^b$}\\
\biggl(\Bigl\lfloor\dfrac{z}{(m+1)^b\rule{0pt}{9pt}}\Bigr\rfloor\;,\;z\bmod (m+1)^b\biggr) &\text{\quad otherwise}\rule{0pt}{21pt}
\end{cases},
\end{equation*}
where $m=\bigl\lfloor z^{1/(a+b)}\bigr\rfloor$ for each $z\in\mathbb{N}$.  Now consider the special case where $a=1$ and $z\geq m(m+1)^b$.  In this case, $m=\bigl\lfloor z^{1/(1+b)}\bigr\rfloor$.  So,
\begin{alignat*}{2}
z^{1/(1+b)}&<{}&&m+1,\\
z&<{}&(&m+1)^{1+b},\\
m(m+1)^b\leq z&<{}&(&m+1)^{1+b},
\end{alignat*}
and
\begin{equation*}
m\leq\dfrac{z}{(m+1)^b}<m+1.
\end{equation*}
Hence, $m=\Bigl\lfloor\dfrac{z}{(m+1)^b\rule{0pt}{9pt}}\Bigr\rfloor$ if $a=1$ and $z\geq m(m+1)^b$.  And by the definition of the $\bmod$ operation,
\begin{equation*}
z\bmod (m+1)^b=z-\Bigl\lfloor\dfrac{z}{(m+1)^b\rule{0pt}{9pt}}\Bigr\rfloor(m+1)^b=z-m(m+1)^b
\end{equation*}
in this case.  We may conclude that
\begin{equation}\label{p-inv-1-b}
p_{1,b}^{-1}(z)=\begin{cases}
\biggl(z\bmod m\;,\;\Bigl\lfloor\,\dfrac{z}{m}\,\Bigr\rfloor\biggr) &\text{\quad if $z<m(m+1)^b$}\\
\Bigl(m\;,\;z-m(m+1)^b\Bigr) &\text{\quad otherwise}\rule{0pt}{18pt}
\end{cases},
\end{equation}
where $m=\bigl\lfloor z^{1/(1+b)}\bigr\rfloor$ for each $z\in\mathbb{N}$.

Similarly, consider the special case where $b=1$ and $z<m^a(m+1)$.  In this case, $m=\bigl\lfloor z^{1/(a+1)}\bigr\rfloor$.  So,
\begin{align*}
m&\leq z^{1/(a+1)},\\
m^{a+1}&\leq z,
\end{align*}
and
\begin{alignat*}{2}
%
%
m^{a+1}&\leq\quad\negthickspace\!z&&<m^a(m+1),\\
m&\leq\dfrac{z}{m^a}&&<m+1.
\end{alignat*}
Hence, $m=\Bigl\lfloor\,\dfrac{z}{m^a}\Bigr\rfloor$ if $b=1$ and $z<m^a(m+1)$.  And by the definition of the $\bmod$ operation,
\begin{equation*}
z\bmod m^a=z-\Bigl\lfloor\,\dfrac{z}{m^a}\Bigr\rfloor m^a=z-m^{a+1}
\end{equation*}
in this case.  We may conclude that
\begin{equation}\label{p-inv-a-1}
p_{a,1}^{-1}(z)=\begin{cases}
\Bigl(z-m^{a+1}\;,\;m\Bigr) &\text{\quad if $z<m^a(m+1)$}\\
\biggl(\Bigl\lfloor\,\dfrac{z}{m+1}\,\Bigr\rfloor\;,\;z\bmod (m+1)\biggr) &\text{\quad otherwise}\rule{0pt}{21pt}
\end{cases},
\end{equation}
where $m=\bigl\lfloor z^{1/(a+1)}\bigr\rfloor$ for each $z\in\mathbb{N}$.  And combining equations~\eqref{p-inv-1-b} and~\eqref{p-inv-a-1}, we have that
\begin{equation*}
p_{1,1}^{-1}(z)=\begin{cases}
\Bigl(z-m^2\;,\;m\Bigr) &\text{\quad if $z<m(m+1)$}\\
\Bigl(m\;,\;z-m(m+1)\Bigr) &\text{\quad otherwise}\rule{0pt}{18pt}
\end{cases},
\end{equation*}
where $m=\bigl\lfloor\sqrt{z}\,\bigr\rfloor$ for each $z\in\mathbb{N}$.

\begin{theorem}\label{p-a-b-shell-numbering}
Let $a$ and $b$ be any positive integers.  Then $\max\Bigl(\bigl\lfloor\sqrt[a]{x}\,\bigr\rfloor,\bigl\lfloor\sqrt[b]{y\rule{0pt}{6pt}}\,\bigr\rfloor\Bigr)$ is a shell numbering for $p_{a,b}$.
\end{theorem}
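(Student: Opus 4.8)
The plan is to identify the proposed shell numbering with the function $\mu$ introduced just before Theorem~\ref{phi-shell-numbering}, and then invoke that theorem. Recall that $p_{a,b}$ is by definition the function $\phi_{g_{a,b}}$, and that $\mu(x,y)$ was defined to be the smallest non-negative integer $k$ such that $\phi_{g_{a,b}}(x,y)\in B_k$; by Corollary~\ref{a-b-elem} this is the same as the smallest non-negative integer $k$ such that $(x,y)\in A_k$. Since Theorem~\ref{phi-shell-numbering} already tells us that $\mu$ is a shell numbering for $\phi_{g_{a,b}}=p_{a,b}$, it suffices to prove the identity
\begin{equation*}
\mu(x,y)=\max\Bigl(\bigl\lfloor\sqrt[a]{x}\,\bigr\rfloor,\bigl\lfloor\sqrt[b]{y\rule{0pt}{6pt}}\,\bigr\rfloor\Bigr)
\end{equation*}
for all $(x,y)\in\mathbb{N}^2$.

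To establish this identity I would start from the explicit description of $A_k$ for $g_{a,b}$, namely equation~\eqref{p-a-def}, which says that $(x,y)\in A_k$ if and only if $x<(k+1)^a$ and $y<(k+1)^b$. The next step is to rewrite each of these conditions in terms of floors of roots. Taking $a$-th roots of the non-negative reals on both sides of $x<(k+1)^a$ gives $\sqrt[a]{x}<k+1$, and since $k+1$ is an integer, condition~\eqref{k-leq-t} (applied in its contrapositive form) shows this is equivalent to $\bigl\lfloor\sqrt[a]{x}\,\bigr\rfloor\leq k$. The same reasoning applied to $y<(k+1)^b$ yields $\bigl\lfloor\sqrt[b]{y\rule{0pt}{6pt}}\,\bigr\rfloor\leq k$. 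Hence $(x,y)\in A_k$ if and only if $\max\bigl(\lfloor\sqrt[a]{x}\rfloor,\lfloor\sqrt[b]{y}\rfloor\bigr)\leq k$.

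From this characterization the identity follows immediately: the smallest non-negative integer $k$ satisfying $\max\bigl(\lfloor\sqrt[a]{x}\rfloor,\lfloor\sqrt[b]{y}\rfloor\bigr)\leq k$ is $\max\bigl(\lfloor\sqrt[a]{x}\rfloor,\lfloor\sqrt[b]{y}\rfloor\bigr)$ itself, so $\mu(x,y)=\max\bigl(\lfloor\sqrt[a]{x}\rfloor,\lfloor\sqrt[b]{y}\rfloor\bigr)$, and Theorem~\ref{phi-shell-numbering} then gives the conclusion. There is no substantial obstacle here; the only point requiring a little care is the passage between $x<(k+1)^a$ and $\bigl\lfloor\sqrt[a]{x}\,\bigr\rfloor\leq k$, which must be justified through condition~\eqref{k-leq-t} rather than taken for granted, since $\sqrt[a]{x}$ need not be an integer.
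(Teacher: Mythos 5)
Your proposal is correct and follows essentially the same route as the paper: both reduce the claim to showing $\mu(x,y)=\max\bigl(\lfloor\sqrt[a]{x}\rfloor,\lfloor\sqrt[b]{y}\rfloor\bigr)$ via Theorem~\ref{phi-shell-numbering}, use equation~\eqref{p-a-def} and condition~\eqref{k-leq-t} to show $(x,y)\in A_k$ iff both floors are at most $k$, and then conclude by minimality. The only cosmetic difference is that the paper phrases the final step as a small contradiction argument while you observe directly that the least $k$ dominating the max is the max itself.
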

\begin{proof}
By Theorem~\ref{phi-shell-numbering}, $\mu$ is a shell numbering for $p_{a,b}$.  But for each $(x,y)\in\mathbb{N}^2$, $\mu(x,y)$ is the smallest non-negative integer $m$ such that $(x,y)\in A_m$.  By equation~\eqref{p-a-def}, this is equivalent to each of the following conditions:
\begin{alignat*}{5}
x&<{}&(&m+1)^a\quad&&\And&\quad y&<{}&(&m+1)^b,\\
\sqrt[a]{x}&<{}&&m+1\quad&&\And&\quad\sqrt[b]{y\rule{0pt}{6pt}}&<{}&&m+1.
\end{alignat*}
And by condition~\eqref{k-leq-t}, this is equivalent to
\begin{alignat}{2}
\bigl\lfloor\sqrt[a]{x}\,\bigr\rfloor&<m+1\quad&&\And\quad\bigl\lfloor\sqrt[b]{y\rule{0pt}{6pt}}\,\bigr\rfloor<m+1,\notag\\
\bigl\lfloor\sqrt[a]{x}\,\bigr\rfloor&\leq m\quad&&\And\quad\bigl\lfloor\sqrt[b]{y\rule{0pt}{6pt}}\,\bigr\rfloor\leq m.\label{p-shells-cond}
\end{alignat}
Now, because $\bigl\lfloor\sqrt[a]{x}\,\bigr\rfloor$ and $\bigl\lfloor\sqrt[b]{y\rule{0pt}{6pt}}\,\bigr\rfloor$ are both non-negative integers, it cannot be the case that
\begin{equation*}
\bigl\lfloor\sqrt[a]{x}\,\bigr\rfloor<m\quad\And\quad\bigl\lfloor\sqrt[b]{y\rule{0pt}{6pt}}\,\bigr\rfloor<m,
\end{equation*}
because then
\begin{equation*}
\bigl\lfloor\sqrt[a]{x}\,\bigr\rfloor\leq m-1\quad\And\quad\bigl\lfloor\sqrt[b]{y\rule{0pt}{6pt}}\,\bigr\rfloor\leq m-1,
\end{equation*}
and this would contradict the fact that $m$ is the \emph{smallest} non-negative integer for which condition~\eqref{p-shells-cond} holds.  Therefore, it must be the case that $\bigl\lfloor\sqrt[a]{x}\,\bigr\rfloor=m$ or $\bigl\lfloor\sqrt[b]{y\rule{0pt}{6pt}}\,\bigr\rfloor=m$.  That is,
\begin{equation*} m=\max\Bigl(\bigl\lfloor\sqrt[a]{x}\,\bigr\rfloor,\bigl\lfloor\sqrt[b]{y\rule{0pt}{6pt}}\,\bigr\rfloor\Bigr).
\end{equation*}
We have shown that $\mu(x,y)=\max\Bigl(\bigl\lfloor\sqrt[a]{x}\,\bigr\rfloor,\bigl\lfloor\sqrt[b]{y\rule{0pt}{6pt}}\,\bigr\rfloor\Bigr)$ for each $(x,y)\in\mathbb{N}^2$.
\end{proof}

An immediate consequence of Theorem~\ref{p-a-b-shell-numbering} is that $\max(x,y)$ is a shell numbering for $p_{1,1}$.  That is, $p_{1,1}$ has square shells.  We conclude this section with the following theorem. 

\begin{theorem}
Let $a$ and $b$ be any positive integers.  For each integer $n>1$, $p_{a,b}$ is a base-$n$ proportional pairing function with constants of proportionality $a$ and $b$.
\end{theorem}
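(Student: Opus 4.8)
The plan is to reduce the claim directly to the explicit descriptions of the sets $A_k$ and $B_k$ associated with the function $g_{a,b}$, together with Corollary~\ref{a-b-elem}. First I would note that $p_{a,b}$ is already known to be a pairing function by Theorem~\ref{phi-pairing}, so the only thing left to verify is the length condition in Definition~\ref{proportional}. To that end, fix an integer $n>1$ and non-negative integers $x$, $y$, and $k$ with $\len_n(x)\leq ak$ and $\len_n(y)\leq bk$. By condition~\eqref{len-identity} these hypotheses say exactly that $x<n^{ak}=(n^k)^a$ and $y<n^{bk}=(n^k)^b$.

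The key step is the substitution $j=n^k-1$, which is a legitimate non-negative integer because $n^k\geq 1$. With this choice we have $x<(j+1)^a$ and $y<(j+1)^b$, so $(x,y)\in A_j$ by equation~\eqref{p-a-def}. Corollary~\ref{a-b-elem} then gives $p_{a,b}(x,y)\in B_j$, and by equation~\eqref{p-b-def} this means
\begin{equation*}
p_{a,b}(x,y)<(j+1)^{a+b}=(n^k)^{a+b}=n^{(a+b)k}=n^{ak+bk}.
\end{equation*}
Applying condition~\eqref{len-identity} once more yields $\len_n\bigl(p_{a,b}(x,y)\bigr)\leq ak+bk$, which is precisely the implication demanded by Definition~\ref{proportional}. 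Since $n$, $x$, $y$, and $k$ were arbitrary, $p_{a,b}$ is base-$n$ proportional with constants of proportionality $a$ and $b$ for every integer $n>1$.

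I do not expect any real obstacle: the entire content of the argument is the observation that the base-$n$ length constraints translate, via condition~\eqref{len-identity}, into membership in the shell $A_{n^k-1}$, after which Corollary~\ref{a-b-elem} and equation~\eqref{p-b-def} do all the work. The only point worth a passing remark is the degenerate case $k=0$, where the hypotheses force $x=y=0$; but this is automatically handled by taking $j=n^0-1=0$ in the general argument, so no separate case analysis is required.
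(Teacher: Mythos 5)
Your proposal is correct and follows essentially the same route as the paper's own proof: translate the length hypotheses into $x<(n^k)^a$ and $y<(n^k)^b$ via condition~\eqref{len-identity}, place $(x,y)$ in the shell $A_{n^k-1}$ using equation~\eqref{p-a-def}, conclude $p_{a,b}(x,y)\in B_{n^k-1}$ (the paper cites Lemma~\ref{phi-a-b} directly where you cite its consequence, Corollary~\ref{a-b-elem}), and convert back with equation~\eqref{p-b-def} and condition~\eqref{len-identity}. The only differences are cosmetic (swapped roles of the letters $j$ and $k$, and your added remark about $k=0$, which is indeed unnecessary).
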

\begin{proof}
Consider any integer $n>1$.  Now consider any non-negative integers $x$, $y$, and $j$, and suppose that $\len_n(x)\leq aj$ and $\len_n(y)\leq bj$.  Then by condition~\eqref{len-identity}, $x<n^{aj}$ and $y<n^{bj}$.  It immediately follows from equation~\eqref{p-a-def} that $(x,y)\in A_k$, where $k=n^j-1$.  Then by Lemma~\ref{phi-a-b}, $p_{a,b}(x,y)\in B_k$.  Hence, by equation~\eqref{p-b-def}, $p_{a,b}(x,y)<n^{aj+bj}$.  And by condition~\eqref{len-identity}, this implies that $\len_n\bigl(p_{a,b}(x,y)\bigr)\leq aj+bj$.  We may then conclude, by Definition~\ref{proportional}, that $p_{a,b}$ is a base-$n$ proportional pairing function with constants of proportionality $a$ and $b$.
\end{proof}

\section{Acknowledgments}

We thank Fritz H. Obermeyer for suggesting that we should include a discussion of discrete space-filling curves in our work.

\bibliographystyle{amsplain}
\bibliography{BinaryProportionalPairing}

\end{document}